\newtheorem{theorem}{Theorem}[section]
\newtheorem{lemma}{Lemma}[section]
\begin{document}

\title{Trading Strategies with Position Limits}
\author{Valerii Salov}
\date{}
\maketitle

\begin{abstract}
Whether you trade futures for yourself or a hedge fund, your strategy is counted. Long and short position limits make the number of unique strategies finite. Formulas of the numbers of strategies, transactions, do nothing actions are derived. A discrete distribution of actions, corresponding probability mass, cumulative distribution and characteristic functions, moments, extreme values are presented. Strategies time slice distributions are determined. Vector properties of trading strategies are studied. Algebraic not associative, commutative, initial magmas with invertible elements control trading positions and strategies. Maximum profit strategies, MPS, and optimal trading elements can define trading patterns. Dynkin introduced the term interpreted in English as "Markov time" in 1963. Neftci applied it for the formalization of Technical Analysis in 1991.
\end{abstract}

\section{Introduction}

\setlength{\epigraphwidth}{0.95\textwidth}
\begin{epigraphs}

\qitem{Technical analysis is anathema to  the academic world.}
{\textsc{Burton Malkiel, \cite[p. 127]{malkiel2007}}}

\qitem{... technical analysis is a broad class of prediction rules with unknown statistical properties, developed by practitioners without reference to any formalism.}
{\textsc{Salih Neftci, \cite[p. 549]{neftci1991}}}

\qitem{... considerable part of time the particle spends in one semi-plane. These paradoxical regularities of particle transition from positive side of line to negative and vice versa are covered by the theorem called the "arcsine law".}
{\textsc{Andrey Kolmogorov, Igor Zhurbenko, Alexander Prokhorov, \cite[pp. 91 - 92 in Symmetrical Random Walk]{kolmogorov1982}}}

\end{epigraphs}

Modern theories of prices exploit mathematics of stochastic processes \cite{neftci1996} such as a \textit{Brownian motion}. \textit{"Brownian paths are wilder than we can imaging} \cite[p. 19]{rogers2000}. They are continuous functions of time \cite[p. 10, Theorem 6.1]{rogers2000}, non-differentiable almost everywhere \cite[p. 19 and Theorem 10.1]{rogers2000}. Theoretical continuity implies that the frequency of observations can be increased arbitrary. In \cite[p. 74 and p. 81]{goodhart1997}, we read \textit{"In some markets, second-by-second data is now available, allowing virtually continuous observations of price ..."} and \textit{"A fundamental property of high frequency data is that observations can occur at varying time intervals"}. The sentences remind the author about school experiments determining the \textit{gravitational acceleration} $g \approx 9.8 \frac{\mathrm{meter}}{\mathrm{second}^2}$. In a dark room, a steel ball is dropped from three meters along the inverted vertical ruler. A stroboscope flashes four times per second. The ball is visible at centimeter labels 31, 123, 276, on the floor. \textit{While invisible, the ball is still moving.}

Engle, estimating the growth of the frequency of observations, says \textit{"The limit in nearly all cases, is achieved when all transactions are recorded"} \cite[p. 2]{engle2000}. Time \& Sales data distributed by the Chicago Mercantile Exchange, CME, Group \url{http://www.cmegroup.com/} for the futures contracts electronically traded on the Globex platform is an example. There are no other ticks between two neighbors. Each associates time, price, number of traded contracts, and market conditions symbol. The irregular time intervals between neighboring transactions, waiting times or durations, and corresponding price increments, named by the author a- and b-increments, are studied \cite{salov2013}, \cite{salov2017}. Algebraic sums of these elementary, indecomposable further increments, \textit{financial atoms}, represent all popular minute, hourly, daily increments and chart price bars.

The prices are discrete and occur at irregular times, where financial instruments with high leverage and large trading positions magnify discreteness. Modern finance attempts \textit{to touch the left ear by the right hand}. It brings to discrete markets continuous models to create later sophisticated discretization schemata, for example, for Monte Caro simulations \cite{glasserman2003}. Kolmogorov foresaw: \textit{"It is quite probable that with the development of modern computing techniques, it will become understood that in many cases it is reasonable to study real phenomena without making use of intermediate step of their stylization in the form of infinite and continuous mathematics, passing directly to discrete models"} \cite{kolmogorov1983}.

Successes in the numerical integration of Brownian motions, diffusions \cite{milstein1995}, \cite{kloeden1999}, \cite{protter2004}, jump-diffusions \cite{hanson2007} applying higher order integration formulas are accompanied by new evidences of non-Gaussian properties of futures price increments and micro-structure of jumps resembling non-equilibrium properties of explosions \cite[p. 41 "Non-Gaussian atoms", pp. 41 - 43 A Comment on Disequilibrium]{salov2013}, \cite[pp. 30 - 36 Randomness of Price Increments, pp. 37 - 40 Non-Gaussian Relative b-Increments, pp. 50 - 52 Jumps. Chain Reactions]{salov2017}.

Ignorance of intra-day ticks yields \textit{controversial} daily price applications. Indeed, absolute daily price increments $P_1-P_0$, $P_2 - P_1$ or logarithms of price ratios $\ln(\frac{P_1}{P_0})$, $\ln(\frac{P_2}{P_1})$ for three days 0, 1, 2 are sums of significantly different numbers of summands $N_1$ and $N_2$ in day trading sessions 1 and 2, intra-day price increments or logarithms of price ratios. Even, if the latter are independent and identically distributed, i.i.d., random variables, the variance of the sum $\mu_2^S$, being under such conditions the sum of the variances, $N_1\mu_2$ and $N_2\mu_2$, \cite{gnedenko1949}, makes the sums non-i.i.d. random variables because $N_1 \ne N_2$. This compromises assumption on i.i.d. for daily increments or returns. What a mess! \textit{The paper considers prices, transaction costs, trading strategies, positions taken with traded assets as discrete entities and illustrates results using futures contracts.}

\section{A Portrait of Futures}

Leo Melamed \cite{melamed2014}: \textit{"According to the Bank of International Settlements (BIS), 81.3\% of all futures traded in 2013 were financial futures and options. The notional value of those traded equaled an astounding \$1,886,283.4 billion"}.

From several futures contracts on the same commodity, index, or security, the contract with the closest settlement date is called the \textit{nearby futures contract}. Accordingly, there are \textit{next}, \textit{distant}, and \textit{the most distant} futures contracts. While a daily price chain for an individual contract is \textit{weird}, to escape arbitrage, prices of the mentioned contracts move in sync, Figure \ref{FigESM17_ESU17_Daily_Prices}. The daily prices of the  neighbors, M June and U September, differ $\frac{P_{i}^{\textrm{ESM17}}}{P_{i}^{\textrm{ESU17}}} \ne 1$, Figure \ref{FigESM17_ESU17_Daily_Price_Ratio}.
\begin{figure}[!h]
  \centering
  \includegraphics[width=120mm]{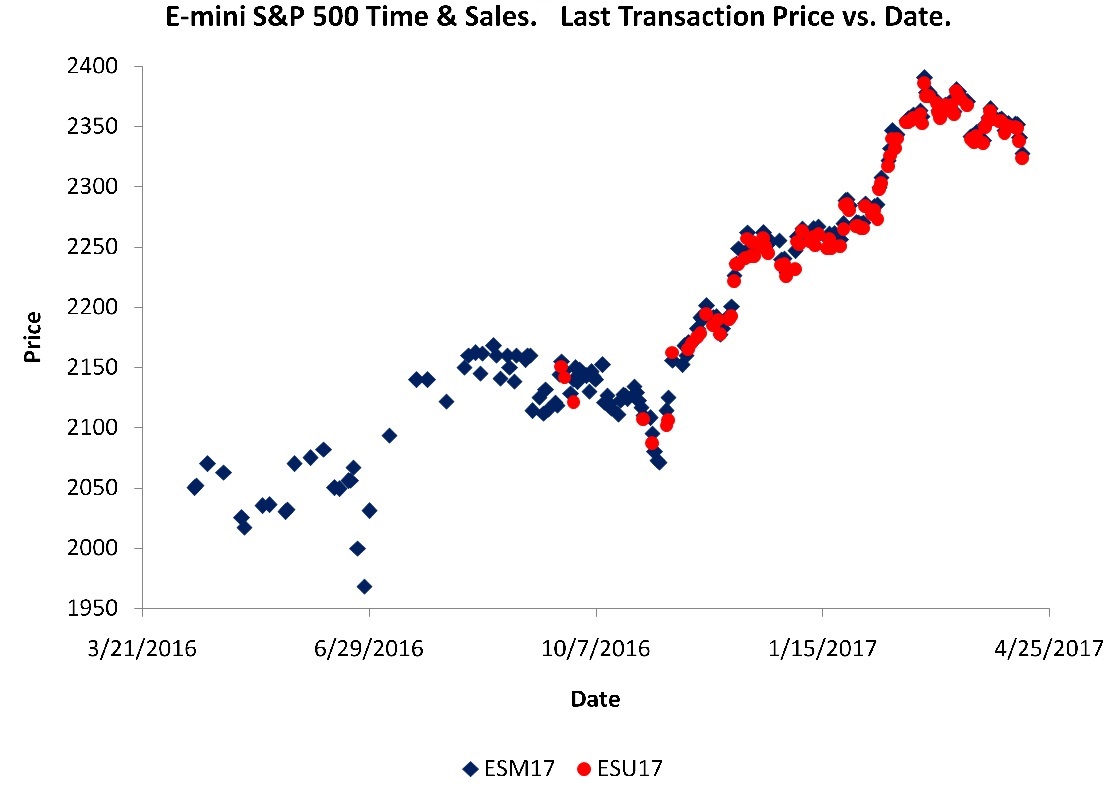}
  \caption[FigESM17_ESU17_Daily_Prices]
   {Time \& Sales Globex, \url{http://www.cmegroup.com/}, last transaction prices of ESM17 (\textit{nearby} in April 2017), and ESU17 (\textit{next} in April 2017) from session ranges finishing at 15:15:00, Central Standard Time, CST. 185 ESM17 and 103 ESU17 sessions, Wednesday April 13, 2016 - Thursday April 13, 2017. Plotted using Microsoft Excel.}
  \label{FigESM17_ESU17_Daily_Prices}
\end{figure}
\begin{figure}[!h]
  \centering
  \includegraphics[width=95mm]{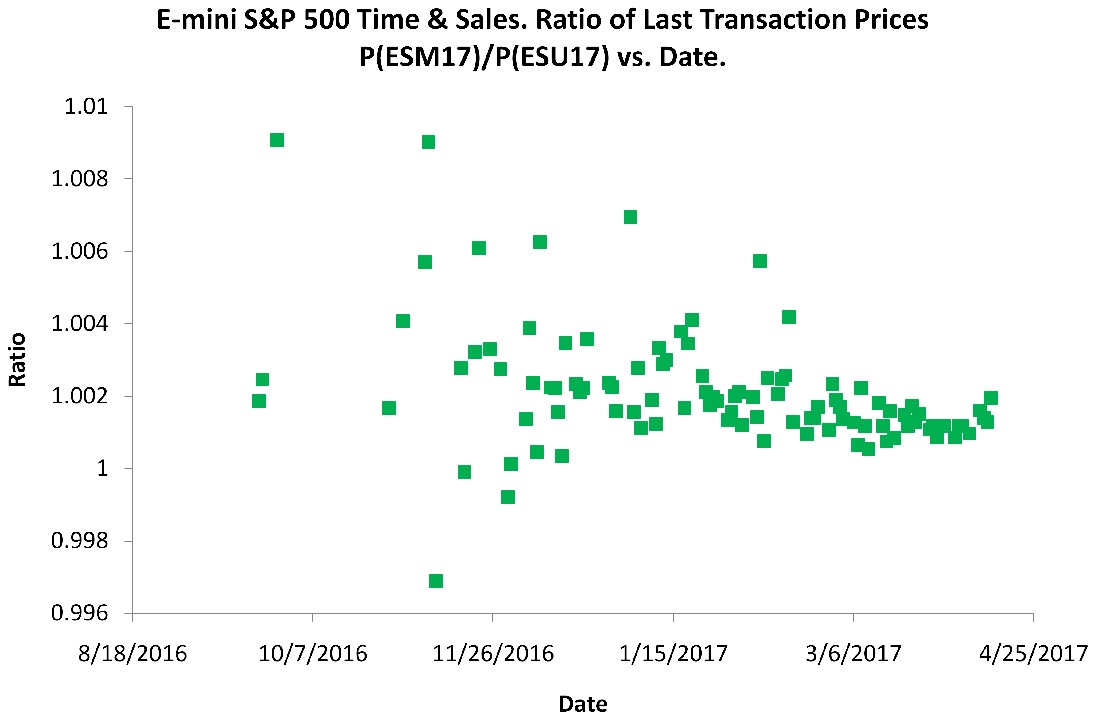}
  \caption[FigESM17_ESU17_Daily_Price_Ratio]
   {Time \& Sales Globex, \url{http://www.cmegroup.com/}, ratios of last transaction prices of ESM17, and ESU17 from session ranges finishing at 15:15:00, CST. Number of sessions $N=103$. Plotted using Microsoft Excel.}
  \label{FigESM17_ESU17_Daily_Price_Ratio}
\end{figure}
\begin{figure}[!h]
  \centering
  \includegraphics[width=95mm]{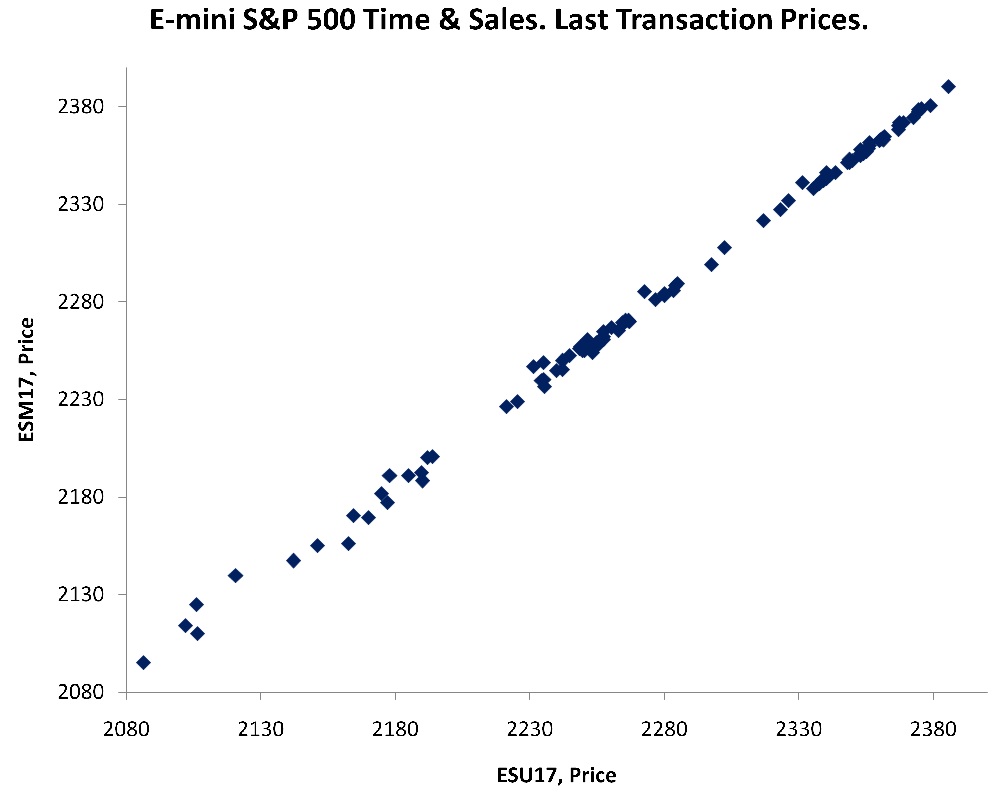}
  \caption[FigESM17_ESU17_Daily_Price_Regression]
   {Time \& Sales Globex, \url{http://www.cmegroup.com/}, regression of last transaction prices of ESM17, and ESU17 from session ranges finishing at 15:15:00, CST: $P_{\textrm{ESM17}}=(1.0021 \pm 0.0002)P_{\textrm{ESU17}}$; coefficient of correlation = 0.999998; intercept is set to zero; confidence two sided probability is 95\%; $N = 103$. Computed using Microsoft Excel, Data Analysis, Regression.}
  \label{FigESM17_ESU17_Daily_Price_Regression}
\end{figure}

$P_i^{\textrm{ESU17}}$ is in [minimum 2086.50, Tuesday November 1, 2016; maximum 2385.75, Wednesday March 1, 2017]. The price range counted from 2086.50 is less than 15\%. Therefore, after a zoom in, the look and feel of $P_i^{\textrm{ESM17}} - P_i^{\textrm{ESU17}}$ is similar to the relative price increment $\frac{P_i^{\textrm{ESM17}} - P_i^{\textrm{ESU17}}}{P_i^{\textrm{ESU17}}} = \frac{P_i^{\textrm{ESM17}}}{P_i^{\textrm{ESU17}}} - 1$, a shift down of Figure \ref{FigESM17_ESU17_Daily_Price_Ratio}. The latter, for small $|P_i^{\textrm{ESM17}} - P_i^{\textrm{ESU17}}|$, is close to $\ln(\frac{P_i^{\textrm{ESM17}}}{P_i^{\textrm{ESU17}}})$. Under observed conditions, \textit{one} Figure \ref{FigESM17_ESU17_Daily_Price_Ratio} gives an idea about \textit{four} quantities. Since $\frac{P_i^{\textrm{ESM17}}}{P_i^{\textrm{ESU17}}} \approx 1$, Figure \ref{FigESM17_ESU17_Daily_Price_Regression} is not a surprise. The empirical regression with 103 points and coefficient correlation almost equal to one expresses what we understand under "moving in sync".

Figure \ref{FigESM17_ESU17_Daily_Price_Regression} is possible because $P_i^{\textrm{ESM17}}$ and $P_i^{\textrm{ESU17}}$ are linked by dates $i$. On the left side of Figure \ref{FigESM17_ESU17_Daily_Prices}, $P_i^{\textrm{ESU17}}$ circles are missing: the contract was not yet traded. Later, dots are missed due to low liquidity or lost data. For 185 sessions and prices of ESM17, only 103 "corresponding" points of ESU17 are collected. This is enough to conclude about the almost linear dependence.

By eye, tick prices on Monday April 10, 2017 are in sync, Figure \ref{FigESM17_ESU17_ESZ17_ESH18_Price_20170410}. Transaction volumes are too, Figure \ref{FigESM17_ESU17_ESZ17_ESH18_Volume_20170410}. Ticks \{date-time price size\} arrive at random times \cite{salov2013}, \cite{salov2017}. The depicted discrete properties of the S\&P E-Mini futures with a single tick for ESH18 on April 10, 2017 is a guide behind mathematical formulas. \textit{The paper is about the futures trading strategies and maximum profit strategies, which can define patterns.}
\begin{figure}[!h]
  \centering
  \includegraphics[width=135mm]{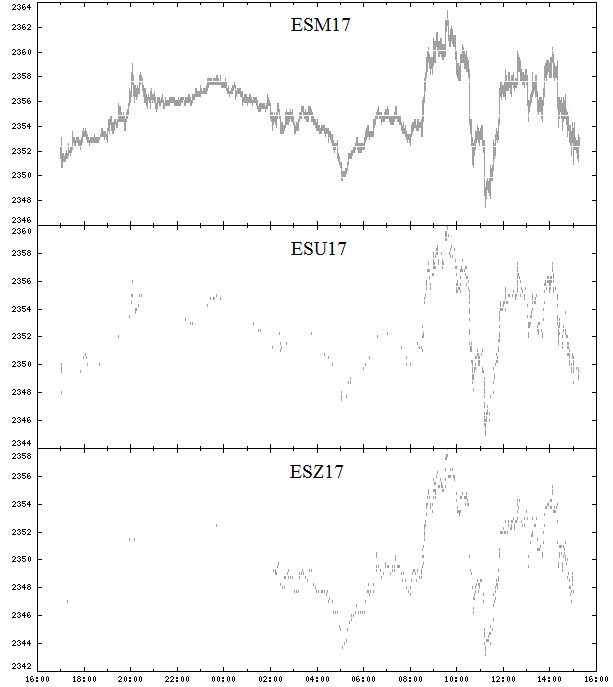}
  \caption[FigESM17_ESU17_ESZ17_ESH18_Price_20170410]
   {E-mini S\&P 500 Futures Time \& Sales Globex, \url{http://www.cmegroup.com/}, transaction prices of ESM17, ESU17, ESZ17, and ESH18 for the time range [Sunday April 9, 2017, 17:00:00 - Monday April 10, 2017, 15:15:00], CST. The most distant contract ESH18 had only one tick \{date time price size\} = \{2017/04/10 11:18:21 2342 1\}. Plotted using custom C++  and Python programs and gnuplot \url{http://www.gnuplot.info/}.}
  \label{FigESM17_ESU17_ESZ17_ESH18_Price_20170410}
\end{figure}
\begin{figure}[!h]
  \centering
  \includegraphics[width=135mm]{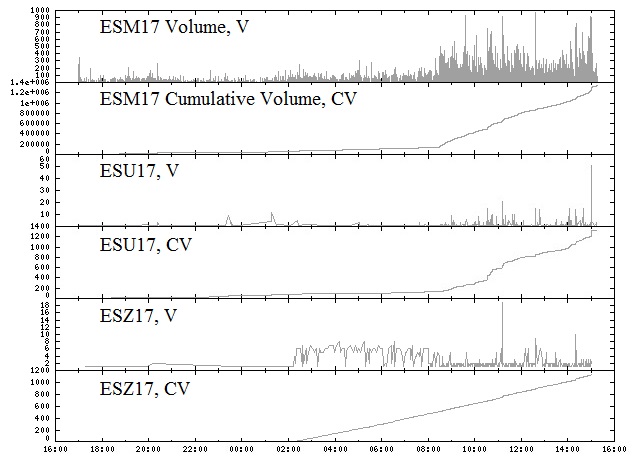}
  \caption[FigESM17_ESU17_ESZ17_ESH18_Volume_20170410]
   {E-mini S\&P 500 Futures Time \& Sales Globex, \url{http://www.cmegroup.com/}, transaction volumes and cumulative volumes of ESM17, ESU17, ESZ17, and ESH18 for the time range [Sunday April 9, 2017, 17:00:00 - Monday April 10, 2017, 15:15:00], CST. ESH18 had one tick \{date time price size\} = \{2017/04/10 11:18:21 2342 1\}. Plotted using custom C++  and Python programs and gnuplot \url{http://www.gnuplot.info/}.}
  \label{FigESM17_ESU17_ESZ17_ESH18_Volume_20170410}
\end{figure}

\section{Combinatorial Properties of Trading Strategies}

Consider the chains of positive prices $P_1, \dots, P_i, \dots, P_n$,  non-negative transaction costs $C_1, \dots, C_i, \dots, C_n$, subtracted from profit and losses, $PL$, making the latter \textit{not better}, and integer positive \textit{buy}, negative \textit{sell}, and zero \textit{do nothing actions} $U_1, \dots, U_i, \dots, U_n$, representing the numbers of traded units. Positive and negative actions are \textit{transactions}. The chain of actions is a \textit{trading strategy}.

Time \& Sales \textit{ticks} arrive as triplets $\{t_i, P_i, V_i\}$, where $V$ is the volume or size of the \textit{trade}. The $i$ associates with the order of arrival. $V=0$ represents \textit{indicative prices} or special market conditions rather than trades. The numbers of bought and sold units in a trade are equal and $V$ shows one side. The trade is a combination of transactions made from the same or different accounts after matching prices of the \textit{limit orders} accepted in different times by a \textit{trading book}. For trading one instrument, yielding Time \& Sales ticks, we assume that transactions are made from one account. The pair of opposite  transactions with zero sum is a \textit{round-trip trade}. The trades associate with adjusting, overlapping, or disjoint time intervals. A \textit{net action} with transactions and zero sum can be broken down on round-trip trades in several ways. Accounting may regroup transactions maximizing realized profits and leaving losses to open marked-to-market positions. With a good luck, the final list will contain only profits. If the good luck turns away, then the natural time order of transactions can yield smaller losses or a mixture. \textit{All variations must have the same total $PL$.}

Actions affect \textit{trading positions} - the numbers of securities that are owned or borrowed and then sold. The \textit{long}, \textit{short}, or \textit{out of market} positions are positive, negative, or zero integers forming the chain $W_1,\dots,W_i,\dots,W_n$, where $W_i=W_0+\sum_{j=1}^{i}U_i$. $W_0$ helps to express $U_i=W_i-W_{i-1}$ for $i=1,\dots, n$. $W_i$ will show the position always after the action $U_i$. The \textit{marked to market} $PL$ of a trading strategy is equal to \cite[Equation 54, page 82]{salov2013}:

\begin{equation}
\label{EqPL}
PL=k\left(P_n\sum_{i=1}^{n}U_i-\sum_{i=1}^{n}P_iU_i\right)-\sum_{i=1}^{n}C_i|U_i|-C_n|\sum_{i=1}^{n}U_i|,
\end{equation}
where $k$ is the dollar value of a \textit{full price point}. For instance, for the S\&P 500 E-Mini futures contract the price value of the full point is 50 US dollars. The absolute minimum non-zero price fluctuation is $\delta = 0.25$ or 12.5 US dollars. The chain of three prices could be $P=(2369.50, 2369.75, 2370.00)$. Due to the current S\&P 500 E-Mini contract specifications, the prices between these levels are impossible. The $k$ allows to apply conventional market price quotes and get dollar equivalents. In contrast, transactions costs $C_i$ are expressed in dollars per unit per transaction. The strategy $U=(1, 0, -1)$ with the cost $C=(5,5,5)$ yields $PL=50 \times (2370.00 \times (1 + 0 + (-1)) - (2369.50 \times 1+ 2369.75 \times 0 + 2370.00 \times (-1)) - (5 \times |1| + 5 \times |0| + 5 \times |-1|) - 5 \times |1 + 0 + (-1)| = 15$ US dollars. In futures, costs, fixed per contract per transaction, are common. Brokers may reduce them to promote large transactions. In equities, commissions can be a fixed fraction of price. Then, to continue with the formula, $C$ can be evaluated on a preliminary step using the commissions discount or $P$ and fraction.

The chains of $n$ elements are \textit{column-vectors} in $n$-dimensional spaces. \textit{Bold symbols denote vectors.} $\sum_{i=1}^{n}P_i U_i$ is the scalar product of $\boldsymbol{P}$ and $\boldsymbol{U}$, $W_n=\sum_{i=1}^{n}U_i$ for $W_0=0$: $PL=k(P_nW_n - \boldsymbol{P}^T\boldsymbol{U})-\boldsymbol{C}^T\textrm{abs}(\boldsymbol{U})-C_n|W_n|$, where $\mathrm{abs}()$ returns a vector with the absolute values of coordinates. $T$ means \textit{transpose}. For programming these equations, the Standard C++ classes \verb|std::vector|, \verb|std::array| \cite[pp. 902 - 906, 974 - 977]{stroustrup2013}, \cite[pp. 897 - 897, 871 - 874]{iso2017}, and algorithms \verb|std::inner_product|, \verb|std::transform|, \verb|std::accumulate| \cite[pp. 1178 - 1179, 935 - 936, 1177 - 1178]{stroustrup2013}, \cite[pp. 1131, 1023, 1130]{iso2017} are useful.

\begin{theorem}
\label{TMStrategiesPositionsBijection}
There is one and only one strategy $\boldsymbol{U}$ for position $\boldsymbol{W}$ and $W_0$.
\end{theorem}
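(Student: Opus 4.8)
The plan is to read the claim as an invertibility statement for the cumulative-sum relation $W_i = W_0 + \sum_{j=1}^{i} U_j$ recorded just before Equation~\eqref{EqPL}, and to exhibit its inverse explicitly. Given the target position chain $\boldsymbol{W}=(W_1,\dots,W_n)$ together with the initial position $W_0$, I would define a candidate strategy coordinatewise by $U_i := W_i - W_{i-1}$ for $i=1,\dots,n$, where for $i=1$ the symbol $W_{i-1}$ denotes the given $W_0$. Because each $W_i$ and $W_0$ is an integer, every $U_i$ is an integer, so the candidate is a legitimate strategy: each $U_i$ is a positive buy, negative sell, or zero do-nothing action.

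Existence then follows by checking that this $\boldsymbol{U}$ reproduces $\boldsymbol{W}$. Substituting into the cumulative sum, the telescoping $W_0 + \sum_{j=1}^{i}(W_j - W_{j-1}) = W_i$ collapses to the prescribed value for every $i$, so the constructed strategy does yield the required position. For uniqueness, suppose any strategy $\boldsymbol{U}$ satisfies $W_i = W_0 + \sum_{j=1}^{i} U_j$ for all $i$ with the same $W_0$. Subtracting the relation for $i-1$ from the one for $i$ forces $U_i = W_i - W_{i-1}$, which is precisely the candidate above; hence no other strategy is consistent with the pair $(\boldsymbol{W}, W_0)$.

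Equivalently, and matching the bijective flavor of the statement, I would phrase the map $\boldsymbol{U}\mapsto \boldsymbol{W}-W_0\boldsymbol{1}$ (where $\boldsymbol{1}$ is the vector all of whose entries equal $1$) as multiplication by the lower-triangular matrix of ones $L$ with $L_{ij}=1$ for $j\le i$ and $0$ otherwise. Since $L$ is unipotent, all diagonal entries equal $1$, so $\det L = 1$ and $L$ is invertible over the integers; its inverse is the bidiagonal first-difference matrix carrying $1$ on the diagonal and $-1$ on the first subdiagonal. This again returns $U_i = W_i - W_{i-1}$ and exhibits the correspondence as a bijection on the integer lattice, with $W_0$ parametrizing the affine shift.

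There is essentially no serious obstacle here; the content is the elementary inversion of a telescoping sum. The only points warranting care are the boundary convention that $W_{i-1}$ means $W_0$ when $i=1$ (without supplying $W_0$ the first action would be undetermined, which is exactly why the initial position is part of the data), and the observation that differences of integers remain integers, so the inverse exists over $\mathbb{Z}$ and not merely over a field.
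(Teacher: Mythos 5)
Your proof is correct and follows essentially the same route as the paper: the paper's argument is exactly the inversion of the partial-sum relation via the adjacent difference $U_i = W_i - W_{i-1}$, which is your construction. The additional matrix formulation with the unipotent lower-triangular matrix of ones is a harmless restatement of the same telescoping argument.
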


\begin{proof}
For $\boldsymbol{W}$ and $W_0$, there is unique $\boldsymbol{U}$ since $U_i=W_i-W_{i-1}$. For  $\boldsymbol{U}$, there is unique $\boldsymbol{W}$ with $W_0$ since $W_i=W_0+\sum_{j=1}^{i}U_j$.
\end{proof}

Given $\boldsymbol{W}$ and $W_0$, $\boldsymbol{U}$ can be computed using the Standard C++ algorithm \verb|std::adjacent_difference|, while applying  \verb|std::partial_sum| \cite[pp. 1179 - 1180]{stroustrup2013}, \cite[pp. 1133, 1137 - 1138]{iso2017} recovers $\boldsymbol{W}$ from $\boldsymbol{U}$ and $W_0$. The number of vectors with $n$ integer coordinates is infinite but \textit{positions and actions are limited}. \textit{A position limit is a natural number $W \in \mathbb{N}$.}

\paragraph{Position limit $W$.} With the futures account $A=\$10,000$ and initial margin $IM_{ESZ17}=\$1,237.50$ covering intraday trading of the single futures December 2017 E-mini Standard and Poor's 500 Stock Price Index contract ESZ17, one can buy or sell $\lfloor \frac{10,000}{1,237.50} \rfloor = \lfloor 8.08\dots \rfloor = 8$ contracts. For a retail trader, constant fees and commissions per contract per one side, buy or sell entering the position $W_1=8$ or $-8$, can be $C=\$4.68$. Due to the costs, right after the transaction, the equity drops in our hypothetical example to $A - |W_1|*C = \$10,000 - 8 * \$4.68 = \$9,962.56$. The maintenance margin $MM_{ESZ17}=\$1,125$, typically smaller than $IM$, requires $|W_1| \times MM_{ESZ17} = 8 \times \$1,125 = \$9,000$. If prices move \text{favorably}, then the equity increases and eventually position is allowed to add contracts. If losses reduce the equity below the total maintenance margin requirement, then funds must be added to the account, or the position or its part is mandatory liquidated. At closing the position, the remaining cost is $8 \times \$4.68 = \$37.44$. The difference between the equity after the anticipated closing cost and maintenance margin equity is $\$9,962.56 - 8 \times \$4.68 - \$9,000 = \$925.12$. This is $\frac{\$925.12}{8} = \$115.64$ per contract. Conversion to price points yields $\frac{\$115.64}{k = \$50} < 2.5$. Depending on market conditions, the ESZ17 price can move 2.5 points up or down in a matter of a few seconds or minutes \cite{salov2013}, \cite{salov2017}. Establishing a position size up to the available account equity is too risky and can quickly ruin an account.
 \textit{The capital limit and margins dictate the position limit $W$. However, due to these factors only, $W$ does not have to be constant.}

One can voluntary trade small limited $|W_i| \le W$ or fixed $|W_i| = W$ positions. Depending on \textit{trading rules}, \textit{statistics of PL}, \textit{market conditions}, trading fixed positions can be too inefficient or risky \cite{kelly1956}, \cite{vince1992}, \cite{vince1995}, \cite{jones1999}, \cite[pp. 55- 79]{salov2007}. Still, such strategies can be useful for the evaluation of trading rules generating individual trading signals and their separation from the money management  answering which portion of the capital should be devoted to a next trade.

Without consideration of capital limits, position limits are known for the CME futures from the contract specifications, \url{http://www.cmegroup.com/}. For the E-mini futures the \textit{all month limit} is 60000 contracts. For Corn, the \textit{initial spot-month limit} is 600 or 3000000 bushels and the \textit{single month limit} is 33000 or 165000000 bushels. Bushel, a volume measure, is not from the International System of Units. It is eight US dry gallons. For corn with 15.5\% of moisture this is $\approx 25.4$ kg and the limit is equivalent of 4191000 metric tons. The National Agricultural Statistics Service of the United States Department of Agriculture reported for 2016: \textit{"U.S. corn growers produced 15.1 billion bushels, up 11 percent from 2015"}, \url{https://www.nass.usda.gov/Newsroom/2017/01_12_2017.php}. This is 383.5 million metric tons: the single month limit is $\approx1$\% of the U.S. 2016 corn crop. The "romance", when legendary Jesse Livermore could "corner the U.S. wheat market" \cite{lefevre1923}, \cite{livermore1940}, is in the past. 5000 bushels of one contract, $\approx 127$ tons, fit two \textit{hopper wagons}. Details of corn futures ticks are in \cite{salov2017}.

\begin{theorem}
\label{TMNumberOfStrategies}
There are $S=(2W+1)^{n-1}$ unique positions $\boldsymbol{W}_j$ and strategies $\boldsymbol{U}_j$ for $|W_{i,j}| \le W$, $i=1,\dots,n$ ticks, $W_{0,j}=W_{n,j}=0$.
\end{theorem}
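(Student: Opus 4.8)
The plan is to count the admissible position vectors $\boldsymbol{W}$ directly by the multiplication principle and then transfer the count to strategies by invoking the bijection of Theorem~\ref{TMStrategiesPositionsBijection}. The constraints attach to the positions, not to the actions: each position must satisfy $|W_i| \le W$, while the two endpoints are pinned by $W_0 = 0$ (given) and $W_n = 0$ (required). So the first move is to separate the fixed coordinates from the free ones.

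First I would observe that $W_0$ is not a coordinate of $\boldsymbol{W}$ at all but the given initial state, and that $W_n$ is forced to equal $0$. Hence the only coordinates left to our discretion are $W_1, \dots, W_{n-1}$. Each of these ranges independently over the set $\{-W, -W+1, \dots, -1, 0, 1, \dots, W\}$, which contains exactly $2W+1$ integers. Because the choices at distinct indices are independent — no constraint couples $W_i$ to $W_{i-1}$ beyond membership in this set — the product rule yields $(2W+1)^{n-1}$ distinct position vectors.

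Next I would transfer this count to strategies. By Theorem~\ref{TMStrategiesPositionsBijection}, for the fixed $W_0=0$ the correspondence $\boldsymbol{W} \mapsto \boldsymbol{U}$ with $U_i = W_i - W_{i-1}$ is a bijection, so the number of admissible strategies equals the number of admissible positions, namely $(2W+1)^{n-1}$. A quick sanity check confirms that every position vector does produce a genuine strategy: the opening action is $U_1 = W_1 - 0 = W_1$ and the closing action is $U_n = 0 - W_{n-1} = -W_{n-1}$, both legitimate integer actions, since the model bounds only the resulting position and places no bound on the size of an individual transaction $U_i$.

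The point to get right, rather than a genuine obstacle, is the bookkeeping of degrees of freedom: it is tempting to write $(2W+1)^n$ by counting all $n$ coordinates, and one must remember that the terminal constraint $W_n=0$ removes exactly one free coordinate and lowers the exponent to $n-1$. It is also worth stating explicitly that the tally is of round-trip strategies, those returning to the flat position $W_n=0$; dropping that requirement would restore the missing factor of $2W+1$.
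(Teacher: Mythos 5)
Your proposal is correct and follows essentially the same route as the paper: fix $W_n=0$, let the remaining $n-1$ coordinates range independently over the $2W+1$ values in $[-W,W]$ to get $(2W+1)^{n-1}$ position vectors, and transfer the count to strategies via the bijection of Theorem~\ref{TMStrategiesPositionsBijection}. The extra remarks on the unboundedness of individual actions and on the round-trip requirement are sound but not needed beyond what the paper states.
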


\begin{proof}
The $n$th coordinate in $\boldsymbol{W_j}$ is zero. The remaining $n-1$ coordinates are $2W+1$ independent $-W, \dots , -1, 0, 1, \dots, W$. Thus, the number of unique combinations and $\boldsymbol{W_j}$ is $S=(2W+1)^{n-1}$. By Theorem \ref{TMStrategiesPositionsBijection}, the number of corresponding unique strategies $\boldsymbol{U_j}$ is the same.
\end{proof}

The sets of positions $\boldsymbol{W}_j$ with $W_{0,j}=W_{n,j}=0$, $|W_{i,j}| \le W$, $i \in [1,n]$, $j \in [1,S=(2W+1)^{n-1}]$ and corresponding strategies $\boldsymbol{U}_j$ are $\mathfrak{W}$ and $\mathfrak{U}$.

Example: $W=1$, $n=3$ yield nine pairs $\boldsymbol{W_j}^T \leftrightarrow \boldsymbol{U_j}^T$: $(0, 0, 0)\leftrightarrow(0, 0, 0)$ do nothing; $(1,0,0)\leftrightarrow(1,-1,0)$, $(-1,0,0)\leftrightarrow(-1,1,0)$; $(1,1,0)\leftrightarrow(1,0,-1)$, $(-1,-1,0)\leftrightarrow(-1,0,1)$; $(0,1,0)\leftrightarrow(0,1,-1)$, $(0,-1,0)\leftrightarrow(0,-1,1)$; $(1,-1,0)\leftrightarrow(1,-2,1)$, $(-1,1,0)\leftrightarrow(-1,2,-1)$.

\begin{theorem}
\label{TMSampleMeanPL}
The sample mean $PL$ of the strategies $\mathfrak{U}$ does not depend on price $\boldsymbol{P}$ and equal to $a_1^{PL}=\frac{-\sum_{j=1}^{(2W+1)^{n-1}} \boldsymbol{C}^T abs(\boldsymbol{U_j})}{(2W+1)^{n-1}}$.
\end{theorem}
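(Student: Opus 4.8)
The plan is to exploit the defining constraint $W_{n,j}=0$ shared by every position in $\mathfrak{W}$. Since $W_{n,j}=W_{0,j}+\sum_{i=1}^{n}U_{i,j}$ and $W_{0,j}=0$, this forces $\sum_{i=1}^{n}U_{i,j}=0$ for each strategy $\boldsymbol{U_j}\in\mathfrak{U}$. Substituting into Equation \eqref{EqPL}, I would collapse the first term $kP_n\sum_{i=1}^{n}U_{i,j}$ and the closing-cost term $C_n\bigl|\sum_{i=1}^{n}U_{i,j}\bigr|$ to zero, leaving
\[
PL_j=-k\,\boldsymbol{P}^T\boldsymbol{U_j}-\boldsymbol{C}^T\mathrm{abs}(\boldsymbol{U_j}).
\]
Averaging over the $S=(2W+1)^{n-1}$ strategies and separating the two summands gives
\[
a_1^{PL}=-\frac{k}{S}\sum_{i=1}^{n}P_i\Bigl(\sum_{j=1}^{S}U_{i,j}\Bigr)-\frac{1}{S}\sum_{j=1}^{S}\boldsymbol{C}^T\mathrm{abs}(\boldsymbol{U_j}),
\]
where the second term already matches the claimed value; everything then reduces to showing that the price-dependent first term vanishes.

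The central step is therefore to prove $\sum_{j=1}^{S}U_{i,j}=0$ for every coordinate $i$. The slick route is a symmetry argument: the set $\mathfrak{W}$ is invariant under coordinatewise negation, since the constraints $W_{0}=W_{n}=0$ and $|W_i|\le W$ are all preserved by $\boldsymbol{W}\mapsto-\boldsymbol{W}$. By the bijection of Theorem \ref{TMStrategiesPositionsBijection}, the same invariance carries over to $\mathfrak{U}$, so the map $\boldsymbol{U_j}\mapsto-\boldsymbol{U_j}$ merely permutes $\mathfrak{U}$. Hence $\sum_{j}\boldsymbol{U_j}=\sum_{j}(-\boldsymbol{U_j})=-\sum_{j}\boldsymbol{U_j}$, which forces $\sum_{j}\boldsymbol{U_j}=\boldsymbol{0}$, kills the price term, and leaves exactly $a_1^{PL}$.

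As a more elementary alternative I would count directly. For a fixed interior index $1\le i\le n-1$, the coordinate $W_i$ takes each value in $\{-W,\dots,W\}$ while the remaining $n-2$ free coordinates range independently, so each value is attained $(2W+1)^{n-2}$ times and $\sum_{j}W_{i,j}=(2W+1)^{n-2}\sum_{w=-W}^{W}w=0$, because the range $\{-W,\dots,W\}$ is symmetric about zero; for $i=n$ one has $W_{n,j}=0$ identically. Writing $U_{i,j}=W_{i,j}-W_{i-1,j}$ and using $W_{0,j}=0$ then yields $\sum_{j}U_{i,j}=0$ for every $i$, the same conclusion.

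The only genuinely substantive point is the vanishing of the price-weighted sum $\sum_{i}P_i\sum_{j}U_{i,j}$; the reduction of Equation \eqref{EqPL} and the final identification of the cost term are routine bookkeeping. I expect the negation-symmetry observation to be the cleanest way to secure it, with the counting argument available as a self-contained backup.
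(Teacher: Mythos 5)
Your proposal is correct and rests on the same key idea as the paper's proof: the negation involution $\boldsymbol{U}_j\mapsto-\boldsymbol{U}_j$ on $\mathfrak{U}$, which the paper uses by pairing each non-trivial strategy with its inverse and noting $PL(\boldsymbol{U}_j)+PL(-\boldsymbol{U}_j)=-2\boldsymbol{C}^T\mathrm{abs}(\boldsymbol{U}_j)$, while you apply the same symmetry to conclude $\sum_j\boldsymbol{U}_j=\boldsymbol{0}$ after first collapsing Equation \eqref{EqPL} via $\sum_i U_{i,j}=0$. The reorganization and the elementary counting backup are fine, but the substance is the same argument.
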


\begin{proof}
The number of strategies given by Theorem \ref{TMNumberOfStrategies} is odd $(2W+1)^{n-1}$. The single \textit{do nothing strategy}, d.n.s, has $PL=0$. The remaining even number forms two sets with $\frac{(2W+1)^{n-1} - 1}{2}$ strategies each: not do nothing strategies $\boldsymbol{U}_j$ and their \textit{inverses} $-\boldsymbol{U}_j$. From Equation \ref{EqPL}, $PL(\boldsymbol{U}_j)+PL(-\boldsymbol{U}_j)=-2\boldsymbol{C}^T\textrm{abs}(\boldsymbol{U}_j)+C_n|W_n|)=-2\boldsymbol{C}^T\textrm{abs}(\boldsymbol{U}_j)$. Averaging gives $a_1^{PL}$.
\end{proof}

 From the market prospective, if $\boldsymbol{U}$ has been applied, no matter by whom, then $-\boldsymbol{U}$ has been applied too and $PL(\boldsymbol{U}) \ne -PL(-\boldsymbol{U})$. The sum outcome negative for traders is a payment to the industry. This expresses the known property of a \textit{negative non-zero sum game}.

\begin{theorem}
\label{TMNumberOfPositionsActions}
There are $n(2W+1)^{n-1}$ positions and actions in $\mathfrak{W}$ and $\mathfrak{U}$.
\end{theorem}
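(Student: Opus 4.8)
The plan is simply to multiply the number of vectors in each set by the number of coordinates carried by a single vector. First I would recall from Theorem~\ref{TMNumberOfStrategies} that $\mathfrak{W}$ contains exactly $S=(2W+1)^{n-1}$ distinct position vectors $\boldsymbol{W}_j$, and that $\mathfrak{U}$ contains the same number $S$ of distinct strategy vectors $\boldsymbol{U}_j$ (the bijection of Theorem~\ref{TMStrategiesPositionsBijection} guarantees the equality of the two cardinalities, so I need to carry out the count only once).

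Next I would observe that each position vector $\boldsymbol{W}_j=(W_{1,j},\dots,W_{n,j})^T$ is an $n$-dimensional column vector, hence it contributes exactly $n$ individual positions $W_{i,j}$, one for every tick $i=1,\dots,n$. Summing over all $S$ vectors therefore yields $n\cdot S = n(2W+1)^{n-1}$ positions in $\mathfrak{W}$. The identical count applies to $\mathfrak{U}$: every strategy $\boldsymbol{U}_j=(U_{1,j},\dots,U_{n,j})^T$ also has exactly $n$ coordinates, so it contributes $n$ actions $U_{i,j}$, giving $n(2W+1)^{n-1}$ actions in $\mathfrak{U}$.

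There is no real obstacle here; the only point requiring care is the interpretation of the count. The claim tallies the coordinate entries \emph{with multiplicity across the vectors}, not the number of distinct integer values that can occur. Indeed the constrained coordinate $W_{n,j}=0$ repeats in every vector, and boundary actions recur as well, so a count of distinct values would give something entirely different. Once one fixes the convention that each of the $S$ vectors supplies its full complement of $n$ coordinates, the product $n(2W+1)^{n-1}$ follows immediately, and the same reasoning settles $\mathfrak{U}$ verbatim.
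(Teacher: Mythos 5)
Your proposal is correct and follows exactly the paper's own argument: Theorem~\ref{TMNumberOfStrategies} gives $(2W+1)^{n-1}$ vectors in each set, and each vector has $n$ coordinates, so the total count is the product. Your extra remark that the tally is of coordinate entries with multiplicity is a sensible clarification but does not change the substance of the argument.
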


\begin{proof}
By Theorem \ref{TMNumberOfStrategies}, the numbers of positions and strategies are $(2W+1)^{n-1}$. Each has $n$ coordinates.
\end{proof}

The unique positions and strategies are indexed by $j \in [1,(2W+1)^{n-1}]$, yielding $n(2W+1)^{n-1}$ $W_{i,j}$ and $U_{i,j}$, $i \in [1,n]$. $\forall j$, $W_{0,j}=W_{n,j}=0$, $U_{1,j}=W_{1,j}-W_{0,j}=W_{1,j}$, $U_{n,j}=W_{n,j}-W_{n-1,j}=-W_{n-1,j}$. $\forall i \in [1,n-1]$, numbers of $W_{i,j}=W_l\in[-W,W]$, $l\in[1,2W+1]$ are equal. Uniformness follows from independence of positions, Theorem \ref{TMNumberOfStrategies}. Then, the numbers of $U_{1,j}=W_{1,j}$ and $U_{n,j}=-W_{n-1,j}$ of each kind $W_l$ in all strategies are $\frac{(2W+1)^{n-1}}{2W+1}=(2W+1)^{n-2}$.

Positions are \textit{state functions}. Actions are \textit{transition functions}. The do nothing action $U_{i,j}=0$ does not change the state $W_{i-1,j} \rightarrow W_{i,j}$. It is neither a loss nor a profit for a trader and does not pay to the industry.

\begin{theorem}
\label{TMNumberOfDNActions}
There are $n(2W+1)^{n-2}$ do nothing actions in $\mathfrak{U}$.
\end{theorem}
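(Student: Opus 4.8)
The plan is to count the do nothing actions column by column, i.e.\ to fix an action index $i \in [1,n]$, count the strategies $\boldsymbol{U}_j$ whose $i$th coordinate vanishes, and then sum over $i$. First I would rewrite the defining condition $U_{i,j}=0$ as a condition on the positions via $U_{i,j}=W_{i,j}-W_{i-1,j}$, so that a do nothing action at $i$ is exactly the event $W_{i,j}=W_{i-1,j}$. Recall from Theorem \ref{TMNumberOfStrategies} that the free position coordinates are $W_{1,j},\dots,W_{n-1,j}$, each ranging independently over the $2W+1$ values $-W,\dots,W$, while $W_{0,j}=W_{n,j}=0$; thus there are $(2W+1)^{n-1}$ positions in total, and fixing any one linear condition among the free coordinates is the operation I will repeatedly count.

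Next I would split into boundary and interior indices. For $i=1$ we have $U_{1,j}=W_{1,j}$, so $U_{1,j}=0$ pins the single free coordinate $W_{1,j}=0$ and leaves $W_{2,j},\dots,W_{n-1,j}$ free, giving $(2W+1)^{n-2}$ strategies; symmetrically, for $i=n$ we have $U_{n,j}=-W_{n-1,j}$, so $U_{n,j}=0$ forces $W_{n-1,j}=0$ and again gives $(2W+1)^{n-2}$. These two counts are exactly the $W_l=0$ entries of the uniform distributions of $U_{1,j}$ and $U_{n,j}$ already recorded in the paragraph preceding the theorem. For an interior index $i \in [2,n-1]$ the condition $W_{i-1,j}=W_{i,j}$ ties two independent free coordinates to a common value chosen in $2W+1$ ways, while the other $n-3$ free coordinates stay free, yielding $(2W+1)\cdot(2W+1)^{n-3}=(2W+1)^{n-2}$ strategies. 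Hence every one of the $n$ columns contributes $(2W+1)^{n-2}$ do nothing actions, and summing over $i$ gives $n(2W+1)^{n-2}$.

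The point that needs care — the main obstacle — is reconciling the structurally different boundary and interior constraints so that they produce the same count: at $i=1$ and $i=n$ a single free coordinate is pinned to $0$, whereas for interior $i$ two free coordinates are forced to be equal. In both situations exactly one factor of $2W+1$ is removed from the $(2W+1)^{n-1}$ free assignments (pinning one coordinate, or collapsing two coordinates to one shared value), which is what makes the per-column count uniformly $(2W+1)^{n-2}$ and, as a consistency check, makes do nothing actions a fraction $1/(2W+1)$ of the $n(2W+1)^{n-1}$ actions counted in Theorem \ref{TMNumberOfPositionsActions}. I would also note the degenerate case $n=2$, where no interior index occurs and the two boundary columns give $n(2W+1)^{0}=2$, matching the two coordinates of the single do nothing strategy.
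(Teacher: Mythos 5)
Your proof is correct and follows essentially the same route as the paper: both count do nothing actions slice by slice, handle the boundary indices $i=1,n$ (where $U_{1,j}=W_{1,j}$ and $U_{n,j}=-W_{n-1,j}$ pin one free coordinate to zero) separately from the interior indices (where $W_{i-1,j}=W_{i,j}$ collapses two free coordinates), and find $(2W+1)^{n-2}$ per column. Your write-up is more explicit than the paper's about why the two structurally different constraints give the same count, and the $n=2$ sanity check is a nice addition, but the underlying argument is the same.
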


\begin{proof}
$\#(U_{1,j}=0)+\#(U_{n,j}=0)=2(2W+1)^{n-2}$ and for $i\in[2,n-1]$, each $W_l$ is represented by $(2W+1)^{n-2}$ strategies. $U_{i,j}=0$, if it does not change $W_{i-1,j}$. Therefore, there is only one do nothing action for each subset yielding $(n-2)(2W+1)^{n-2}$. Adding for $i=1$ and $i=n$ totals $n(2W+1)^{n-2}$.
\end{proof}

\begin{theorem}
\label{TMNumberOfTransactions}
There are $2nW(2W+1)^{n-2}$ transactions in $\mathfrak{U}$.
\end{theorem}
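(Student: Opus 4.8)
The plan is to use the fact that every action $U_{i,j}$ in $\mathfrak{U}$ is exactly one of two mutually exclusive types: a \emph{transaction} (a nonzero action, i.e.\ a buy with $U_{i,j}>0$ or a sell with $U_{i,j}<0$) or a \emph{do nothing action} ($U_{i,j}=0$). Thus transactions and do nothing actions partition the entire collection of actions, and the count of transactions is simply the total number of actions minus the number of do nothing actions. This reduces the claim to a one-line subtraction followed by a factorization, so no new combinatorial construction is needed.

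First I would invoke Theorem \ref{TMNumberOfPositionsActions} to fix the total number of actions in $\mathfrak{U}$ as $n(2W+1)^{n-1}$. Next I would invoke Theorem \ref{TMNumberOfDNActions} to fix the number of do nothing actions as $n(2W+1)^{n-2}$. Because these two kinds are disjoint and together exhaust all actions, the number of transactions is their difference, and pulling out the common factor $n(2W+1)^{n-2}$ yields
\begin{equation}
n(2W+1)^{n-1}-n(2W+1)^{n-2}=n(2W+1)^{n-2}\bigl((2W+1)-1\bigr)=2nW(2W+1)^{n-2}.
\end{equation}

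The only point requiring care is the justification that transactions and do nothing actions genuinely form a partition of every action with neither overlap nor omission; this is immediate from the definitions given before Equation \ref{EqPL} (positive and negative actions are transactions, the zero action changes no position), so I do not expect a real obstacle. If a cross-check is wanted, I would count per coordinate index instead: at each fixed $i\in[1,n]$ there are $(2W+1)^{n-1}$ actions and $(2W+1)^{n-2}$ of them do nothing, giving $2W(2W+1)^{n-2}$ transactions per index, and summing over the $n$ indices reproduces $2nW(2W+1)^{n-2}$. The entire result is therefore a direct corollary of the two preceding theorems, with the identity $(2W+1)^{n-1}-(2W+1)^{n-2}=2W(2W+1)^{n-2}$ being the sole algebra involved.
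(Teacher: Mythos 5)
Your proposal is correct and matches the paper's own proof, which is exactly the subtraction $n(2W+1)^{n-1}-n(2W+1)^{n-2}=2nW(2W+1)^{n-2}$ relying on Theorems \ref{TMNumberOfPositionsActions} and \ref{TMNumberOfDNActions}. Your added justification that transactions and do nothing actions partition the set of actions, and the per-slice cross-check, are sound elaborations of the same argument.
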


\begin{proof}
$n(2W+1)^{n-1} - n(2W+1)^{n-2} = 2nW(2W+1)^{n-2}$.
\end{proof}

\paragraph{The Market Universe.} Figure \ref{FigESM17_ESU17_ESZ17_ESH18_Price_20170410} depicts $n=134909$ ticks for ESM17. Low resolution and discreteness hide some. The number of strategies with $|W_{i,j}|\le 1$ is $3^{134908}$. The Sun mass is $1.99 \times 10^{30}$ kg \url{http://solar-center.stanford.edu/vitalstats.html}. In photosphere, 73.46\% by mass is Hydrogen. The mass of its atom is $1.67 \times 10^{-27}$ kg \url{https://en.wikipedia.org/wiki/Unified_atomic_mass_unit}. If the fraction is valid for the entire star, then the number of Hydrogen atoms is $\frac{0.7346 \times 1.99 \times 10^{30}}{1.67 \times 10^{-27}} \approx 8.8 \times 10^{56}$. The latter is \textit{nothing} comparing with the number of potential strategies for ESM17 traded on Monday April 10, 2017, making the $a_1^{PL}$ formula in Theorem \ref{TMSampleMeanPL} not robust. The formula in Theorem \ref{TMNumberOfTransactions} is robust for the number of transactions, not dollars. The distribution of $-2W \le U_{i,j} \le 2W$ is not uniform. Example, for $W=1$, $n=3$: $\#(U_i=-2)=1$, $\#(U_i=-1)=8$, $\#(U_i=0)=9$, $\#(U_i=1)=8$, $\#(U_i=2)=1$. Theorems \ref{TMNumberOfPositionsActions} and \ref{TMNumberOfDNActions} give the total number of actions 27 and $\#(U_i=0)=9$. \textit{The distribution of actions is needed to compute dollars.}

\section{Distribution of Actions}

There are $4W+1$ action types $m \in [-2W, 2W]$, if $W_{0,j}=W_{n,j}=0$, $|W_{i,j}|\le W$. The frequency of do nothing actions $p(m=0,W,n)=\frac{n(2W+1)^{n-2}}{n(2W+1)^{n-1}}=\frac{1}{2W+1}$ is independent on $n$, Theorems \ref{TMNumberOfPositionsActions}, \ref{TMNumberOfDNActions}. "To guess" formulas for $m \ne 0$, the author wrote a C++ program, reserving memory for $(2W+1)^{n-1}$ positions vectors of size $n$ each using \verb|std::vector<std::vector<int>>|, \verb|std::vector<T>::reserve|. Each integer $[0, (2W+1)^{n-1} - 1]$ is divided $n-1$ times by modulo $2W+1$ returning remainder $[0, 2W]$, "pushed back", \verb|std::vector<int>::push_back|, to a corresponding vector. Subtraction $W$ fits values to $[-W, W]$. $n$th zero is "pushed back" to each vector. \verb|std::adjacent_difference| computes actions in own "vector of vectors of integers", simplifying counting actions for $m$. For $W=3$, $m=-3$, it reports $(n, count)$: $(1, 0)$, $(2, 2)$, $(3, 18)$, $(4, 154)$, $(5, 1274)$, $(6, 10290)$, $(7, 81634)$. For $m=0$, counts are in agreement with Theorem \ref{TMNumberOfDNActions}. "Guessing" is dividing the count by $2W+1$ to find the quotient and power of the factor:  $2=14\times 7^{-1}$, $18=18\times 7^{0}$, $154=22\times 7^{1}$, $1274=26\times 7^{2}$, $10290=30\times 7^{3}$, $81634=34\times 7^{4}$. The quotients linearly depend on $n$. The "guessed formula" is $(4n+6)(2W+1)^{n-3}$. Formulas do not work for $n=1$ with single d.n.s. The "guessed formulas" are in Table \ref{TblGuessedCounts}.
\begin{center}
\begin{longtable}{|r|r|r|r|}
\caption[Guessed Counts]{Guessed Counts Formulas, $n\in[2, 9]$.} \label{TblGuessedCounts} \\
 \hline
 \multicolumn{1}{|c|}{$W$} &
 \multicolumn{1}{|c|}{$m$} &
 \multicolumn{1}{c|}{Count} &
 \multicolumn{1}{c|}{Sum} \\
 \hline 
 \endfirsthead
 \multicolumn{4}{c}%
 {\tablename\ \thetable{} -- continued from previous page} \\
 \hline
 \multicolumn{1}{|c|}{$W$} &
 \multicolumn{1}{|c|}{$m$} &
 \multicolumn{1}{c|}{Count} &
 \multicolumn{1}{c|}{Sum} \\
 \hline 
 \endhead
 \hline \multicolumn{4}{|r|}{{Continued on next page}} \\ \hline
 \endfoot
 \hline
 \endlastfoot
 &-2&$(n-2) \times (2W+1)^{n-3}$&\\
 &-1&$(2n+2) \times (2W+1)^{n-3}$&\\
 1&0&$n \times (2W+1)^{n-2}=3n \times (2W+1)^{n-3}$&$9n\times (2W+1)^{n-3}=n \times 3^{n-1}$\\
 &1&$(2n+2) \times (2W+1)^{n-3}$&\\
 &2&$(n-2) \times (2W+1)^{n-3}$&\\
 \hline
 &-4&$(n-2) \times (2W+1)^{n-3}$&\\
 &-3&$(2n-4) \times (2W+1)^{n-3}$&\\
 &-2&$(3n+4) \times (2W+1)^{n-3}$&\\
 &-1&$(4n+2) \times (2W+1)^{n-3}$&\\
 2&0&$n \times (2W+1)^{n-2}=5n \times (2W+1)^{n-3}$&$25n\times (2W+1)^{n-3}=n \times 5^{n-1}$\\
 &1&$(4n+2) \times (2W+1)^{n-3}$&\\
 &2&$(3n+4) \times (2W+1)^{n-3}$&\\
 &3&$(2n-4) \times (2W+1)^{n-3}$&\\
 &4&$(n-2) \times (2W+1)^{n-3}$&\\
 \hline
 &-6&$(n-2) \times (2W+1)^{n-3}$&\\
 &-5&$(2n-4) \times (2W+1)^{n-3}$&\\
 &-4&$(3n-6) \times (2W+1)^{n-3}$&\\
 &-3&$(4n+6) \times (2W+1)^{n-3}$&\\
 &-2&$(5n+4) \times (2W+1)^{n-3}$&\\
 &-1&$(6n+2) \times (2W+1)^{n-3}$&\\
 3&0&$n \times (2W+1)^{n-2}=7n \times (2W+1)^{n-3}$&$49n\times (2W+1)^{n-3}=n \times 7^{n-1}$\\
 &1&$(6n+2) \times (2W+1)^{n-3}$&\\
 &2&$(5n+4) \times (2W+1)^{n-3}$&\\
 &3&$(4n+6) \times (2W+1)^{n-3}$&\\
 &4&$(3n-6) \times (2W+1)^{n-3}$&\\
 &5&$(2n-4) \times (2W+1)^{n-3}$&\\
 &6&$(n-2) \times (2W+1)^{n-3}$&\\
\end{longtable}
\end{center}
The formulas are products of a line $b(m,W) \times n + a(m,W)$ and $(2W+1)^{n-3}$. Distributions are symmetrical. For $m \ne 0$, the frequencies depend on $n$ but $\lim_{n \rightarrow \infty}p(m,W,n)=\lim_{n \rightarrow \infty}\frac{(b(m,W) \times n + a(m,W)) \times (2W+1)^{n-3}}{n\times (2W+1)^{n-1}} = \frac{b(m,W) \times n + a(m,W)}{n \times (2W+1)^2} = \frac{b(m,W)}{(2W+1)^2}$ does not. $a(m=0,W)=0$. The $b(m,W)=2W+1-|m|$, where $m \in [-2W, 2W]$, satisfies all formulas in Table \ref{TblGuessedCounts}.

The $a(m, W)$ is obtained from $n=2$, where the second, \textit{last}, position is zero and the first action $|U_{1,j}| \le W$. The strategies and inverses count two for $|m| \le W$, making $a(m,W)=2|m|$, and zero for $W < |m| \le 2W$, yielding $a(m,W)=2|m|-2(2W+1)$. The united formulas are
\begin{equation}
\label{EqActionsDistribution}
\begin{split}
& A = \{m : |m| \le W \}, \; B = \{m : W < |m| \le 2W \};\\
& Count_A(m, W, n) = \left[(2W+1)n-(n-2)|m|\right](2W+1)^{n-3};\\
& Count_B(m, W, n)=Count_A(m, W, n)-2(2W+1)^{n-2}=\\
&=[(2W+1)n-(n-2)|m| - 2(2W+1)](2W+1)^{n-3};\\
& p_A(m,W,n)=\frac{Count_A(m, W, n)}{n(2W+1)^{n-1}}=\frac{1}{2W+1}-\frac{(n-2)|m|}{n(2W+1)^2};\\
& p_B(m,W,n)=\frac{Count_B(m, W, n)}{n(2W+1)^2}=p_A(m,W,n)-\frac{2}{n(2W+1)}.\\
\end{split}
\end{equation}
The two counts "contain" \textit{all} formulas from Table \ref{TblGuessedCounts}, reproduce $8 \times (5 + 9 + 13) = 216$ C++ experimental values, and satisfy Theorem \ref{TMNumberOfDNActions}, since $\sum_{i=0}^{i=n}i=\sum_{i=1}^{i=n}i=\frac{n(n+1)}{2}$, $\sum_{i=n+1}^{i=2n}i =\sum_{i=1}^{i=2n}i - \sum_{i=1}^{i=n}i=\frac{n(3n+1)}{2}$,

$$\sum_{m=-W}^{m=W}Count_A=n(2W+1)^{n-1}-(n-2)W(W+1)(2W+1)^{n-3};$$
$$\sum_{m=-2W}^{m=-W-1}Count_B+\sum_{m=W+1}^{m=2W}Count_B=2\sum_{m=W+1}^{m=2W}Count_B=$$
$$=(W^2n+Wn-2W^2-2W)(2W+1)^{n-3};$$
$$\sum_{m=-W}^{m=W}Count_A+2\sum_{m=W+1}^{m=2W}Count_B=n(2W+1)^{n-1},$$

\begin{proof}
Vladimir Arnold recollects \cite[p. 29]{arnold2004} the words of his teacher Andrey Kolmogorov (VS's translation): \textit{"Do not look for a mathematical sense in my hydrodynamics achievements. ... I did not derive them from initial axioms or definitions (as physicists say, from the "first principals"): my results are not proved but valid and this is much more important!"} The C++ experiments convinced the author of the correctness of formulas \ref{EqActionsDistribution} and left admiration of the Kolmogorov's words. However, Anderzej Pelc's \textit{"Why Do We Believe Theorems?"} \cite{pelc2011} "pressed" not to publish the formulas without a proof.

The author could not move from $n$ to $n+1$ using \textit{mathematical induction}. \textit{Generating functions} \cite{stanley1990}, \cite{lando2007} require coefficients - a \textit{vicious circle}. But ...

By construction, positions $[-W, W]$ are uniformly distributed in the matrix $n$ ticks $\times$ $[S=(2W+1)^{n-1}]$ strategies within the first $1, \dots, n-1$ rows
\[
Positions=\mathcal{W}=
   \begin{bmatrix}
   W_{1,1} & W_{1,2} & \dots & W_{1,S} \\
   \dots & \dots & \dots & \dots \\
   W_{n-1,1} & W_{n-1,2} & \dots & W_{n-1,S} \\
   0 & 0 & \dots & 0 \\
   \end{bmatrix}
\]
Each row, except $n$th, has $\frac{(2W+1)^{n-1}}{2W+1}=(2W+1)^{n-2}$ positions of each type. The actions are adjacent differences in columns $U_{i,j}=W_{i,j}-W_{i-1,j}$
\[
Actions=\mathcal{U}=
   \begin{bmatrix}
   U_{1,1}=W_{1,1}-0 & \dots & U_{1,S}=W_{1,S}-0 \\
   U_{2,1}=W_{2,1}-W_{1,1} & \dots & U_{2,S}=W_{2,N}-W_{1,S} \\
   \dots & \dots & \dots \\
   U_{n-1,1}=W_{n-1,1}-W_{n-2,1} & \dots & U_{n-1,S}=W_{n-1,S}-W_{n-2,S} \\
   U_{n,1}=0-W_{n-1,1} & \dots & U_{n,S}=0-W_{n-1,S} \\
   \end{bmatrix}
\]
The matrix $(2W+1) \times (2W+1)$ of all \textit{individual} transitions
$$
  \begin{array}{rrrrrrrr}
  | & -W & -W+1 & \dots & 0 & \dots & W-1 & W \\
  ---- & ---- & ---- & -- & ---- & -- & ---- & ---- \\
  -W \rightarrow & 0 & 1 & \dots & W & \dots & 2W-1 & 2W \\
  -W+1 \rightarrow & -1 & 0 & \dots & W-1 & \dots & \dots & 2W - 1 \\
  \dots \rightarrow & \dots & \dots & \dots & \dots & \dots & \dots & \dots \\
  0 \rightarrow & -W & -W+1 & \dots & 0 & \dots & W-1 & W \\
  \dots \rightarrow & \dots & \dots & \dots & \dots & \dots & \dots & \dots \\
  W-1 \rightarrow & -2W+1 & -2W+2 & \dots & -W+1 & \dots & 0 & 1 \\
  W \rightarrow & -2W & -2W+1 & \dots & -W & \dots & -1 & 0 \\
  \end{array}
$$
is applied to the ticks $[1, n-2]$. Due to uniformness of positions in ticks $[1, n - 1]$, each of the $(2W+1)^{n-2}$ types, in moves from ticks $[1, n-2]$, is changed to $(2W+1)^{n-2}$ types: $\frac{1}{2W+1}$ of actions transfer a position type to another, both from $[-W,W]$. The number of transitions from one type to another is $\frac{(2W+1)^{n-2}}{2W+1}=(2W+1)^{n-3}$. Therefore, the number of actions of one type $m\in[-2W,2W]$ is the length of the diagonal $2W+1 - |m|$. For the ticks $[1, n-2]$ this yields $(n-2)(2W+1-|m|)$ \textit{individual} actions, which must be multiplied by $(2W+1)^{n-3}$. The total is $Count_B=[(2W+1)n-(n-2)|m|-2(2W+1)](2W+1)^{n-3}$. Remaining transitions $0 \rightarrow 1$, $(n-1) \rightarrow n$ add $2(2W+1)(2W+1)^{n-3}$ actions only for $m\in[-W, W]$. Adding it to $Count_B$ yields $Count_A$ and completes the proof of the next Theorem for the \textit{new discrete distribution}.
\end{proof}
\begin{theorem}
\label{TMActionsDistribution}
Formulas \ref{EqActionsDistribution} give the distribution of actions $m$ in $\mathfrak{U}$.
\end{theorem}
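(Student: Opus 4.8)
The plan is to count directly, for each action value $m \in [-2W, 2W]$, the number of entries of the action matrix $\mathcal{U}$ equal to $m$, organizing the count by the tick index $i$ (the row of $\mathcal{U}$). The essential input is the uniformity established in Theorem \ref{TMNumberOfStrategies}: the interior positions $W_{1,j}, \dots, W_{n-1,j}$ range \emph{independently} over $\{-W, \dots, W\}$ while $W_{0,j} = W_{n,j} = 0$. Consequently, fixing the value of any single interior position leaves $(2W+1)^{n-2}$ strategies, and fixing the values of any two interior positions leaves $(2W+1)^{n-3}$ strategies. With this in hand the count splits cleanly into two \emph{boundary} rows ($i = 1$ and $i = n$) and $n-2$ \emph{interior} rows ($i = 2, \dots, n-1$).

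First I would treat the boundary rows. Since $W_{0,j} = 0$, the first action is $U_{1,j} = W_{1,j}$, so $U_{1,j} = m$ forces $W_{1,j} = m$, possible only when $m \in A$; the number of such strategies is $(2W+1)^{n-2}$. Symmetrically, $W_{n,j} = 0$ gives $U_{n,j} = -W_{n-1,j}$, so $U_{n,j} = m$ forces $W_{n-1,j} = -m$, again possible only for $m \in A$ and occurring $(2W+1)^{n-2}$ times. Hence the two boundary rows together contribute $2(2W+1)^{n-2}$ to the count of $m$ when $m \in A$ and nothing when $m \in B$. Next I would treat a single interior row $i \in [2, n-1]$, where $U_{i,j} = W_{i,j} - W_{i-1,j}$ with both positions free in $\{-W, \dots, W\}$. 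The number of ordered pairs $(a,b)$ in that square with $b - a = m$ is exactly the diagonal length $2W + 1 - |m|$, and each pair is realized by $(2W+1)^{n-3}$ strategies, so one interior row contributes $(2W+1-|m|)(2W+1)^{n-3}$. Summing over the $n-2$ interior rows yields $(n-2)(2W+1-|m|)(2W+1)^{n-3}$, valid for every $m \in [-2W, 2W]$.

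Combining the two contributions finishes the count: for $m \in B$ only the interior rows survive, giving $Count_B = (n-2)(2W+1-|m|)(2W+1)^{n-3}$, which rearranges to the stated expression; for $m \in A$ one adds the boundary term $2(2W+1)^{n-2} = 2(2W+1)(2W+1)^{n-3}$ to obtain $Count_A = [(2W+1)n - (n-2)|m|](2W+1)^{n-3}$. Dividing by the total number of actions $n(2W+1)^{n-1}$ from Theorem \ref{TMNumberOfPositionsActions} then produces $p_A$ and $p_B$. I expect the main obstacle to be pinning down the uniformity/independence step rigorously rather than the arithmetic: one must verify that the count $(2W+1)^{n-3}$ of completions is unaffected at the interior rows $i = 2$ and $i = n-1$ adjacent to the boundary, where one of the two free positions is $W_1$ or $W_{n-1}$ — these are still among the independent coordinates, so the argument goes through. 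One must also dispose of the degenerate cases $n = 2$ (no interior rows) and $n = 3$ (a single interior row), noting that the negative exponent $(2W+1)^{n-3}$ appearing when $n = 2$ is harmless because it is multiplied by the factor $n - 2 = 0$. A final consistency check is to sum $Count_A$ over $A$ together with $2\sum Count_B$ over the positive half of $B$ and confirm the total equals $n(2W+1)^{n-1}$, and that setting $m = 0$ reproduces Theorem \ref{TMNumberOfDNActions}.
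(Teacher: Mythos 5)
Your proposal is correct and follows essentially the same route as the paper: both decompose the count into the two boundary rows $i=1,n$ (contributing $2(2W+1)^{n-2}$ only for $|m|\le W$) and the $n-2$ interior rows, where the diagonal length $2W+1-|m|$ of the transition square times the multiplicity $(2W+1)^{n-3}$ gives the per-row count. The uniformity/independence of the interior positions that you flag as the key step is exactly the ingredient the paper invokes from its construction of $\mathfrak{W}$, and your consistency checks (total $n(2W+1)^{n-1}$, agreement at $m=0$ with Theorem \ref{TMNumberOfDNActions}) are the same ones performed there.
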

For $n=4$, $W=1$, the illustration of transposed matrices is
$$
\mathcal{W}^T=
   \left [ \begin{array}{rrrr}
   -1 & -1 & -1 & 0 \\
   0 & -1 & -1 & 0 \\
   1 & -1 & -1 & 0 \\
   -1 & 0 & -1 & 0 \\
   0 & 0 & -1 & 0 \\
   1 & 0 & -1 & 0 \\
   -1 & 1 & -1 & 0 \\
   0 & 1 & -1 & 0 \\
   1 & 1 & -1 & 0 \\

   -1 & -1 & 0 & 0 \\
   0 & -1 & 0 & 0 \\
   1 & -1 & 0 & 0 \\
   -1 & 0 & 0 & 0 \\
   0 & 0 & 0 & 0 \\
   1 & 0 & 0 & 0 \\
   -1 & 1 & 0 & 0 \\
   0 & 1 & 0 & 0 \\
   1 & 1 & 0 & 0 \\

   -1 & -1 & 1 & 0 \\
   0 & -1 & 1 & 0 \\
   1 & -1 & 1 & 0 \\
   -1 & 0 & 1 & 0 \\
   0 & 0 & 1 & 0 \\
   1 & 0 & 1 & 0 \\
   -1 & 1 & 1 & 0 \\
   0 & 1 & 1 & 0 \\
   1 & 1 & 1 & 0 \\
   \end{array}
   \right ],
\;
\mathcal{U}^T=
   \left [ \begin{array}{rrrr}
   -1 & 0 & 0 & 1 \\
   0 & -1 & 0 & 1 \\
   1 & -2 & 0 & 1 \\
   -1 & 1 & -1 & 1 \\
   0 & 0 & -1 & 1 \\
   1 & -1 & -1 & 1 \\
   -1 & 2 & -2 & 1 \\
   0 & 1 & -2 & 1 \\
   1 & 0 & -2 & 1 \\

   -1 & 0 & 1 & 0 \\
   0 & -1 & 1 & 0 \\
   1 & -2 & 1 & 0 \\
   -1 & 1 & 0 & 0 \\
   0 & 0 & 0 & 0 \\
   1 & -1 & 0 & 0 \\
   -1 & 2 & -1 & 0 \\
   0 & 1 & -1 & 0 \\
   1 & 0 & -1 & 0 \\

   -1 & 0 & 2 & -1 \\
   0 & -1 & 2 & -1 \\
   1 & -2 & 2 & -1 \\
   -1 & 1 & 1 & -1 \\
   0 & 0 & 1 & -1 \\
   1 & -1 & 1 & -1 \\
   -1 & 2 & 0 & -1 \\
   0 & 1 & 0 & -1 \\
   1 & 0 & 0 & -1 \\
   \end{array}
   \right ].
$$
Figure \ref{FigActionsDistribution1} plots the corresponding probability mass function, PMF, of actions $p(m, W=1, n=4)$ together with PMF for $W=1$ and other $n$, demonstrating the limit distribution. The distributions are discrete and lines serve only to a better visibility of dots.
\begin{figure}[!h]
  \centering
  \includegraphics[width=130mm]{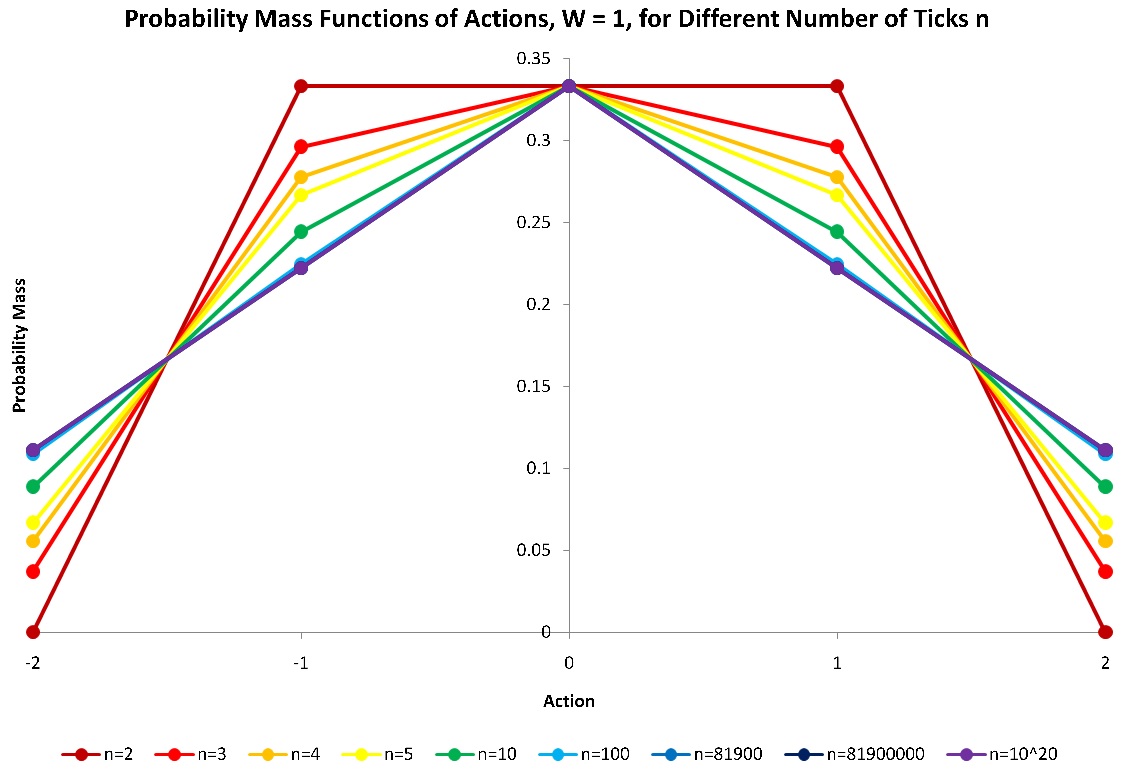}
  \caption[FigActionsDistribution1]
   {Probability mass functions of actions for strategies with position limit $\pm1$ contract. Plotted using Microsoft Excel.}
  \label{FigActionsDistribution1}
\end{figure}
\begin{figure}[!h]
  \centering
  \includegraphics[width=130mm]{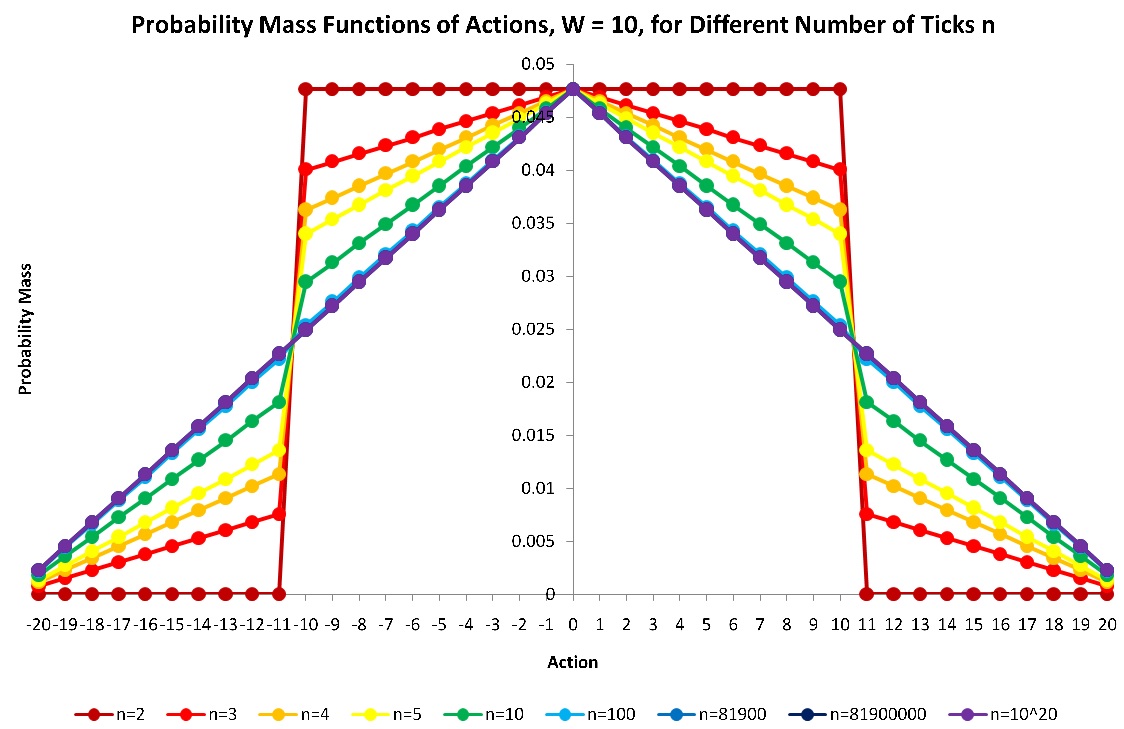}
  \caption[FigActionsDistribution20]
   {Probability mass functions of actions for strategies with position limit $\pm10$ contracts. Plotted using Microsoft Excel.}
  \label{FigActionsDistribution20}
\end{figure}

The $n=81900$ is seconds in the trading session of S\&P 500 E-Mini futures symbolizing the one trade per second frequency. n=81900000 corresponds to a hypothetical case of 1000 trades per second. Figure \ref{FigActionsDistribution20} presents PMF of actions for strategies with position limit $W=10$ and different numbers of ticks.

Dollars paid to the industry as constant costs per contract $C$  for $(2W+1)^{n-1}$ strategies, applied each one time, can be computed using the symmetry of the distribution as weighted actions and then divided by $(2W+1)^{n-1}$ to get $a_1^{PL}$

\begin{equation}
\label{EqIndustryGains}
\begin{split}
\$ = C \left(\sum_{m=-W}^{m=W}|m|Count_A + \sum_{m=-2W}^{m=-W-1}|m|Count_B+\sum_{m=W+1}^{m=2W}|m|Count_B \right)\\
=2C \left( \sum_{m=1}^{m=W}m Count_A + \sum_{m=W+1}^{m=2W}m Count_B \right)\\
=2C \left( \left (\sum_{m=1}^{m=2W}m Count_A \right) - W(2W+1)(3W+1)(2W+1)^{n-3}\right)\\
=2CW(2W+1)^{n-2}\left((2W+1)n - \frac{(n-2)(4W+1)}{3}-(3W+1)\right)\\
=2CW(W+1)(2W+1)(2W+1)^{n-3}\frac{2n-1}{3},\\
a_1^{PL}=-\frac{2CW(W+1)(2n-1)}{3(2W+1)} \equiv \frac{-C\sum_{j=1}^{(2W+1)^{n-1}} abs(\boldsymbol{U_j})}{(2W+1)^{n-1}}.
\end{split}
\end{equation}
Since $\sum_{i=1}^{i=n}i^2=\frac{n(n+1)(2n+1)}{6}$, $W(W+1)(2W+1)$ is divisible by three and six. The last equation is an identity, where the left side is trivial but the right one can "exhaust" any computer.

\paragraph{Strategies generating extreme industry gains.} The minimum zero gain is generated only by one strategy - d.n.s. Only two strategies create the maximum gain $2CW(n-1)$ each. Indeed, the maximum action \textit{reverses} long to short and vice versa extreme positions: $W\rightarrow -W$ or $-W\rightarrow W$. This can be done in ticks $[2, n-1]$. The ticks 1 and $n$ add together maximum $2CW$.

\paragraph{Distribution function.} Following to Eugene Lukacs \cite[pp. 5-6, p. 17]{lukacs1970}, any purely discrete distribution can be written in the form $F(x)=\sum_jp_j\varepsilon(x-\xi_j)$, where $x$, $p_j$, $\xi_j$ are real, $p_j$ satisfy $p_j \ge 0$, and $\sum_jp_j=1$, and
\[ \varepsilon(x) = \left\{ \begin{array}{ll}
         0, & \mbox{if $x < 0$}\\
         1, & \mbox{if $x \ge 0$}. \end{array} \right. \]
$\xi_j$ are discontinuity points of $F(x)$. $p_j$ is the \textit{saltus} at $\xi_j$. Let us enumerate types $m\in[-2W,2W]$ by $j=m+2W+1$, where $\xi_j=j-2W-1=m$, and $p_A=p_A(m,W,n)$, $p_B=p_B(m,W,n)$ are from Equations \ref{EqActionsDistribution}. Then, the $F(x)$ is

\begin{equation}
\label{EqCDF}
\begin{split}
&F(x)=\sum_{j=1}^{j=4W+1}p_j\varepsilon(x-\xi_j)=\sum_{j=1}^{j=4W+1}p_j\varepsilon(x-j+2W+1)=\\
&=\sum_{m=-2W}^{m=-W-1}p_B\varepsilon(x-m)+\sum_{m=-W}^{m=W}p_A\varepsilon(x-m)+\sum_{m=W+1}^{m=2W}p_B\varepsilon(x-m)\\
&=\frac{1}{2W+1}\sum_{m=-2W}^{m=2W}\varepsilon(x-m)-\frac{n-2}{n(2W+1)^2}\sum_{m=-2W}^{m=2W}|m|\varepsilon(x-m)\\
&-\frac{2}{n(2W+1)}\left(\sum_{m=-2W}^{m=-W-1}\varepsilon(x-m) + \sum_{m=W+1}^{m=2W}\varepsilon(x-m) \right).
\end{split}
\end{equation}

\paragraph{Characteristic function} $f(t)=\int_{-\infty}^{\infty}e^{itx}dF(x)$, $i=\sqrt{-1}$, for a purely discrete distribution reduces to the sum $f(t)=\sum_jp_je^{it\xi_j}$ \cite[p. 17]{lukacs1970} yielding
\begin{equation}
\label{EqCF}
\begin{split}
&f(t)=\frac{1}{2W+1}\sum_{m=-2W}^{m=2W}e^{itm}-\frac{n-2}{n(2W+1)^2}\sum_{m=-2W}^{m=2W}|m|e^{itm}\\
&-\frac{2}{n(2W+1)}\left(\sum_{m=-2W}^{m=-W-1}e^{itm} + \sum_{m=W+1}^{m=2W}e^{itm} \right).
\end{split}
\end{equation}
The function is real and even $f(-t)=f(t)$, since in sums the formula contains only pairs $e^{-ix} + e^{ix}=2\cos(x)$. This ensures that the distribution is symmetric \cite[p. 30, Theorem 3.1.2]{lukacs1970}, Figures \ref{FigActionsDistribution1}, \ref{FigActionsDistribution20}. Therefore,
\begin{equation}
\label{EqCFcos}
\begin{split}
&f(t)=\frac{1+2\sum_{m=1}^{m=2W}\cos(tm)}{2W+1}-\frac{2(n-2)\sum_{m=1}^{m=2W}m\cos(tm)}{n(2W+1)^2}\\
&-\frac{4\sum_{m=W+1}^{m=2W}\cos(tm)}{n(2W+1)}.
\end{split}
\end{equation}
\paragraph{Moments.} We compute \textit{beginning moments} of order $s=1, \dots$, if they exist, using \cite[p. 69, Lemma 2, Equation 11]{gnedenko1949} $\alpha_s=\frac{1}{i^s}[\frac{d^s}{dt^s}f(t)]_{t=0}$ and \textit{central moments} using \cite[p. 69, Equation 13]{gnedenko1949} $\mu_s=\frac{1}{i^s}[\frac{d^s}{dt^s}e^{it\alpha_1}f(t)]_{t=0}$. Examples,
\begin{equation}
\label{EqActionsMean}
\begin{split}
&f'(t)=\frac{-2\sum_{m=1}^{m=2W}m\sin(tm)}{2W+1}+\frac{2(n-2)\sum_{m=1}^{m=2W}m^2\sin(tm)}{n(2W+1)^2}\\
&+\frac{4\sum_{m=W+1}^{m=2W}m\sin(tm)}{n(2W+1)},\\
&\mathrm{mean}=\alpha_1=\frac{f'(0)}{i}=0.
\end{split}
\end{equation}
\begin{equation}
\label{EqActionsVariance}
\begin{split}
&f''(t)=\frac{-2\sum_{m=1}^{m=2W}m^2\cos(tm)}{2W+1}+\frac{2(n-2)\sum_{m=1}^{m=2W}m^3\cos(tm)}{n(2W+1)^2}\\
&+\frac{4\sum_{m=W+1}^{m=2W}m^2\cos(tm)}{n(2W+1)},\\
&\alpha_2=-f''(0)=\frac{2\sum_{m=1}^{m=2W}m^2}{2W+1}-\frac{2(n-2)\sum_{m=1}^{m=2W}m^3}{n(2W+1)^2}-\frac{4\sum_{m=W+1}^{m=2W}m^2}{n(2W+1)}\\
&=\frac{2W(4W+1)}{3}-\frac{2(n-2)W^2}{n}-\frac{2W(7W+1)}{3n}=\frac{2W(W+1)(n-1)}{3n},\\
&\mathrm{variance}=\mu_2=\alpha_2-\alpha_1^2=\frac{2W(W+1)(n-1)}{3n}.
\end{split}
\end{equation}

Let us prove $\frac{d^s}{dt^s}\cos(mt)=m^s\cos(mt+\frac{\pi s}{2})$, useful for computing the moments of higher orders. The \textit{induction basis}: for $s=0,1,2$, it is correct $\cos(mt)$, $-m\sin(mt)$, $-m^2\cos(mt)$. Let us assume it is correct for $2 < s$. Then, for $s+1$, it is $m^{s+1}\cos(mt+\frac{\pi}{2}(s+1))=m^{s+1}[\cos(mt)\cos(\frac{\pi s}{2}+\frac{\pi}{2})-\sin(mt)\sin(\frac{\pi s}{2}+\frac{\pi}{2})]=m^{s+1}[-\cos(mt)\sin(\frac{\pi s}{2})-\sin(mt)\cos(\frac{\pi s}{2})]=-m^{s+1}\sin(mt+\frac{\pi s}{2})$. Explicit differentiation yields the same: $\frac{d}{dt}m^s\cos(mt+\frac{\pi s}{2})=-m^{s+1}\sin(mt+\frac{\pi s}{2})$. The \textit{induction step} is completed. $\blacksquare$

We get
\begin{equation}
\label{EqCFcosSDerivative}
\begin{split}
&\left[\frac{d^s}{dt^s}f(t)\right]_{t=0}= \frac{d^s}{dt^s}\left (\frac{1}{2W+1} \right) + \frac{2\cos(\frac{\pi s}{2})\sum_{m=1}^{m=2W}m^s}{2W+1}\\
&-\frac{2(n-2)\cos(\frac{\pi s}{2})\sum_{m=1}^{m=2W}m^{s+1}}{n(2W+1)^2}-\frac{4\cos(\frac{\pi s}{2})\sum_{m=W+1}^{m=2W}m^s}{n(2W+1)};\\
&\frac{d^s}{dt^s}\left (\frac{1}{2W+1} \right) = \frac{1}{2W+1} \; \mathrm{for} \; s = 0 \; \mathrm{or} \; 0 \; \mathrm{for} \; 0 < s.
\end{split}
\end{equation}
The right side is zero for odd $1 \le s=2q+1, q=0,1,\dots$, since $\cos(\frac{\pi s}{2})=\cos(\pi q + \frac{\pi}{2})=-\sin(\pi q)=0$ is the common multiplier. Thus, odd beginning and central, since $\alpha_1=0$, moments are zeros in agreement with symmetry of $S(m,W,n)$ about $m=0$, see \cite[p. 183, 15.8 Measures of skewness and excess]{cramer1962}.

\paragraph{Distributions of actions in time $i$-slices.} Formulas \ref{EqActionsDistribution} count actions $m$ in $\mathfrak{U}$. Slices, by $i=1,\dots,n$, of $S=(2W+1)^{n-1}$ strategies can be interpreted as time $i$-slices and divided on two groups 1) $i=1$, $n$; 2) $i=2,\dots,n-1$. In the 1- and $n$-slice, each action $m \in [-W, W]$ has $\frac{(2W+1)^{n-1}}{2W+1}=(2W+1)^{n-2}$ entries. In each slice of the second group, each action $m\in[-2W,2W]$ occurs $\frac{Count_B(m,W,n)}{n-2}=\frac{[(2W+1)n-(n-2)|m|-2(2W+1)](2W+1)^{n-3}}{n-2}=(2W+1-|m|)(2W+1)^{n-3}$ times. Checking: $\sum_{m=-2W}^{m=2W}(2W+1-|m|)(2W+1)^{n-3}=(2W+1)^{n-3}[(2W+1)(4W+1)-2W(2W+1)]=(2W+1)^{n-1}$. The following sums will be needed

\begin{equation}
\label{EqActionsMoments_1_2}
\begin{split}
&i=1,\dots,n: \; \sum_{j=1}^{j=S}U_{i,j}=0;\\
&i=1, n: \; \sum_{j=1}^{j=S}|U_{i,j}|=\sum_{m=-W}^{m=W}|m|(2W+1)^{n-2}=W(W+1)(2W+1)^{n-2};\\
&i=1, n: \; \sum_{j=1}^{j=S}U_{i,j}^2=\sum_{m=-W}^{m=W}m^2(2W+1)^{n-2}=\frac{W(W+1)(2W+1)^{n-1}}{3};\\
&i=2,\dots,n-1: \; \sum_{j=1}^{j=S}|U_{i,j}|=\sum_{m=-2W}^{m=2W}|m|(2W+1-|m|)(2W+1)^{n-3}=\\
&=2(2W+1)^{n-3}\left ((2W+1)\sum_{m=1}^{m=2W}m - \sum_{m=1}^{m=2W}m^2 \right )=\\
&=\frac{4W(W+1)(2W+1)^{n-2}}{3};\\
&i=2,\dots,n-1: \; \sum_{j=1}^{j=S}U_{i,j}^2=\sum_{m=-2W}^{m=2W}m^2(2W+1-|m|)(2W+1)^{n-3}=\\
&=2(2W+1)^{n-3} \left ((2W+1)\sum_{m=1}^{m=2W}m^2 - \sum_{m=1}^{m=2W}m^3 \right ) =\\
&=2(2W+1)^{n-3} \left (\frac{W(2W+1)^2(4W+1)}{3} - \frac{(2W)^2(2W+1)^2}{4} \right ) =\\
&=\frac{2W(W+1)(2W+1)^{n-1}}{3}.
\end{split}
\end{equation}

\begin{theorem}
\label{TMPositionsSampleCovariances}
$\forall i,l \in [1,n-1]$, $\sum_{j=1}^{j=S} W_{i,j} W_{l,j} = \frac{W(W+1)(2W+1)^{n-1}}{3} \delta_{i,l}$, where the Kronecker delta $\delta_{i,l} = \left\{ \begin{array}{ll} 0, & \mbox{if $i \ne l$}\\ 1, & \mbox{if $i = l$} \end{array} \right.$. The sum is zero, if $i=n \vee l=n$.
\end{theorem}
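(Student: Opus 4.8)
The plan is to exploit the independence of positions established in Theorem~\ref{TMNumberOfStrategies}, where the first $n-1$ coordinates $W_{1,j},\dots,W_{n-1,j}$ range over every combination of values in $\{-W,\dots,W\}$ while $W_{n,j}=0$ is fixed. I would split the verification into the three natural cases $i=n\vee l=n$, $i=l\in[1,n-1]$, and $i\ne l$ with $i,l\in[1,n-1]$.

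The degenerate case is immediate: if $i=n$ or $l=n$, then $W_{n,j}=0$ for every $j$, so each product $W_{i,j}W_{l,j}$ vanishes and the sum is zero, settling the final clause. For the diagonal case $i=l\in[1,n-1]$, uniformity of each free coordinate means the value $W_{i,j}=m$ occurs for exactly $(2W+1)^{n-2}$ strategies, since the remaining $n-2$ free coordinates stay arbitrary. Hence $\sum_{j=1}^{j=S}W_{i,j}^2=(2W+1)^{n-2}\sum_{m=-W}^{W}m^2$, and the closed form $\sum_{m=1}^{W}m^2=\frac{W(W+1)(2W+1)}{6}$ yields $\frac{W(W+1)(2W+1)^{n-1}}{3}$; this coincides with the $i=1,n$ line of Equations~\ref{EqActionsMoments_1_2} carried over from actions to positions, which is legitimate because $U_{1,j}=W_{1,j}$ and the positions are identically distributed across all $i\in[1,n-1]$.

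For the off-diagonal case $i\ne l$, the two coordinates are independent, so each ordered pair of values is realized by $(2W+1)^{n-3}$ strategies and the sum factors as $(2W+1)^{n-3}\bigl(\sum_{m=-W}^{W}m\bigr)\bigl(\sum_{m'=-W}^{W}m'\bigr)$; both inner sums vanish by antisymmetry about zero, so the cross term is zero. Assembling the three cases produces the Kronecker-delta expression.

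The only real subtlety---and the step I would watch most carefully---is justifying the multiplicities $(2W+1)^{n-2}$ and $(2W+1)^{n-3}$, i.e.\ that fixing one or two coordinates leaves the others free to vary independently. This is precisely the content of Theorem~\ref{TMNumberOfStrategies}, after which everything reduces to the standard arithmetic of $\sum m$ and $\sum m^2$.
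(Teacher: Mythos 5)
Your proposal is correct and follows essentially the same route as the paper: the zero $n$-slice handles the degenerate case, the $(2W+1)^{n-2}$ multiplicity of each value in a single slice gives the diagonal via $\sum m^2$, and the $(2W+1)^{n-3}$ multiplicity of each ordered pair makes the off-diagonal sum factor into $\sum_{m=-W}^{W} m = 0$. The paper phrases the off-diagonal step by fixing $W_{1}$ and summing over the second slice rather than writing the product of two sums outright, but the computation is identical.
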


\begin{proof}
In the 1-slice, each position from $[-W, W]$ is repeated $(2W+1)^{n-2}$ times. Since we consider $S=(2W+1)^{n-1}$ unique vectors of positions, for $n>2$, any $W_1$ associates with $(2W+1)^{n-3}$ positions $l$ of each type $[-W,W]$ in the 2-slice. The sum of products of the constant $W_1$ to these values is zero: $\sum_{l=-W}^{l=W}W_1 l (2W+1)^{n-3}=W_1 (2W+1)^{n-3}\sum_{l=-W}^{l=W}l=0$. $W_1$ is selected arbitrary making the conclusion valid for any $[-W, W]$: $\sum_{j=1}^{j=S}W_{1,j}W_{2,j}=0$. Similar argumentation can be applied to any pair of distinct i-slices, $i=1,\dots,n-1$. Lexicographical sorting of positions by values in two slices, ignoring others, helps to see it. For a pair including $n$-slice, it is trivial because the latter is zero vector. Therefore, $\forall i \ne l \vee i=n \vee l=n, \; \sum_{j=1}^{j=S}W_{i,j} W_{l,j}=0$. $\forall i=l \ne n$, $\sum_{j=1}^{j=S}W_{i,j}W_{l,j}=\sum_{j=1}^{j=S}W_{i,j}^2=(2W+1)^{n-2}\sum_{l=-W}^{l=W}l^2=\frac{W(W+1)(2W+1)^{n-1}}{3}$. To shorten the formula for $i,l=1,\dots,n-1$ using the Kronecker delta is natural.
\end{proof}
In other words, the columns of the transposed position matrix $(\mathcal{W}_{n \times S})^T$ are mutually orthogonal vectors. The sums $\sum_{j=1}^{j=S}U_{i,j}^2$ for $i=1,n$ and $i=2,...,n-1$ are given by Equations \ref{EqActionsMoments_1_2}. They play the role of sample variances of actions in $i$-slices times $(S-1)$ or $S$. In contrast, $\sum_{j=1}^{j=S}U_{i,j}U_{i+l,j}$ for $i=1,\dots,n-1$ and $l=1,\dots,n-i$ play the role of sample covariances times $(S-1)$ or $S$.

\begin{theorem}
\label{TMActionsSampleCovariances}
For  $i=1,\dots,n-1, \; l=1,\dots,n-i$, $\sum_{j=1}^{j=S}U_{i,j}U_{i+l,j}=0$ for $l>1$ and $-\frac{W(W+1)(2W+1)^{n-1}}{3}$ for $l=1$.
\end{theorem}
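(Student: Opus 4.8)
The plan is to express each action as the first difference of positions, $U_{i,j} = W_{i,j} - W_{i-1,j}$ (valid for all $i \in [1,n]$ under the conventions $W_{0,j} = W_{n,j} = 0$ introduced before Equation \ref{EqPL}), and then to reduce the covariance sum entirely to the orthogonality relation of Theorem \ref{TMPositionsSampleCovariances}. Writing $K = \frac{W(W+1)(2W+1)^{n-1}}{3}$ for the common diagonal value, that theorem reads $\sum_{j=1}^{S} W_{p,j} W_{q,j} = K \delta_{p,q}$ whenever $p, q \in [1, n-1]$, while the sum vanishes if either index equals $0$ (since $W_{0,j}=0$) or $n$ (since $W_{n,j}=0$). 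So the whole argument is just the transport of this diagonal structure through the linear difference map $\boldsymbol{W} \mapsto \boldsymbol{U}$.

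First I would substitute both differences and expand the product into four position-covariance sums:
$$\sum_{j=1}^{S} U_{i,j} U_{i+l,j} = \sum_{j=1}^{S} \left(W_{i,j} - W_{i-1,j}\right)\left(W_{i+l,j} - W_{i+l-1,j}\right),$$
which equals $\left(\sum_j W_{i,j} W_{i+l,j}\right) - \left(\sum_j W_{i,j} W_{i+l-1,j}\right) - \left(\sum_j W_{i-1,j} W_{i+l,j}\right) + \left(\sum_j W_{i-1,j} W_{i+l-1,j}\right)$. The four terms carry the index pairs $(i, i+l)$, $(i, i+l-1)$, $(i-1, i+l)$, $(i-1, i+l-1)$ with signs $+,-,-,+$, and by Theorem \ref{TMPositionsSampleCovariances} each evaluates to $K$ exactly when its two indices coincide and both lie in $[1, n-1]$, and to $0$ otherwise.

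Next I would run the index-matching bookkeeping. For $l > 1$ none of the four pairs can be equal (equality would force $l \in \{-1, 0, 1\}$), so every term vanishes and the sum is $0$. For $l = 1$ the only coinciding pair is the second, $(i, i)$; its term is $K$, and since it enters with a minus sign the total is $-K = -\frac{W(W+1)(2W+1)^{n-1}}{3}$, as claimed. Throughout I would note that the boundary values $i - 1 = 0$ (when $i=1$) and $i + l = n$ only force additional terms to zero through $W_{0,j} = W_{n,j} = 0$, so they never disturb the count; as a sanity check against Equations \ref{EqActionsMoments_1_2}, the edge case $l=1$, $i=n-1$ reproduces $\sum_j U_{n-1,j} U_{n,j} = -\sum_j W_{n-1,j}^2 = -K$.

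There is essentially no computational obstacle: once the expansion is written down, the result is pure index bookkeeping on the Kronecker deltas. The single point requiring care is the endpoint handling---making sure that when an index slides to $0$ or $n$ the corresponding term is read as zero rather than as an in-range diagonal contribution---which the conventions $W_{0,j} = W_{n,j} = 0$ resolve automatically, so no separate boundary argument is needed.
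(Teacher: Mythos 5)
Your proof is correct and follows essentially the same route as the paper: expand each action as a first difference of positions, reduce the sum to four position-covariance terms, and apply Theorem \ref{TMPositionsSampleCovariances} to see that only the $(i,i)$ pair (arising when $l=1$, with a minus sign) survives. Your explicit handling of the boundary indices $0$ and $n$ is slightly more careful than the paper's, but the argument is the same.
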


\begin{proof}
$\sum_{j=1}^{j=S}U_{i,j}U_{i+l,j}=\sum_{j=1}^{j=S}(W_{i,j}-W_{i-1,j})(W_{i+l,j}-W_{i+l-1,j}) = \\
 -\sum_{j=1}^{j=S}W_{i,j}W_{i+l-1,j} + \sum_{j=1}^{j=S}W_{i,j}W_{i+l,j} - 
\sum_{j=1}^{j=S}W_{i-1,j}W_{i+l,j} + \\
\sum_{j=1}^{j=S}W_{i-1,j}W_{i+l-1,j}$. By Theorem \ref{TMPositionsSampleCovariances}, the last three sums are zeros. The first sum is zero for $l>1$ and $-\frac{W(W+1)(2W+1)^{n-1}}{3}$ for $l=1$.
\end{proof}

\begin{theorem}
\label{TMAbsolutePositionsSampleCovariances}
$\forall i,l \in [1,n-1] \wedge i\ne l$, $\sum_{j=1}^{j=S}|W_{i,j}||W_{l,j}|=W^2(W+1)^2(2W+1)^{n-3}$. The sum is zero, if $i=n \vee l=n$.
\end{theorem}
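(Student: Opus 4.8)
The plan is to reuse the independence-and-factorization strategy from the proof of Theorem \ref{TMPositionsSampleCovariances}, with the key difference that $\sum_{a=-W}^{W}|a|$ does \emph{not} vanish (unlike $\sum_{a=-W}^{W}a=0$, which produced the cancellation there), so instead of collapsing to zero the sum will factor into a nonzero product. First I would dispose of the degenerate case: if $i=n$ or $l=n$, then $W_{n,j}=0$ for every $j$, so one factor in each summand is zero and the whole sum is zero.

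For the main case $i,l\in[1,n-1]$ with $i\ne l$, the structural fact I would invoke is the joint uniformity and independence of positions across distinct slices guaranteed by Theorem \ref{TMNumberOfStrategies}: the coordinates $W_{1,j},\dots,W_{n-1,j}$ range independently over $[-W,W]$ while the $n$th is fixed to zero. Fixing the two coordinates at slices $i$ and $l$ leaves $n-3$ free coordinates, so for each ordered pair of values $(a,b)\in[-W,W]^2$ there are exactly $(2W+1)^{n-3}$ strategies $j$ with $W_{i,j}=a$ and $W_{l,j}=b$. This lets me rewrite the sum over strategies as a double sum over the attainable values, which then separates:
\[
\sum_{j=1}^{j=S}|W_{i,j}||W_{l,j}| = (2W+1)^{n-3}\sum_{a=-W}^{W}\sum_{b=-W}^{W}|a|\,|b| = (2W+1)^{n-3}\left(\sum_{a=-W}^{W}|a|\right)^{2}.
\]
Since $\sum_{a=-W}^{W}|a|=2\sum_{a=1}^{W}a=W(W+1)$, this yields $W^{2}(W+1)^{2}(2W+1)^{n-3}$, the claimed value.

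The only genuine subtlety is justifying the count $(2W+1)^{n-3}$ when two coordinates are pinned, i.e.\ the pairwise independence of the two-slice projection — exactly the point the author handled in Theorem \ref{TMPositionsSampleCovariances} by lexicographically sorting positions in two slices while ignoring the rest. Everything past that is the elementary evaluation of $\sum_{a=1}^{W}a$, so I expect no real obstacle beyond correctly transferring that independence argument to the present absolute-value sum.
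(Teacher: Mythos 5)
Your proposal is correct and follows essentially the same route as the paper: both arguments rest on the fact that each ordered pair of values in $[-W,W]^2$ at two distinct slices $i,l \in [1,n-1]$ occurs exactly $(2W+1)^{n-1}/(2W+1)^2=(2W+1)^{n-3}$ times, after which the double sum factors into $\bigl(\sum_{a=-W}^{W}|a|\bigr)^2=W^2(W+1)^2$, with the $n$-slice case trivial because that slice is the zero vector.
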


\begin{proof}
In a pair of $i$-, $l$-slices, $i,l \in [1, n-1]$, $i\ne l$, the number of unique pairs $(W_{i,j},W_{l,j})$ taken once is $(2W+1)^2$. For them, $\sum_{W_{i,j}=-W}^{W_{i,j}=W}\sum_{W_{l,j}=-W}^{W_{l,j}=W}|W_{i,j}||W_{l,j}|\\
=\sum_{W_{i,j}=-W}^{W_{i,j}=W}|W_{i,j}|\sum_{W_{l,j}=-W}^{W_{l,j}=W}|W_{l,j}|=4\sum_{W_{i,j}=1}^{W_{i,j}=W}W_{i,j}\sum_{W_{l,j}=1}^{W_{l,j}=W}W_{l,j}=W^2(2W+1)^2$. Each pair is repeated $\frac{(2W+1)^{n-1}}{(2W+1)^2}=(2W+1)^{n-3}$ times, making the total $W^2(W+1)^2(2W+1)^{n-3}$ or zero, if $i=n \vee l=n$: the $n$-slice is zero vector.
\end{proof}

\begin{theorem}
\label{TMAbsoluteActionsSampleCovariances}
For $1 \le W$, the following formulas take place
\begin{equation*}
\begin{split}
&\mathrm{A}: n=2, \; \sum_{j=1}^{j=S}|U_{1,j}||U_{2,j}|=\frac{1}{3}W(W+1)(2W+1);\\
&\mathrm{B}: 2 < n, \; \sum_{j=1}^{j=S}|U_{1,j}||U_{2,j}|=\frac{3}{2}W^2(W+1)^2(2W+1)^{n-3};\\
&\mathrm{C}: 2 < n, \; \sum_{j=1}^{j=S}|U_{1,j}||U_{n,j}|=W^2(W+1)^2(2W+1)^{n-3};\\
&\mathrm{D}: 3 < n, \; 2 < i < n, \; \sum_{j=1}^{j=S}|U_{1,j}||U_{i,j}|=\frac{4}{3}W^2(W+1)^2(2W+1)^{n-3};\\
&\mathrm{E}: 1 < i < n - 1 \; \sum_{j=1}^{j=S}|U_{i,j}||U_{i+1,j}|=\frac{1}{15}W(28W^3+56W^2+27W-1)(2W+1)^{n-3};\\
&\mathrm{F}: 1 < i < n - 2, \; i + 1 < r < n, \; \sum_{j=1}^{j=S}|U_{i,j}||U_{r,j}|=\frac{16}{9}W^2(W+1)^2(2W+1)^{n-3}.
\end{split}
\end{equation*}
\end{theorem}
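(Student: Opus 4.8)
The plan is to reduce each of the six sums to a finite sum over position values in $[-W,W]$, using the fact that for any set of $k$ distinct slices chosen among $1,\dots,n-1$, each assignment of positions to those slices is realized by exactly $(2W+1)^{\,n-1-k}$ of the $S=(2W+1)^{n-1}$ strategies: the $n-1-k$ unconstrained slices range independently over $2W+1$ values, while slice $n$ is pinned to $0$ (this is the independence behind Theorem \ref{TMNumberOfStrategies}). Writing each action as a position difference $U_{i,j}=W_{i,j}-W_{i-1,j}$, with $U_{1,j}=W_{1,j}$ and $U_{n,j}=-W_{n-1,j}$, the product $|U_{i,j}||U_{r,j}|$ depends only on the positions in $\{i-1,i\}\cup\{r-1,r\}$; once those slices are checked to be distinct and contained in $[1,n-1]$, the $S$-sum factors through the stated multiplicity. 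The single arithmetic fact driving every closed form is the master identity $g(b):=\sum_{a=-W}^{W}|b-a|=W(W+1)+b^2$ for $|b|\le W$, proved by splitting the sum at $a=b$ and summing the two arithmetic progressions.

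The four ``soft'' cases follow at once. For A, $n=2$ forces $U_{2,j}=-U_{1,j}=-W_{1,j}$, so $\sum_j|U_{1,j}||U_{2,j}|=\sum_{m=-W}^{W}m^2=\tfrac13 W(W+1)(2W+1)$. Case C is immediate from Theorem \ref{TMAbsolutePositionsSampleCovariances}, since $|U_{1,j}||U_{n,j}|=|W_{1,j}||W_{n-1,j}|$ and slices $1,n-1$ are distinct for $n>2$. For B the relevant slices are $\{1,2\}$, giving $(2W+1)^{n-3}\sum_{a}|a|\,g(a)$; using $g(a)=W(W+1)+a^2$ and $\sum_{a=-W}^{W}|a|^3=\tfrac12 W^2(W+1)^2$ yields $\tfrac32 W^2(W+1)^2(2W+1)^{n-3}$. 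In D the slice $1$ is disjoint from $\{i-1,i\}$, so the sum factors as $(2W+1)^{n-4}\big(\sum_a|a|\big)\big(\sum_{b,c}|c-b|\big)=(2W+1)^{n-4}\,W(W+1)\,Q$, where $Q:=\sum_{b}g(b)=\tfrac43 W(W+1)(2W+1)$ --- the same quantity that appears as the middle-slice $\sum_j|U_{i,j}|$ in Equations \ref{EqActionsMoments_1_2} divided by $(2W+1)^{n-3}$. Case F has two disjoint consecutive pairs $\{i-1,i\}$ and $\{r-1,r\}$, so it factors completely as $(2W+1)^{n-5}Q^2=\tfrac{16}{9}W^2(W+1)^2(2W+1)^{n-3}$.

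The crux is Case E, the only one needing a genuinely new summation. Here the three consecutive slices $i-1,i,i+1$ are all interior, and $\sum_j|U_{i,j}||U_{i+1,j}|=(2W+1)^{n-4}\sum_{b}g(b)^2$, because summing $|b-a|$ over $a$ and $|c-b|$ over $c$ each reproduce $g(b)$. Expanding $g(b)^2=\big(W(W+1)+b^2\big)^2$ and invoking $\sum_{b=-W}^{W}b^4=\tfrac{1}{15}W(W+1)(2W+1)(3W^2+3W-1)$ collapses $\sum_b g(b)^2$ to $\tfrac{1}{15}W(28W^3+56W^2+27W-1)(2W+1)$; multiplying by $(2W+1)^{n-4}$ gives the claimed formula.

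I expect the main obstacle to be twofold. First is the bookkeeping that verifies, under each index restriction ($2<n$ in B and C, $3<n$ and $2<i<n$ in D, $1<i<n-1$ in E, $1<i<n-2$ with $i+1<r<n$ in F), that the slices $\{i-1,i,r-1,r\}$ are genuinely distinct and lie in $[1,n-1]$ --- it is precisely these hypotheses that pin down the correct exponent of $(2W+1)$. Second is the degree-four power sum in Case E, which is the only place where both $g(b)=W(W+1)+b^2$ and the quartic Faulhaber formula are simultaneously required to produce the cubic-in-$W$ coefficient $28W^3+56W^2+27W-1$.
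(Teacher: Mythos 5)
Your proof is correct: I checked all six closed forms, and the master identity $g(b)=\sum_{a=-W}^{W}|b-a|=W(W+1)+b^2$ together with the multiplicity count $(2W+1)^{n-1-k}$ for $k$ pinned slices does yield A--F exactly, including the quartic collapse $\sum_{b=-W}^{W}g(b)^2=\tfrac{1}{15}W(W+1)(2W+1)(28W^2+28W-1)$ in case E (note $(W+1)(28W^2+28W-1)=28W^3+56W^2+27W-1$). The route differs from the paper's in a worthwhile way. Both arguments ultimately rest on the same fact --- the positions $W_{1,j},\dots,W_{n-1,j}$ range uniformly and independently over $[-W,W]$ with $W_{0,j}=W_{n,j}=0$ --- but the paper enumerates over \emph{action} values, using the per-slice action counts $(2W+1-|m|)(2W+1)^{n-3}$ and justifying the factorizations in cases E and F by an appeal to ``lexicographical sorting of strategies'' that ``uncovers the association pattern repeated $(2W+1)^{n-4}$ times''; its cases B and E then require fairly heavy direct expansions (a cubic sum in $l$ for B, a triple sum $\sum_l l\sum_i((i-W)^2+W(W+1))$ for E). You instead marginalize over the free position coordinates, so every case reduces to a product or convolution of the single function $g$, the factorization claims become the trivial statement that unconstrained coordinates sum out independently, and the hardest computation is a standard Faulhaber sum. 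This both shortens the computations and makes rigorous the one step the paper leaves informal; your case C also correctly reuses Theorem \ref{TMAbsolutePositionsSampleCovariances} where the paper re-derives it inline. The index bookkeeping you flag as the main risk is handled correctly: under each stated restriction the relevant slices are distinct and lie in $[1,n-1]$, which is exactly what fixes the exponents $n-3$, $n-4$, $n-5$ before renormalization.
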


\begin{proof}
$\forall j\in[1,S] \wedge 2 \le n, \; U_{1,j}=W_{1,j}$. For $n=2$, $U_{2,j}=-U_{1,j}=-W_{1,j}$,
\begin{equation*}
\begin{split}
&\mathrm{A:} \; \sum_{j=1}^{j=S}|U_{1,j}||U_{2,j}|=\sum_{j=1}^{j=S}W_{1,j}^2=\sum_{l=-W}^{l=W}l^2=2\sum_{l=1}^{l=W}l^2=\frac{W(W+1)(2W+1)}{3}.
\end{split}
\end{equation*}
For $n>2$, there are $(2W+1)^{n-2}$ values of $U_{1,j}=W_{1,j}$ of each type $[-W,W]$ and $-W \le U_{1,j}+U_{2,j} \le W$ or $-W-W_{1,j} \le U_{2,j} \le W-W_{1,j}$. The $(2W+1)^{n-3}$ values $U_{1,j}=-W$ are followed once by each $U_{2,j} \in [0,2W]$. The $(2W+1)^{n-3}$ values $U_{1,j}=-W+1$ are followed once by each $U_{2,j} \in [-1,2W-1]$. $\dots$ The $(2W+1)^{n-3}$ values $U_{1,j}=W$ are followed once by each $U_{2,j} \in [-2W,0]$. Thus,
\begin{equation*}
\begin{split}
&\mathrm{B:} \; \sum_{j=1}^{j=S}|U_{1,j}||U_{2,j}|=(2W+1)^{n-3}\left ( \sum_{m=0}^{m=2W}|-W||m| + \dots +\sum_{m=-2W}^{m=0}|W||m| \right )=\\
&=(2W+1)^{n-3}\sum_{l=0}^{l=2W}\sum_{m=-l}^{m=2W-l}|-W+l||m|=\\
&=2(2W+1)^{n-3}\sum_{l=0}^{l=W-1}(W-l)\sum_{m=-l}^{m=2W-l}|m|=\\
&=2(2W+1)^{n-3}\sum_{l=0}^{l=W-1}(W-l)\left (\sum_{m=1}^{m=l}m + \sum_{m=1}^{m=2W-l}m \right )=\\
&=2(2W+1)^{n-3}\sum_{l=0}^{l=W-1}(W-l)\left ( \frac{l(l+1)}{2} + \frac{(2W-l)(2W-l+1)}{2} \right )=\\
&=2(2W+1)^{n-3}\sum_{l=0}^{l=W-1}(-l^3+3Wl^2-(4W^2+W)l+W^2(2W+1))=\\
&=-2(2W+1)^{n-3}\frac{(W-1)^2W^2}{4}+6W(2W+1)^{n-3}\frac{(W-1)W(2W-1)}{6}-\\
&-2(2W+1)^{n-3}W(4W+1)\frac{(W-1)W}{2}+2W^3(2W+1)^{n-2}=\\
&=\frac{3}{2}W^2(W+1)^2(2W+1)^{n-3}.
\end{split}
\end{equation*}
Since $U_{n,j}=-W_{n-1,j}$ is uniformly distributed, each of $2W+1$ values $[-W, W]$ of $U_{1,j}$ represented by $(2W+1)^{n-2}$ times associates with $(2W+1)^{n-3}$ actions from $[-W,W]$ and for $2 < n$
\begin{equation*}
\begin{split}
&\mathrm{C:} \; \sum_{j=1}^{j=S}|U_{1,j}||U_{n,j}|=(2W+1)^{n-3}\sum_{l=-W}^{l=W}\sum_{m=-W}^{m=W}|l||m|=\\
&=(2W+1)^{n-3}\sum_{l=-W}^{l=W}|l|\sum_{m=-W}^{m=W}|m|=W^2(W+1)^2(2W+1)^{n-3}.
\end{split}
\end{equation*}
Using distribution of actions in time $i$-slices, we get for $3 < n$, and $2 < i < n$
\begin{equation*}
\begin{split}
&\mathrm{D:} \; \sum_{j=1}^{j=S}|U_{1,j}||U_{i,j}|=\sum_{l=-W}^{l=W}|l|\frac{\sum_{m=-2W}^{m=2W}|m|(2W+1-|m|)(2W+1)^{n-3}}{2W+1}=\\
&=(2W+1)^{n-4}W(W+1)\left ((2W+1)\sum_{m=-2W}^{m=2W}|m|- \sum_{m=-2W}^{m=2W}m^2\right )=\\
&=\frac{4}{3}W^2(W+1)^2(2W+1)^{n-3}.
\end{split}
\end{equation*}
For $1 < i < n - 1$, each $i$-slice, containing $(2W+1-|l|)(2W+1)^{n-3}$ actions $l$, is followed by a $(i+1)$-slice with the same actions and counts. Actions associations between neighboring slices are not arbitrary. For $W=1$, $U_{i,j}=-2$ is followed by $U_{i+1,j}=0, 1, 2$. $U_{i,j}=-1$ creates $W_{i,j}=-1$ or $W_{i,j}=0$ with $U_{i+1,j}=0, 1, 2$ or $-1, 0, 1$. Lexicographical sorting of strategies by $i$- and $(i+1)$-actions uncovers the association pattern repeated $(2W+1)^{n-4}$ times. Known $\sum_{i=1}^{i=n}n^4=\frac{1}{30}n(n+1)(2n+1)(3n^2+3n-1)$ and the sums of powers 1, 2, 3 of the natural numbers are applied. An elegant method for arbitrary powers is explained by Etherington \cite{etherington1932}.
\begin{equation*}
\begin{split}
&\mathrm{E:} \; \sum_{j=1}^{j=S}|U_{i,j}||U_{i+1,j}|=(2W+1)^{n-4}\sum_{l=-2W}^{l=2W}|l|\sum_{i=0}^{i=2W-|l|}\sum_{m=-i}^{m=2W-i}|m|=\\
&=2(2W+1)^{n-4}\sum_{l=1}^{l=2W}l\sum_{i=0}^{i=2W-l}\left (\sum_{m=1}^{m=i}m + \sum_{m=1}^{m=2W-i}m \right )=\\
&=2(2W+1)^{n-4}\sum_{l=1}^{l=2W}l\sum_{i=0}^{i=2W-l}\left ((i-W)^2+W(W+1)\right )=\\
&=\frac{1}{3}(2W+1)^{n-4}\sum_{l=1}^{l=2W}l(2W+1-l)[2l^2-(2W+1)l+8W(W+1)]=\\
&=\frac{1}{15}W(28W^3+56W^2+27W-1)(2W+1)^{n-3}.
\end{split}
\end{equation*}
For a pair of non-neighboring slices $1 < i < n - 2, \; i+1 < r < n$, actions dependence is "forgotten". Again, lexicographical sorting of strategies by actions in $i$- and $r$-slice uncovers the pattern repeated $(2W+1)^{n-4}$ times
\begin{equation*}
\begin{split}
&\mathrm{F:} \; \sum_{j=1}^{j=S}|U_{i,j}||U_{r,j}|=(2W+1)^{n-4}\sum_{l=-2W}^{l=2W}|l|(2W+1-|l|)\sum_{m=-2W}^{m=2W}|m|(2W+1-|m|)=\\
&=\frac{8}{3}W(W+1)(2W+1)^{n-3}\sum_{l=1}^{l=2W}l(2W+1-l)=\frac{16}{9}W^2(W+1)^2(2W+1)^{n-3}.
\end{split}
\end{equation*}
\end{proof}

For Cartesian products $\{n=1..6\} \times \{W=1..10\}$, and $\{n=7\} \times \{W=1..4\}$, a C++ program directly building the strategies and counting their actions and products has computed the sums of Theorem \ref{TMAbsoluteActionsSampleCovariances} without exceptions corresponding to the formulas A - F. A few illustrations are for your attention.

Theorem \ref{TMAbsoluteActionsSampleCovariances} A, $n=2$, $(W, \sum_{j=1}^{j=S}|U_{1,j}||U_{2,j}|)$: $(1,2)$, $(2,10)$, $(3,28)$, $(4,60)$, $(5,110)$, $(6,182)$, $(7,280)$, $(8,408)$, $(9,570)$, $(10,770)$.

For $n=4$, $W=1$ with the formula letter following the sum value
\[
\begin{array}{cccc}
&U_{2,j}&U_{3,j}&U_{4,j}\\
U_{1,j}&18\;\mathrm{B}&16\;\mathrm{D}&12\;\mathrm{C}\\
U_{2,j}&&22\;\mathrm{E}&16\;\mathrm{D}\\
U_{3,j}&&&18\;\mathrm{B}\\
\end{array}
\]
See also $\mathcal{W}^T$ and $\mathcal{U}^T$ presented earlier for this case . For $n=7$, $W=3$,
\[
\begin{array}{cccccccc}
&U_{2,j}&U_{3,j}&U_{4,j}&U_{5,j}&U_{6,j}&U_{7,j}\\
U_{1,j}&518616\;\mathrm{B}&460992\;\mathrm{D}&460992\;\mathrm{D}&460992\;\mathrm{D}&460992\;\mathrm{D}&345744\;\mathrm{C}\\
U_{2,j}&&643468\;\mathrm{E}&614656;\mathrm{F}&614656\;\mathrm{F}&614656\;\mathrm{F}&460992\;\mathrm{D}\\
U_{3,j}&&&643468\;\mathrm{E}&614656\;\mathrm{F}&614656\;\mathrm{F}&460992\;\mathrm{D}\\
U_{4,j}&&&&643468\;\mathrm{E}&614656\;\mathrm{F}&460992\;\mathrm{D}\\
U_{5,j}&&&&&643468\;\mathrm{E}&460992\;\mathrm{D}\\
U_{6,j}&&&&&&518616\;\mathrm{B}\\
\end{array}
\]
There is an interpretation for remembering location of Formulas A - F. Formula A is applied only for $n=2$. For $4 \le n$, the first row (B, D, ..., D, C) rotates around the "origin" C 90 degrees counterclockwise making a symmetrical right angle. For $n=3$, the angle B-C-B has no D. For $5 \le n$, the second row (E, F, ..., F, D) rotates around the "origin", right most F, 90 degrees counterclockwise also making a symmetrical right angle. For $n=4$, there is no inner angle but single E, see above. Creation of nested angles is repeated until the single E, for even $n$, or last angle E-F-E, for odd $n$. The sum of these $\frac{n(n-1)}{2}$ elements is equal to $2\mathrm{B}+\mathrm{C}+2(n-3)\mathrm{D}+(n-3)\mathrm{E}+\frac{(n-4)(n-3)}{2}\mathrm{F}$. The $n \times n$ matrix is symmetric and the sum of the off-diagonal elements is the double: $4\mathrm{B}+2\mathrm{C}+4(n-3)\mathrm{D}+2(n-3)\mathrm{E}+(n-4)(n-3)\mathrm{F}$. The diagonal is in Equations \ref{EqActionsMoments_1_2}.

\section{Vector Properties of Trading Strategies}

The system of vectors $\mathfrak{U}$ is linearly dependent: one, d.n.s., is $\boldsymbol{0}$ \cite[p. 46, Lemma 14.3]{voevodin1980}, and $1 < n < (2W+1)^{n-1}$ \cite[p. 51, Basis]{voevodin1980}, \cite[p. 14, Theorem 2]{halmos1987}. We can find in $\mathfrak{U}$  a linearly independent system \cite[p. 45, Lemma 14.1]{voevodin1980}.

\begin{lemma}
\label{LAActionsRank}
The rank of the span of $\mathfrak{U}$ is less than $n$.
\end{lemma}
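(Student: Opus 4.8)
The plan is to produce a single nonzero linear functional that vanishes on all of $\mathfrak{U}$; its kernel is then a proper hyperplane containing the whole span, which immediately caps the rank below $n$. First I would invoke the defining boundary conditions of $\mathfrak{U}$, namely $W_{0,j}=W_{n,j}=0$ for every $j\in[1,S]$. Combined with the relation $U_{i,j}=W_{i,j}-W_{i-1,j}$, the coordinate sum telescopes: $\sum_{i=1}^{n}U_{i,j}=\sum_{i=1}^{n}(W_{i,j}-W_{i-1,j})=W_{n,j}-W_{0,j}=0$. Writing $\boldsymbol{1}$ for the all-ones column vector in $\mathbb{R}^n$, this says $\boldsymbol{1}^T\boldsymbol{U}_j=0$ for every strategy, so each $\boldsymbol{U}_j$ lies in the hyperplane $H=\{\boldsymbol{x}\in\mathbb{R}^n:\boldsymbol{1}^T\boldsymbol{x}=0\}$.

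Next I would observe that $H$ is the kernel of the nonzero functional $\boldsymbol{x}\mapsto\boldsymbol{1}^T\boldsymbol{x}$, hence a subspace of dimension $n-1$. Because the span of any subset of a subspace is itself contained in that subspace, $\mathrm{span}(\mathfrak{U})\subseteq H$, and therefore the rank of the span is at most $\dim H=n-1$, which is strictly less than $n$. That completes the argument.

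There is no genuinely hard step here: the only items to verify are the telescoping cancellation, which uses nothing but the two boundary zeros, and the fact that $\boldsymbol{1}\ne\boldsymbol{0}$, so that $H$ is a proper subspace rather than all of $\mathbb{R}^n$. I would append one optional remark on sharpness. The bound $n-1$ is actually attained once the parameters are non-degenerate: already the $W=1$, $n=3$ strategies listed after Theorem \ref{TMNumberOfStrategies}, for instance $(1,-1,0)^T$ and $(0,1,-1)^T$, are linearly independent and lie in $H$, so they span all of $H$ and force the rank to equal $n-1$. The lemma, however, asserts only the strict inequality, and that follows at once from membership in $H$.
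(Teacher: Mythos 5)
Your argument is correct and is essentially the paper's own proof: the paper likewise uses the telescoping identity $\sum_{i=1}^{n}U_{i,j}=W_{n,j}-W_{0,j}=0$ to exhibit a nonzero vector orthogonal to every $\boldsymbol{U}_j$ (there the flat price vector $\boldsymbol{P}=(P,\dots,P)^T$, which is just a positive scalar multiple of your $\boldsymbol{1}$), and concludes that the span lies in its orthogonal hyperplane, hence has rank less than $n$. Your closing remark on sharpness matches the paper's separate Theorem \ref{TMActionsRank}, which establishes that the rank equals $n-1$.
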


\begin{proof}
$\forall j$, $\sum_{i=1}^{i=n}U_{i,j}=0$. Multiplying it by $0 < P$ yields $P\sum_{i=1}^{i=n}U_{i,j}=\sum_{i=1}^{i=n}PU_{i,j}=\boldsymbol{P}^T\boldsymbol{U}_j=0$. The $n$-dimensional $\boldsymbol{P}$ is orthogonal to \textit{each} $\boldsymbol{U}_j \in \mathfrak{U}$, $\boldsymbol{P} \; \perp \; \mathfrak{U}$ \cite[p. 92 - 94, Orthogonality]{voevodin1980}, and to the span of $\mathfrak{U}$. The rank of the span is less than $n$.
\end{proof}

The $\boldsymbol{P}=(P_1=P, \dots, P_n=P)^T$ is interpreted as a flat price.

\paragraph{Orthogonal vectors of $\mathfrak{U}$.} By Lemma \ref{LAActionsRank}, for $n=2$, all $2W+1$ vectors of $\mathfrak{U}$ are collinear. For $n=3$, all $(2W+1)^2$ vectors of $\mathfrak{U}$ are coplanar with the \textit{orthogonal basis} $\{(1,0,-1)^T,(1,-2,1)^T\}\in \mathfrak{U}$. Let $(*)_{\kappa}$ is a chain empty for $\kappa=0$: $(0)_3=(0,0,0)$, $(1,-1)_2=(1,-1,1,-1)$, $((0)_0,1,(0)_1,-1,(0)_0)=(1,0,-1)$. For $2 \le n$, $\eta=0,\dots,\lfloor \frac{n-2}{2} \rfloor$, the $\lfloor \frac{n-2}{2} \rfloor + 1$ $\eta$-vectors $((0)_{\eta},1,(0)_{n-2-2\eta},-1,(0)_{\eta})^T \in \mathfrak{U}$ are \textit{mutually orthogonal}. For $4 \le n$, $\lambda = 0, \dots, \lfloor \frac{n}{4} \rfloor - 1$, the $\lfloor \frac{n}{4} \rfloor$ $\lambda$-vectors $((0)_{2\lambda},1,-1,(0)_{n-4-4\lambda},-1,1,(0)_{2\lambda})^T \in \mathfrak{U}$ are mutually orthogonal together with the $\eta$-vectors. For $6 \le n$, $\nu=0,...,\lfloor \frac{n-6}{6} \rfloor$, the $\lfloor \frac{n-6}{6} \rfloor + 1$ $\nu$-vectors $((0)_{3\nu},1,-2,1,(0)_{n-6-6\nu},1,-2,1,(0)_{3\nu})^T \in \mathfrak{U}$ is an orthogonal alternative to the $\lambda$-vectors. The $\theta$-vector $((0)_{\lfloor \frac{n-3}{2} \rfloor},1,-2,1,(0)_{\lfloor \frac{n-3}{2} \rfloor})^T \in \mathfrak{U}$ for odd $3\le n=2l+1$, $l=1,\dots$ is orthogonal to the $\eta$-vectors. Examples:
$$
   \begin{array}{rrrrrrrrr}
   n & n-1 & \lfloor \frac{n-2}{2} \rfloor & \lfloor \frac{n}{4} \rfloor & \lfloor \frac{n-6}{6} \rfloor & \eta & \lambda & \nu & \boldsymbol{U}^T \\
   2 & 1 & 0 & n/a & n/a & 0 & n/a & n/a & \eta:(1, -1) \\
   3 & 2 & 0 & n/a & n/a & 0 & n/a & n/a & \eta:(1,0,-1) \\
    &  &  & & &  & & & \theta:(1,-2,1) \\
   4 & 3 & 1 & 1 & n/a & 0 & & n/a & \eta:(1,0,0,-1) \\
    &  &  &  & &1 & & & \eta:(0,1,-1,0) \\
    &  &  &  & & & 0 & & \lambda:(1,-1,-1,1) \\
   5 & 4 & 1 & 1 & n/a & 0 & & n/a & \eta:(1,0,0,0,-1) \\
    &  &  &  & & 1 & & & \eta:(0,1,0,-1,0) \\
    &  &  &  &  & & 0 & & \lambda:(1,-1,0,-1,1) \\
    &  &  & & &  &  & & \theta:(0,1,-2,1,0) \\
   6 & 5 & 2 & 1 & 0 & 0 & & & \eta:(1,0,0,0,0,-1) \\
    &  &  & & & 1 & & & \eta:(0,1,0,0,-1,0) \\
    &  &  & & & 2 & & & \eta:(0,0,1,-1,0,0) \\
    &  &  &  & & & 0 & & \lambda:(1,-1,0,0,-1,1) \\
    &  &  &  & & &  & 0 & \nu:(1,-2,1,1,-2,1) \\
   7 & 6 & 2 & 1 & 0 & 0 & & & \eta:(1,0,0,0,0,0,-1) \\
    &  &  & & & 1 & & & \eta:(0,1,0,0,0,-1,0) \\
    &  &  & & & 2 & & & \eta:(0,0,1,0,-1,0,0) \\
    &  &  &  & & & 0 & & \lambda:(1,-1,0,0,0,-1,1) \\
    &  &  &  & & &  & 0 & \nu:(1,-2,1,0,1,-2,1) \\
    &  &  & & &  &  & & \theta:(0,0,1,-2,1,0,0) \\
   8 & 7 & 3 & 2 & 0 & 0 & & & \eta:(1,0,0,0,0,0,0,-1) \\
    &  &  & & & 1 & & & \eta:(0,1,0,0,0,0,-1,0) \\
    &  &  & & & 2 & & & \eta:(0,0,1,0,0,-1,0,0) \\
    &  &  & & & 3 & & & \eta:(0,0,0,1,-1,0,0,0) \\
    &  &  & & &  & 0 & & \lambda:(1,-1,0,0,0,0,-1,1) \\
    &  &  & & &  & 1 & & \lambda:(0,0,1,-1,-1,1,0,0) \\
    &  &  &  & & &  & 0 & \nu:(1,-2,1,0,0,1,-2,1) \\
   \end{array}
\;
$$
$\{\eta\} \perp \{\lambda\}$, $\{\eta\} \perp \{\nu\}$, $\{\theta\} \perp \{\eta\}$, $\{\theta\} \perp \{\lambda\}$, but $\{\lambda\} \not\perp \{\nu\}$, $\{\theta\} \not\perp \{\nu\}$. 

\paragraph{Rank of $\mathfrak{U}$.} Lemma \ref{LAActionsRank} limits the rank of $\mathfrak{U}$ from above. For $n=2, 3, 4$ it is $n-1=1, 2, 3$. The proofs are the orthogonal bases in $\mathfrak{U}$: $\{\eta:(1,-1)^T\}$, $\{\eta:(1,0,-1)^T$, $\theta:(1,-2,1)^T\}$, $\{\eta:(1,0,0,-1)^T$, $\eta:(0,1,-1,0)^T$, $\lambda:(1,-1,-1,1)^T\}$. For $n=6$, the rank is $n-1=5$: $\{\eta:(1,0,0,0,0,-1)^T$, $\eta:(0,1,0,0,-1,0)^T$, $\eta:(0,0,1,-1,0,0)^T$, $(1,0,-1,-1,0,1)^T$,$(1,-2,1,1,-2,1)^T\}$. The latter two are not $\eta,\lambda,\nu,\theta$-strategies. For $n=8$, the rank is $n-1=7$: $\{\eta:(1,0,0,0,0,0,0,-1)^T$, $\eta:(0,1,0,0,0,0,-1,0)^T$, $\eta:(0,0,1,0,0,-1,0,0)^T$, $\eta:(0,0,0,1,-1,0,0,0)^T$, $(0,1,-1,0,0,-1,1,0)^T$, $(1,-1,-1,1,1,-1,-1,1)^T$, $(1,0,0,-1,-1,0,0,1)^T\}$. The latter four are not $\eta,\lambda,\nu,\theta$-strategies. For $n=5, 7$, $\mathfrak{U}$ contains maximum $n-2=3, 5$ orthogonal vectors. \textit{This is proved by checking all mutually orthogonal combinations using a C++ program.}

Let us notice that $(W_1,W_2,\dots,W_{n-1},0)_j^T \rightarrow (0,W_1,\dots,W_{n-2},W_{n-1})_j^T$ is a cyclic permutation of coordinates and does not change the length of the vector. This is a rotation expressed by $\mathcal{R}\boldsymbol{W}_j$, where the matrix $\mathcal{R}$ is orthogonal
$$
   \mathcal{R} =
   \left [
   \begin{array}{ccccc}
   0 & 0 & \dots & 0 & 1\\
   1 & 0 & \dots & 0 & 0\\
   0 & 1 & \dots & 0 & 0\\
   \dots & \dots & \dots & \dots & \dots\\
   0 & 0 & \dots & 1 & 0\\
   \end{array}
   \right ],
\;
   \mathcal{R}\mathcal{R}^T=
   \left [
   \begin{array}{ccccc}
   1 & 0 & \dots & 0 & 0\\
   0 & 1 & \dots & 0 & 0\\
   0 & 0 & \dots & 0 & 0\\
   \dots & \dots & \dots & \dots & \dots\\
   0 & 0 & \dots & 0 & 1\\
   \end{array}
   \right ] = \mathcal{I}.
$$
An example using the $4 \times 4$ $\mathcal{R}^T$ is found in \cite[p. 69, Exercise 10, matrix $A$]{halmos1987}.
Since $W_{0,j}=W_{n,j}=0$, $\boldsymbol{U}_j=\boldsymbol{W}_j - \mathcal{R}\boldsymbol{W}_j=(\mathcal{I} - \mathcal{R})\boldsymbol{W}_j$, where $\mathcal{I}$ is the identity matrix with ones on the diagonal. Determinant $\det(\mathcal{R})=1$. Thus, the Gramian matrix for $\mathfrak{U}$ is $\mathcal{G}_{\mathcal{U}}=\mathcal{U}^T\mathcal{U}=\mathcal{W}^T(\mathcal{I} - \mathcal{R})^T(\mathcal{I} - \mathcal{R})\mathcal{W}=\mathcal{W}^T(2\mathcal{I} - \mathcal{R}-\mathcal{R}^T)\mathcal{W}$. The square $n\times n$ matrix $2\mathcal{I} - \mathcal{R}-\mathcal{R}^T$ has the main diagonal with twos, sub and super diagonals with -1, and two symmetric corner elements -1. This guarantees that for $2\le n$ sums of rows and columns in the matrix are zero vectors. Applying Theorem 1 about the matrices product rank from \cite[p. 76]{beesack1962} twice, we conclude that $\mathcal{U}^T\mathcal{U}$ has the rank less than $n$. \textit{This is another proof of Lemma \ref{LAActionsRank}.} At the same time, it is exactly $n-1$ for $n=2,3,4,6,8$ and due to $\eta$- and $\lambda$-vectors not less than $\lfloor \frac{n-2}{2} \rfloor + 1 + \lfloor \frac{n}{4} \rfloor$ for $4 < n$.

\begin{theorem}
\label{TMActionsRank}
The rank of $\mathfrak{U}$ is $n - 1$.
\end{theorem}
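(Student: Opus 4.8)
The plan is to match the upper bound already established with a lower bound obtained by exhibiting $n-1$ explicitly independent action vectors. By Lemma \ref{LAActionsRank}, the rank of $\mathfrak{U}$ is at most $n-1$, since every $\boldsymbol{U}_j$ is orthogonal to the flat price $\boldsymbol{P}=P\boldsymbol{1}$, so the span of $\mathfrak{U}$ lies in the hyperplane $\boldsymbol{1}^\perp$ of dimension $n-1$. It therefore suffices to produce $n-1$ linearly independent members of $\mathfrak{U}$.

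For this I would take the simplest conceivable strategies, the consecutive unit round-trips. Writing $\boldsymbol{e}_i$ for the standard basis vectors of $\mathbb{R}^n$, set $\boldsymbol{v}_i=\boldsymbol{e}_i-\boldsymbol{e}_{i+1}$ for $i=1,\dots,n-1$, i.e. buy one unit at tick $i$ and sell it at tick $i+1$. The induced position is $W_i=1$ with $W_k=0$ for $k\ne i$; since $W_{0}=W_{n}=0$ and $|W_k|\le 1\le W$, each $\boldsymbol{v}_i$ is a legitimate element of $\mathfrak{U}$ for every $W\ge 1$, regardless of $n$.

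The key step is linear independence of $\{\boldsymbol{v}_1,\dots,\boldsymbol{v}_{n-1}\}$. Suppose $\sum_{i=1}^{n-1}c_i\boldsymbol{v}_i=\boldsymbol{0}$. Reading off coordinate $1$ gives $c_1=0$; reading coordinate $k$ for $2\le k\le n-1$ gives $c_k-c_{k-1}=0$, so all $c_i$ vanish by this telescoping, triangular cascade. Hence the $n-1$ vectors are independent, the rank of $\mathfrak{U}$ is at least $n-1$, and combined with Lemma \ref{LAActionsRank} it is exactly $n-1$.

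No serious obstacle is expected; the only point to verify carefully is that the exhibited vectors genuinely obey the constraints $W_0=W_n=0$ and $|W_i|\le W$, which they do because their positions never exceed $1$ in absolute value. Note that the \emph{orthogonality} deficit observed for $n=5,7$ is no impediment: rank requires only linear independence, not mutual orthogonality, and the non-orthogonal family $\{\boldsymbol{v}_i\}$ supplies it. As an alternative matrix-theoretic route one may use $\boldsymbol{U}_j=(\mathcal{I}-\mathcal{R})\boldsymbol{W}_j$: the positions $\mathfrak{W}$ span the coordinate hyperplane $\{x_n=0\}$ (it already contains $W\boldsymbol{e}_1,\dots,W\boldsymbol{e}_{n-1}$), while $\ker(\mathcal{I}-\mathcal{R})=\operatorname{span}(\boldsymbol{1})$ meets $\{x_n=0\}$ only in $\boldsymbol{0}$, so $\mathcal{I}-\mathcal{R}$ is injective on the position span and its image again has dimension $n-1$.
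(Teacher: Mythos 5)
Your proof is correct and follows essentially the same route as the paper: Lemma \ref{LAActionsRank} caps the rank at $n-1$, and an explicit family of $n-1$ linearly independent strategies in $\mathfrak{U}$ achieves that bound. The only difference is cosmetic — you use the consecutive round-trips $\boldsymbol{e}_i-\boldsymbol{e}_{i+1}$ with a telescoping independence argument, whereas the paper uses the buy-hold-sell strategies $\boldsymbol{e}_i-\boldsymbol{e}_n$, whose matrix exhibits an $(n-1)\times(n-1)$ identity minor; both families lie in $\mathfrak{U}$ for every $W\ge 1$ and are trivially independent.
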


\begin{proof}
For $2 \le n$ and any $W$, $\mathfrak{U}$ has $n - 1$ strategies with only two ordered transactions: buy one at $1 \le i < n$ followed by sell one at the last $n$th tick:
$$
   \left [
   \begin{array}{cccccc}
   \lceil 1 & 0 & 0 & \dots & 0 \rceil & -1\\
   |0 & 1 & 0 & \dots & 0 | & -1\\
   |0 & 0 & 1 & \dots & 0 | & -1\\
   \dots & \dots & \dots & \dots & \dots & \dots\\
   \lfloor 0 & 0 & 0 & \dots & 1 \rfloor & -1\\
   \dots & \dots & \dots & \dots & \dots & \dots\\
   \end{array}
   \right ].
$$
The top left submatrix of $\mathcal{U}^T$, always obtainable after a suitable rearrangement of rows, is diagonal $(n-1)\times(n-1)$ identity matrix with determinant one. Any greater minor of order $n$ is zero due to Lemma \ref{LAActionsRank}. Then, the highest order of non-zero minor of $\mathcal{U}$ is $n - 1$. This is the rank \cite[p. 132]{voevodin1980}.
\end{proof}

\textit{Theorem \ref{TMActionsRank} means that the span of the trading strategies $\mathfrak{U}$, containing d.n.s., is the hyperplane of the linear space created, spanned, by arriving ticks.}

\paragraph{Buy and hold.} This strategy is a popular investment benchmark. "Holding" is "never selling" a purchased security or real estate. All futures, and many bonds and options expire. Examples of perpetual and long paying interest bonds are the Dutch Water Bonds dated by 1624, the British Consoles issued first time in 1751 \cite{andrews2016}, \cite{moore2014}, some perpetual debt in France \cite{bauneau2014}. The perpetual debt financial instruments is not only a history \cite{atkins2015}, \cite{stanton2017}. The lookback Russian put option \cite{shepp1993}, \cite{duffie1993} with "reduced regret" has no expiration date.

In practice, "never selling" is "holding for a long time". For futures, with well known expiration date and time, "hold" might mean "up to the expiration". For a chain of prices "hold" might mean "until the last tick". Even, if one "holds" or does not sell what has been purchased, one way to estimate its value is to assume that it is sold at the price $P_n$ and cost $C_n$ - mark-to-market or "fair" value. This adds an artificial sell transaction to each valuation tick of interest after buying. All strategies $j$ in $\mathfrak{U}$ exit the market with $W_{n,j}=0$, if they enter it. The d.n.s with $W_{n,d.n.s}=0$ never enters the market.

For comparison with an investment, the buy is assumed coinciding with the beginning of the investment. Here, the single strategy buying at the beginning and marking-to-market at the end $U_{b.a.h.}=(1,0,\dots,0,-1)^T$ is "buy and hold", b.a.h. This is the $\eta$-strategy. The strategies from the proof of Theorem \ref{TMActionsRank} buying at $i$ and selling at the end is "buy, hold, and sell", b.h.s. The b.a.h. is b.h.s but not necessarily vice versa. The system of $n-1$ b.h.s. is the base of $\mathfrak{U}$: it is linearly independent, and any other strategy in $\mathfrak{U}$ is a linear combination of b.h.s. For instance, $(1,-2,1)^T=(1,0,-1)^T-2(0,1,-1)^T$; $(1,-1,0)^T=(1,0,-1)^T-(0,1,-1)^T$. A system of vectors may have several bases. All of them are \textit{equivalent} \cite[pp. 47 - 50]{voevodin1980} and in our case have $n-1$ vectors. The $n-1$ b.h.s. are not mutually orthogonal. Each has the Euclidean length $\sqrt{2}$. The $\mathfrak{U}$ is a subsystem of a linear space over the fields of rational, real, or complex numbers. Its span is a \textit{hyperplane} in one of these spaces with the $n-1$ b.h.s. serving as hyperplane non-orthogonal basis.

\paragraph{Entry-wise operations.} The $\textrm{abs}(\boldsymbol{U}_j)$ is the entry-wise absolute value function. The author did not find a suitable notation to express this. \cite[p. 88]{horn1990}: \textit{"The Hadamard product of two matrices $A=[a_{ij}]$ and $B=[b_{ij}]$ with the same dimensions (not necessarily square) with entires in a given ring is the entry-wise product $A \circ B = [a_{ij}b_{ij}]$, which has the same dimensions as $A$ and $B$."} This is also known as \textit{Schur product} \cite{chandler1962}. The history of names Schur and Hadamard product is in \cite[pp. 92 - 95, Historical remarks]{horn1990}. Entry-wise Hadamard powers and square roots are denoted $A^{\circ 2}, A^{\circ 3}, A^{\circ \frac{1}{2}}$ \cite{reams1999}. If we take the non-negative square root values, then $\textrm{abs}(\boldsymbol{U}_j) = (\boldsymbol{U}_j \circ \boldsymbol{U}_j)^{\circ \frac{1}{2}}$ and
\begin{equation}
\label{EqPLMatrix}
\mathcal{PL}_{q \times S}=-k (\mathcal{P}_{n \times q})^T\mathcal{U}_{n \times S} - (\mathcal{C}_{n \times q})^T(\mathcal{U}_{n \times S} \circ \mathcal{U}_{n \times S})^{\circ \frac{1}{2}},
\end{equation}
where $\mathcal{P}_{n \times q}$ is the price matrix with $q$ scenarios, price column-vectors of size $n$, $\mathcal{U}_{n \times S=(2W+1)^{n-1}}$ is the strategies matrix for the set $\mathfrak{U}$, $\mathcal{PL}_{q \times S}$ is the profit and loss matrix with $S$ columns of size $q$ corresponding to $q$ price scenarios and the set $\mathfrak{U}$, $\mathcal{C}_{n \times q}$ is the cost matrix with $q$ scenarios, cost column-vectors of size $n$. If the cost per share is a fixed non-negative fraction $f$ of price, "equity case", then $(\mathcal{C}_{n \times q})^T=kf(\mathcal{P}_{n \times q})^T$. If the cost per contract is the constant $C$, "futures case", then $(\mathcal{C}_{n \times q})^T=(C\mathcal{J}_{n\times q})^T$, where $\mathcal{J}_{n\times q}$ is the \textit{Hadamard identity matrix} with all elements equal to one. For $q=1$, the "full" matrix form is reduced to $\mathcal{PL}_{1 \times S}$, the row-vector $\boldsymbol{PL}$ of size $S$. This is a sample of $S=(2W+1)^{n-1}$ $PL$ values for $\mathfrak{U}$ and one price scenario $\boldsymbol{P}$.

\section{Means and Variances of Profits and Losses}

The distribution of actions in $\mathfrak{U}$, Equations \ref{EqActionsDistribution}, Figures \ref{FigActionsDistribution1}, \ref{FigActionsDistribution20}, mean $a_1^{PL}$, Equations \ref{EqIndustryGains}, Theorem \ref{TMSampleMeanPL}, do not depend on $\boldsymbol{P}$. A distribution of $PL$ for $\mathfrak{U}$ depends on $\boldsymbol{P}$ and $\boldsymbol{C}$, Equation \ref{EqPLMatrix}. Without losing generality, the $n-1$ basis strategies can be ordered as $\boldsymbol{U}_1^{b.h.s.}=(1,0,\dots,0,-1)^T$, $\boldsymbol{U}_2^{b.h.s.}=(0,1,\dots,0,-1)^T$, $\dots$, $\boldsymbol{U}_{n-1}^{b.h.s.}=(0,0,\dots,1,-1)^T$, where the first $n-1$ coordinates are zeros, except 1 at $i$ and -1 at $n$. Due to these and hyperplane properties of $\mathfrak{U}$, any strategy $\boldsymbol{U}_j=(U_{1,j},U_{2,j},\dots,U_{n-1,j},U_{n,j})^T \in \mathfrak{U}$ in the basis is $\boldsymbol{U}_j=\sum_{i=1}^{i=n-1}U_{i,j}\boldsymbol{U}_i^{b.h.s.}=(U_{1,j},U_{2,j},\dots,U_{n-1,j},-\sum_{i=1}^{i=n-1}U_{i,j})^T$. Multiplication and summation yield correct $n$th coordinate because $\sum_{i=1}^{i=n}U_{i,j} = 0$.

The first component $PL^{I}$ of the $PL$ distribution is values for $j=1,\dots,(2W+1)^{n-1}$: $-k\boldsymbol{P}^T\boldsymbol{U}_j=-k\boldsymbol{P}^T\sum_{i=1}^{i=n-1}U_{i,j}\boldsymbol{U}_i^{b.h.s.}=k\sum_{i=1}^{i=n-1}U_{i,j}(P_n-P_i)$. Their sum is zero, Theorem \ref{TMSampleMeanPL}. The second component $PL^{II}$ of the $PL$ distribution is values for $j=1,\dots,(2W+1)^{n-1}$: $-\boldsymbol{C}^T(\boldsymbol{U}_j \circ \boldsymbol{U}_j)^{\circ \frac{1}{2}}=-\boldsymbol{C}^T(|U_{1,j}|,\dots,|U_{n,j}|)^T$. Their mean sum for constant cost $C$ is $a_1^{PL}$, Equation \ref{EqIndustryGains}. The $PL^{II}$ list has repeated values. Depending on $\boldsymbol{P}$, the $PL^I$ list may have repeated values too.

Thus, each $PL^I$ value is a linear combination of $k(P_n - P_i)$, $i=1,\dots,n-1$. Each $PL^{II}$ value is a \textit{corresponding} linear \textit{absolute} combination of $-2C$. It is either zero for single d.n.s. or \textit{even} negative multiple of $C$. For fixed $\boldsymbol{P}$ and $C$, corresponding values in two lists relate each to other due to integer coefficients of linear combinations. \textit{For $q=1$, Equation \ref{EqPLMatrix} converts a sample distribution of $n$ prices $P_i$ into a sample distribution of $(2W+1)^{n-1}$ values $PL_j$.}

\paragraph{Sample distributions of $P_i$ and $\Delta P_i$.} A chain of $n$ numbers $\{P_1,\dots,P_i,\dots,P_n\}$ creates chains of $n-1$ adjacent absolute differences $\{\Delta P_2=P_2-P_1,\dots,\Delta P_i = P_i-P_{i-1},\dots,\Delta P_n=P_n-P_{n-1}\}$, relative differences $\{\frac{\Delta P_2}{P_1},\dots,\frac{\Delta P_i}{P_{i-1}},\dots,\frac{\Delta P_n}{P_{n-1}}\}$, and popular \textit{log-returns} $\{\ln(\frac{P_2}{P_1}),\dots,\ln(\frac{P_i}{P_{i-1}}),\dots,\ln(\frac{P_n}{P_{n-1}})\}$. The latter two, well defined for $0 < \frac{P_i}{P_{i-1}}$, are close for $\frac{\Delta P_i}{P_{i-1}} \rightarrow 0$. Futures prices are positive. Many theories and speculations are devoted to these chains, when $P_i$ are prices or rates. Some focus on increments. Other pay attention to prices.

Indeed, a Brownian motion is about increments, their independence, Gaussian properties, fundamental proportionality of their variance to elapsed time \cite{bachelier1900}, \cite{einstein1905}, \cite{neftci1996}, \cite{rogers2000}. Its sophisticated combinations are popular in pricing derivatives \cite{neftci1996}, \cite{hull1997}, \cite{hunt2000}, \cite{duffie2001}. In contrast, the trading pattern \textit{head and shoulders} \cite[Figure 3a, pp. 559 - 561]{neftci1991}, \cite[p. 236]{downes1995}, \cite[pp. 74, 76, 108 - 110, 153 - 155]{murphy1999}, \cite[pp. 108 - 110]{kaufman2005} appearing also in coin-tossing experiments, not possessing predictive power \cite[p. 131]{malkiel2007}, cares about price levels in time combinations resembling a top of a human body. For one, trading on such patterns is \textit{astrology}. However, markets are people and programs created by people trading something. If participants "believe" into such matters and trade based on their "beliefs", then a trading feedback can affect markets transforming "beliefs" to reality. The impact should depend on the fraction of trading "believers". How this fraction may form, based on the properties of random \textit{prospects}, is suggested in \cite[p. 33, Hypothesis]{salov2015}. For a pragmatist, a "working pattern" is more important than "why it is working". \textit{Dynkin-Neftci times} decompose a situation on 1) evaluation that an event has occurred using available, "not from the future", information, and 2) gathering statistics how frequently the event is followed by a certain scenario in the past.

The \textit{Ornstein-Uhlenbeck process} \cite{uhlenbeck1930} combines variable's increments with its single value playing a special role. When the variable crosses zero line the drift for increments reverses its sign. The farer from the level, the greater drift magnitude is. Being accompanied by random shocks, it directs the variable back to the level. All continues on the opposite side. This ability to fluctuate around an attractor level is reused by the \textit{mean-reversion} models of interest rates, where the level is shifted from zero to a positive value \cite[pp. 418 - 419]{hull1997}. A care should be taken to avoid rates going to the negative territory similar to the original Bachelier's price assuming Gaussian properties for absolute increments. Cox, Ingersoll, Ross standard deviation of random term proportional to the square root of rate and log-normal properties of latter is one way to ensure positivity \cite[p. 418]{hull1997}. The author has enjoyed the brief and thorough review \cite[p. 271]{jacobsen1996}: \textit{"... his [VS: Laplace's] formidable intuition has led him to a differential equation which is entirely justifiable, and is in fact the Fokker-Plank equation for a one-dimensional Ornstein-Uhlenbeck process, which appears as the weak limit of the Bernoulli-Laplace urn models"}.

The author has formulated the "chicken and egg question" of what is more fundamental prices or their increments \cite{salov2011}, \cite[pp. 34 - 35]{salov2013}. A hybrid approach relies on both. How many price levels should be taken into consideration? Are these levels permanent? One of the problems is non-stationarity of markets \cite[pp. 194]{lipton2001}: \textit{"There are many reasons for considering nonstationary markets, the most obvious of which is that the economic conditions keep changing and this change cannot be adequately captured by stationary models"}. In \cite[p. 35]{salov2013} the author has expressed his view: \textit{"... markets have many modes replacing each other in time, where prices or increments get varying accents"}.

Sample means of prices $a_1^P=\frac{\sum_{i=1}^{i=n}P_i}{n}$ and increments $a_1^{\Delta P}=\frac{\sum_{i=2}^{i=n} \Delta P_i}{n - 1}$ with fixed $P_1$ are interdependent \cite[p. 11, Equation 4]{salov2017}: $a_1^P=P_1+\frac{n^2-1}{n}a_1^{\Delta P} - \frac{\sum_{i=2}^{i=n}i\Delta P_i}{n}$. While each statistics does not depend on the order of sample numbers, together they are bound by the term $\sum_{i=2}^{i=n}i \Delta P_i$, where products $i \Delta P_i$ are sensitive to the order due to the multiplier $i$. We expect that other sample statistics for both sets are interdependent, in general. For sample variances $(S_{n-1}^P)^2=\frac{\sum_{i=1}^{i=n}(P_i-a_1^P)^2}{n-1}$ and $(S_{n-1}^{\Delta P})^2=\frac{\sum_{i=2}^{i=n}(\Delta P_i-a_1^{\Delta P})^2}{n-2}$, we get

\begin{equation}
\label{EqSampleVariances}
\begin{split}
&(S_{n-1}^P)^2=\frac{\sum_{i=1}^{i=n}(P_i-a_1^P)^2}{n-1}=\frac{P_1^2 + \sum_{i=2}^{i=n}P_i^2 - n (a_1^P)^2}{n-1};\\
&(S_{n-1}^{\Delta P})^2=\frac{\sum_{i=2}^{i=n}(\Delta P_i-a_1^{\Delta P})^2}{n-2}=\frac{\sum_{i=2}^{i=n}(\Delta P_i)^2 - (n - 1)(a_1^{\Delta P})^2}{n-2}=\\
&=\frac{P_1^2 - P_n^2 + 2\sum_{i=2}^{i=n}P_i \Delta P_i - (n-1)(a_1^{\Delta P})^2}{n-2};\\
&(S_{n-1}^P)^2=\frac{(n-2)(S_{n-1}^{\Delta P})^2+P_n^2 + \sum_{i=2}^{i=n}(P_i^2 -2P_i \Delta P_i)-n (a_1^P)^2}{n-1}+\\
&+(a_1^{\Delta P})^2.
\end{split}
\end{equation}

\paragraph{Distributions of $PL^I$, $PL^{II}$, and $PL=PL^I+PL^{II}$} can be considered separately. Their sample means depend neither on $P_i$ nor $\Delta P_i$, $i=1,\dots,n$, and, for constant $C$, are $a_1^{PL^I}=0$, $a_1^{PL}=a_1^{PL^{II}}=-\frac{2CW(W+1)(2n-1)}{3(2W+1)}$. Variances of $PL$ and $PL^I$ should depend on $\boldsymbol{P}$. For $S=(2W+1)^{n-1}$, since $|U_{i,j}| \le 2W$, 

\begin{equation}
\label{EqPLIVarianceEstimate}
\begin{split}
&0 \le (S_{n-1}^{PL^I})^2=\frac{\sum_{j=1}^{j=S}(PL_j^I - a_1^{PL^{I}})^2}{S-1}=\frac{\sum_{j=1}^{j=S}(PL_j^I)^2}{S-1}=\\
&=\frac{\sum_{j=1}^{j=S}(-k\sum_{i=1}^{i=n}P_i U_{i,j})^2}{S-1} \le \frac{4k^2 W^2 S}{S-1} (\sum_{i=1}^{i=n}P_i)^2 < 4k^2 W^2 (\sum_{i=1}^{i=n}P_i)^2.
\end{split}
\end{equation}
It is known and easy to prove using mathematical induction that $(\sum_{i=1}^{i=n}a_i)^2=\sum_{i=1}^{i=n}a_i^2+2\sum_{l=1}^{l=n}\sum_{i=1}^{i=l-1}a_i a_l=\sum_{i=1}^{i=n}a_i^2+2\sum_{l=1}^{l=n} a_l\sum_{i=1}^{i=l-1}a_i=\sum_{i=1}^{i=n}a_i^2+2\sum_{l=2}^{l=n} a_l\sum_{i=1}^{i=l-1}a_i=\sum_{i=1}^{i=n}a_i^2+2\sum_{l=1}^{l=n-1}a_l\sum_{i=l+1}^{i=n}a_i$. With Equations \ref{EqActionsMoments_1_2},
\begin{equation*}
\begin{split}
&(S_{n-1}^{PL^I})^2=\frac{\sum_{j=1}^{j=S}(-k\sum_{i=1}^{i=n}P_i U_{i,j})^2}{S-1}=\\
&=\frac{k^2}{S-1} \sum_{j=1}^{j=S} \left ( \sum_{i=1}^{i=n} P_i^2 U_{i,j}^2 + 2 \sum_{l=2}^{l=n} P_l U_{l,j} \sum_{i=1}^{i=l-1}P_i U_{i,j} \right )=\\
&=\frac{k^2}{S-1} \left ( \sum_{i=1}^{i=n}P_i^2 \sum_{j=1}^{j=S} U_{i,j}^2 + 2 \sum_{j=1}^{j=S} \sum_{l=2}^{l=n} P_l U_{l,j} \sum_{i=1}^{i=l-1}P_i U_{i,j}  \right )=\\
&=\frac{k^2}{S-1} \left ( \frac{W(W+1)S}{3} \left ( 2\sum_{i=1}^{i=n}P_i^2 - P_1^2 - P_n^2 \right ) + 2 \sum_{j=1}^{j=S} \sum_{l=1}^{l=n-1} P_l U_{l,j} \sum_{i=l+1}^{i=n}P_i U_{i,j} \right ).
\end{split}
\end{equation*}
In the first summand, $\sum_{i=1}^{i=n}P_i^2$ is the square of the price vector $\boldsymbol{P}$ length. The second summand can be expressed as the double sum $\sum_{j=1}^{j=S}$ of $\frac{n(n-1)}{2}$ terms

$$
   2 \sum_{j=1}^{j=S}
   \left (
   \begin{array}{cccccccc}
   P_1P_2U_{1,j}U_{2,j} & + & P_1P_3U_{1,j}U_{3,j} & + & \dots & + & P_1P_nU_{1,j}U_{n,j}&+\\
    & + & P_2P_3U_{2,j}U_{3,j} & + & \dots & + & P_2P_nU_{2,j}U_{n,j}&+\\
   & & & + & \dots & + & \dots&+\\
   & & & & & + & P_{n-1}P_nU_{n-1,j}U_{n,j}&\\
   \end{array}
\right ).
$$
By Theorem \ref{TMActionsSampleCovariances}, all sums above the bottom diagonal are zeros. After summation by $j$, the diagonal terms get the common multiplier $-2\frac{W(W+1)(2W+1)^{n-1}}{3}$, and
\begin{equation}
\label{EqPLIVariance}
\begin{split}
&(S_{n-1}^{PL^I})^2=\frac{k^2W(W+1)S}{3(S-1)} \left (2\sum_{i=1}^{i=n}P_i^2-2\sum_{i=1}^{i=n-1}P_i P_{i+1} - P_1^2 - P_n^2 \right )=\\
&=\frac{k^2W(W+1)(2W+1)^{n-1}}{3((2W+1)^{n-1}-1)}\sum_{i=2}^{i=n}(\Delta P_i)^2.
\end{split}
\end{equation}

The variance of $PL^{II}$ for constant $C$ is
\begin{equation*}
\begin{split}
&(S_{n-1}^{PL^{II}})^2=\frac{\sum_{j=1}^{j=S}(PL_j^{II}-a_1^{PL^{II}})^2}{S-1}=\frac{\sum_{j=1}^{j=S}(PL_j^{II})^2-S(a_1^{PL^{II}})^2}{S-1}=\\
&=\frac{C^2\sum_{j=1}^{j=S}(\sum_{i=1}^{i=n}|Ui,j|)^2-S(a_1^{PL^{II}})^2}{S-1}, \; \mathrm{where}\\
&\sum_{j=1}^{j=S} (\sum_{i=1}^{i=n}|Ui,j|)^2= \sum_{j=1}^{j=S} \left ( \sum_{i=1}^{i=n}U_{i,j}^2+2\sum_{l=1}^{l=n}|U_{l,j}|\sum_{i=1}^{i=l-1}|U_{i,j}| \right )=\\
&=\sum_{i=1}^{i=n}\sum_{j=1}^{j=S}U_{i,j}^{2}+2\sum_{l=1}^{l=n}\sum_{i=1}^{i=l-1}\sum_{j=1}^{j=S}|U_{l,j}||U_{i,j}|.
\end{split}
\end{equation*}
Equations \ref{EqActionsMoments_1_2} "evaluate" the left term-sum: two equal sums for $i=1,n$, plus $n-2$ equal intermediate sums for $i=2,\dots, n-1$: $\frac{2}{3}W(W+1)(2W+1)^{n-1}+\frac{(n-2)2}{3}W(W+1)(2W+1)^{n-1}=\frac{2(n-1)}{3}W(W+1)(2W+1)^{n-1}$. Theorem \ref{TMAbsoluteActionsSampleCovariances} "evaluates" the right term-sum: $2\mathrm{A}$ for $n=2$, or $4\mathrm{B}+2\mathrm{C}+4(n-3)\mathrm{D}+2(n-3)\mathrm{E}+(n-4)(n-3)\mathrm{F}$ for $3 \le n$. Thus,
\begin{equation}
\label{EqPLIIVariance}
\begin{split}
&n = 2: \; (S_{n-1}^{PL^{II}})^2=\frac{4}{3}W^2(W+1)^2(2W+1)^2; \; 3 \le n: \; (S_{n-1}^{PL^{II}})^2=\\
&\frac{4C^2W(W+1)(2W+1)^{n-3}(6n(2W^2+2W+1)-11W(W+1)-3)}{45((2W+1)^{n-1}-1)}.
\end{split}
\end{equation}
\begin{theorem}
\label{TMPLVariance}
For $\mathfrak{U}$, $(S_{n-1}^{PL})^2$ = $(S_{n-1}^{PL^I})^2 + (S_{n-1}^{PL^{II}})^2$.
\end{theorem}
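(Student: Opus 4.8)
The plan is to use the sample-wise decomposition $PL=PL^I+PL^{II}$, valid coordinate by coordinate over the $S=(2W+1)^{n-1}$ strategies, and expand the sample variance of the sum. Since the sample mean of a sum is the sum of the sample means, $a_1^{PL}=a_1^{PL^I}+a_1^{PL^{II}}$, and the standard expansion gives
\begin{equation*}
(S_{n-1}^{PL})^2=(S_{n-1}^{PL^I})^2+(S_{n-1}^{PL^{II}})^2+\frac{2}{S-1}\sum_{j=1}^{j=S}(PL_j^I-a_1^{PL^I})(PL_j^{II}-a_1^{PL^{II}}).
\end{equation*}
Hence the entire theorem reduces to showing that the sample cross-covariance term, the last summand, vanishes.

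First I would simplify that cross term. By Theorem \ref{TMSampleMeanPL}, $a_1^{PL^I}=0$ and $\sum_{j=1}^{j=S}PL_j^I=0$, so the term collapses to $\frac{2}{S-1}\sum_{j=1}^{j=S}PL_j^I\,PL_j^{II}$. Substituting $PL_j^I=-k\sum_{i=1}^{i=n}P_iU_{i,j}$ and, for constant cost $C$, $PL_j^{II}=-C\sum_{l=1}^{l=n}|U_{l,j}|$, then interchanging the order of summation, yields
\begin{equation*}
\sum_{j=1}^{j=S}PL_j^I\,PL_j^{II}=kC\sum_{i=1}^{i=n}\sum_{l=1}^{l=n}P_i\sum_{j=1}^{j=S}U_{i,j}|U_{l,j}|.
\end{equation*}
It therefore suffices to prove that the inner mixed sum $\sum_{j=1}^{j=S}U_{i,j}|U_{l,j}|=0$ for every pair $i,l\in[1,n]$, independently of $\boldsymbol{P}$.

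The key step is the same inversion involution used in the proof of Theorem \ref{TMSampleMeanPL}. The set $\mathfrak{U}$ partitions into the single do-nothing strategy, for which every $U_{i,j}=0$, and pairs $\{\boldsymbol{U}_j,-\boldsymbol{U}_j\}$. The map $\boldsymbol{U}_j\mapsto-\boldsymbol{U}_j$ flips the sign of the signed factor $U_{i,j}$ but leaves the modulus factor $|U_{l,j}|$ invariant, so the two members of each pair contribute $U_{i,j}|U_{l,j}|$ and $-U_{i,j}|U_{l,j}|$, which cancel; the do-nothing term contributes zero. Summing over this partition gives $\sum_{j=1}^{j=S}U_{i,j}|U_{l,j}|=0$, whence the cross term is zero and the variances add, proving the theorem.

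The main obstacle is conceptual rather than computational: recognizing that the quantity controlling additivity is precisely this mixed sign-times-modulus sum, and that it vanishes purely by the $\boldsymbol{U}_j\mapsto-\boldsymbol{U}_j$ symmetry. In particular, none of the detailed covariance formulas of Theorem \ref{TMAbsoluteActionsSampleCovariances} nor the explicit variance expressions \ref{EqPLIVariance} and \ref{EqPLIIVariance} are needed to establish additivity itself; they are only required to evaluate the two summands separately. Everything after the cancellation is routine bookkeeping.
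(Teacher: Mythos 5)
Your proposal is correct and follows essentially the same route as the paper: expand the sample variance of $PL=PL^I+PL^{II}$, reduce the cross term to $\sum_{j=1}^{j=S}U_{i,j}|U_{l,j}|$, and kill it with the pairing $\{\boldsymbol{U}_j,-\boldsymbol{U}_j\}$ plus the do-nothing strategy. The only (immaterial) difference is that you state $PL_j^{II}$ for constant cost $C$, whereas the paper runs the identical cancellation with a general cost vector $\boldsymbol{C}$; your argument extends to that case verbatim.
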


\begin{proof}
$PL_j=PL_j^{I}+PL_j^{II}=-k\sum_{i=1}^{i=n}P_iU_{i,j}-\sum_{i=1}^{i=n}C_i|U_{i,j}|$. $a_1^{PL}=a_1^{PL^{I}}+a_1^{PL^{II}}=0+a_1^{PL^{II}}=a_1^{PL^{II}}$. The $a_1^{PL^{II}}$ does not assume a particular constant case $C_i=C$ but more general vector $\boldsymbol{C}$.

\begin{equation*}
\begin{split}
&(S_{n-1}^{PL})^2=\frac{\sum_{j=1}^{j=S}(PL_j-a_1^{PL})^2}{S-1}=\frac{\sum_{j=1}^{j=S}(PL_j)^2-S(a_1^{PL})^2}{S-1}=\\
&=\frac{\sum_{j=1}^{j=S}(PL_j^I+PL_j^{II})^2-S(a_1^{PL^{II}})^2}{S-1}=\\
&=\frac{\sum_{j=1}^{j=S}[(PL_j^I)^2+2PL_j^IPL_j^{II} +(PL_j^{II})^2]-S(a_1^{PL^{II}})^2}{S-1}=\\
&=(S_{n-1}^{PL^I})^2 + (S_{n-1}^{PL^{II}})^2+2\frac{\sum_{j=1}^{j=S}PL_j^IPL_j^{II}}{S-1}, \; \mathrm{where}\\
&\sum_{j=1}^{j=S}PL_j^IPL_j^{II}=-k\sum_{j=1}^{j=S}(P_1U_{i,j}+\dots+P_nU_{i,j})(C_1|U_{i,j}|+\dots+C_n|U_{i,j}|).
\end{split}
\end{equation*}
Opening brackets under the sum yields the terms $-k\sum_{j=1}^{j=S}P_iU_{i,j}C_l|U_{l,j}|=\\$
$-kP_iC_l\sum_{j=1}^{j=S}U_{i,j}|U_{l,j}|$. However, $\forall_{i,l}$, $\sum_{j=1}^{j=S}U_{i,j}|U_{l,j}|=0$. Indeed, $\mathfrak{U}$ consists of a d.n.s. and pairs $(\boldsymbol{U}_j,\boldsymbol{U}_{j'}=-\boldsymbol{U}_j)$, strategies and their "mirror reflections" in the index $i$, time, axis. $U_{i,d.n.s.}|U_{l,d.n.s.}|=0$. In any pair, $U_{i,j}|U_{l,j}|+(-U_{i,j})|-U_{l,j}|=(U_{i,j}-U_{i,j})|U_{l,j}|=0$.
\end{proof}

\section{Algebraic properties of trading positions}

Similar to $\mathfrak{U}$, the set of positions $\mathfrak{W}$ consists of the \textit{do nothing position}, d.n.p, $\boldsymbol{W}_{d.n.p.}=(0, \dots, 0)^T$ and pairs of mirror reflections $(\boldsymbol{W}_j,-\boldsymbol{W}_j)$ in index $i$, time, axis. Let us define on $\mathfrak{W}$ the binary operation denoted $\oplus_W$, a pairwise arithmetic addition of coordinates $\boldsymbol{W}_j \oplus_W \boldsymbol{W}_l = (W_{1,j} \oplus_W W_{1,l}, \dots, W_{n,j} \oplus_W W_{n,l})^T$, so that each coordinate sum $ > W$ is replaced with $W$ and $< -W$ with $-W$. This, otherwise ordinary addition, ensures that for any pair of position vectors the vector-result belongs to $\mathfrak{W}$, the \textit{closure property}.

Following to Cayley \cite[p. 41]{cayley1854}, \cite[pp. 144 - 153]{cayley1889}, we illustrate the operation for $W=3$ using the table, named today after him, for coordinates of positions
$$
\begin{array}{rrrrrrrr}
 \oplus_3 |&-3&-2&-1&0&1&2&3\\
 --&--&--&--&--&--&--&--\\
-3|&\mathit{-3}&\mathit{-3}&\mathit{-3}&\boldsymbol{-3}&\boldsymbol{-2}&\boldsymbol{-1}&\boldsymbol{0}\\
-2|&\mathit{-3}&\mathit{-3}&\boldsymbol{-3}&\boldsymbol{-2}&\boldsymbol{-1}&\boldsymbol{0}&\boldsymbol{1}\\
-1|&\mathit{-3}&\boldsymbol{-3}&\boldsymbol{-2}&\boldsymbol{-1}&\boldsymbol{0}&\boldsymbol{1}&\boldsymbol{2}\\
0|&\boldsymbol{-3}&\boldsymbol{-2}&\boldsymbol{-1}&\boldsymbol{0}&\boldsymbol{1}&\boldsymbol{2}&\boldsymbol{3}\\
1|&\boldsymbol{-2}&\boldsymbol{-1}&\boldsymbol{0}&\boldsymbol{1}&\boldsymbol{2}&\boldsymbol{3}&\mathit{3}\\
2|&\boldsymbol{-1}&\boldsymbol{0}&\boldsymbol{1}&\boldsymbol{2}&\boldsymbol{3}&\mathit{3}&\mathit{3}\\
3|&\boldsymbol{0}&\boldsymbol{1}&\boldsymbol{2}&\boldsymbol{3}&\mathit{3}&\mathit{3}&\mathit{3}\\
\end{array}
$$
The first, left, and second, right, elements are selected by column and row. The result is on the row and column intersection.
Italic numbers show "underflows" $-3 \oplus_3 -2 = \mathit{-3}$ and "overflows" $1 \oplus_3 3 = \mathit{3}$. The bold numbers correspond to usual addition of integers. The number of table entries, pairs of the Cartesian product $[-W,W] \times [-W,W]$, is $(2W+1)^2$. The number of "underflows" and "overflows" is $W(W+1)$. The number of ordinary additions is $(2W+1)^2 - W(W+1)=3W^2+3W+1$. For $1 \le W$, the greatest $\frac{3W^2+3W+1}{(2W+1)^2} = \frac{7}{9}$ is for $W=1$. $\lim_{W \rightarrow \infty}\frac{3W^2+3W+1}{(2W+1)^2}=\frac{3}{4}$. The ratio monotonically decreases with the growing $W$ because of the negative "derivative" $\frac{-1}{(2W+1)^3}$. The "drop" to the asymptotic level is $\frac{7}{9} - \frac{3}{4}=\frac{1}{36}$.

The operation is \textit{commutative}: $\boldsymbol{W}_j \oplus_W \boldsymbol{W}_l = \boldsymbol{W}_l \oplus_W \boldsymbol{W}_j$. Its, symmetric with respect to the main diagonal, Cayley table shows this well. Such tables are "less friendly" for conclusions about \textit{associativity} requiring three elements and two sequential operations. For some values, associativity holds: $(1 \oplus_1 1) \oplus_1 1 = 1 \oplus_1 (1 \oplus_1 1) = 1$. \textit{In general, $\oplus_W$ is not associative}: $(1 \oplus_1 1) \oplus_1 -1 = 0$ but  $1 \oplus_1 (1 \oplus_1 -1) = 1$. An example of a \textit{commutative not associative operation} is the mean: $\frac{a+b}{2}$. The game \textit{Rock–Paper–Scissors} also illustrates a commutative not associative operation: $(RP)S=S$, $R(PS)=R$, $RP=PR=P$, etc.

$\forall \; \boldsymbol{W}\in \mathfrak{W}$, $\boldsymbol{W} \oplus_W \boldsymbol{W}_{d.n.p.}=\boldsymbol{W}_{d.n.p.} \oplus_W \boldsymbol{W} = \boldsymbol{W}$. Hence, $\boldsymbol{W}_{d.n.p.}$ in $\mathfrak{W}$ is the \textit{two-sided identity element} or simply \textit{identity} \cite[p. 67]{malcev1973}. Every $\boldsymbol{W} \in \mathfrak{W}$ is \textit{invertible} with the unique \textit{inverse element} $-\boldsymbol{W} \in \mathfrak{W}$: $\boldsymbol{W} \oplus_W -\boldsymbol{W} = -\boldsymbol{W} \oplus_W \boldsymbol{W} = \boldsymbol{W}_{d.n.p.}$. The identity is own inverse.

There might be several solutions of $\boldsymbol{W}_j \oplus_W \boldsymbol{X} = \boldsymbol{W}_l$ or $W_{i,j} \oplus_W X_i = W_{i,l}$. If $W_{i,l}=\pm W$, then, depending on $W_{i,j}$, several $X_i$ can be good: 1) $2 \oplus_3 x = 3$, $x=1,2,3$; 2) $1 \oplus_3 x = 3$, $x=2,3$; 3) $-2 \oplus_3 x = -3$, $x=-3,-2,-1$. Choosing a solution by absolute minimum value ensures uniqueness: 1) 1; 2) 2; 3) -1.

Not every equation $W_{i,j} \oplus_W X_i = W_{i,l}$ has a solution: $-2 \oplus_3 x = 3$ requires \textit{forbidden} $x=5 \not\in [-3, 3]$. In spite of unique invertibility of every $\boldsymbol{W} \in \mathfrak{W}$, the equivalent, due to commutativity of $\oplus_W$, equations $W_{i,j} \oplus_W X_i = W_{i,l}$ and $X_i \oplus_W W_{i,j} = W_{i,l}$ have no, or $|W_{i,j}|+1$ (one or several) solutions:
$$
\begin{array}{ccccc}
&&\mathrm{DIAGRAM}&&\\
&&W_{i,j} \oplus_W X_i = W_{i,l}&&\\
|W_{i,l} - W_{i,j}| >W&\swarrow&&\searrow&|W_{i,l} - W_{i,j}| \le W\\
\{X_i\}=\O&&&&\{X_i\} \ne \O\\
&&|W_{i,l}|<W&\swarrow&\downarrow \; |W_{i,l}|=W\\
&&\mathrm{unique \; solution}&&|W_{i,j}|+1 \; \mathrm{solutions}:\\
&&X_i = W_{i,l}-W_{i,j}&&X_i \in [W-W_{i,j}, W]\\
&&&&\mathrm{for} \; W_{i,l}=W;\\
&&&&X_i \in [-W, -W-W_{i,j}]\\
&&&&\mathrm{for} \; W_{i,l}=-W.\\
\end{array}
$$
With several solutions $X_i$, $X_i=W_{i,l}-W_{i,j}$ has the least absolute value.

\textit{Financial sense behind $(\mathfrak{W}, \oplus_W)$ is: applying strategies to an account and single futures type the sums of corresponding positions cannot exceed by absolute value a level determined by margin requirements and/or position limits}.

\paragraph{Classification.} Due to the closure, the \textit{algebraic structure} $(\mathfrak{W}, \oplus_W)$ is a \textit{magma} \cite[p. 1, LAWS OF COMPOSITION, Definition 1]{bourbaki1974}: \textit{not associative} \cite[p. 4, ASSOCIATIVE LAWS, Definition 5]{bourbaki1974},  \textit{commutative} \cite[p. 7, PERMUTABLE ELEMENTS, COMMUTATIVE LAWS, Definitions 7, 8]{bourbaki1974}, \textit{initial}, because has the identity \cite[p. 12, IDENTITY ELEMENT; CANCELLABLE ELEMENTS; INVERTIBLE ELEMENTS, Definition 2]{bourbaki1974}, and with a unique \textit{inverse element} for each element in $\mathfrak{W}$. A term, interchangeably applied with magma, is \textit{groupoid} \cite[p. 67]{malcev1973}, \cite[p. 90]{rosenfeld1968}, \cite[p. 6, Definition 1]{belousov1967}, \cite[p. 1]{bruck1971}, \cite[p. 1]{sabinin1999}.

If $W_{i,j} \oplus_W X_i = W_{i,l}$ with the commutative $\oplus_W$ would have a unique solution $X_i$ for any pair $(W_{i,j}, W_{i,l})$ of the Cartesian product $[-W, W] \times [-W, W]$, then the groupoid $(\mathfrak{W}, \oplus_W)$ would be a \textit{quasigroup} \cite[p. 72]{malcev1973}, \cite[p. 9]{bruck1971}, \cite[p. 6, Definition 1]{belousov1967}, \cite[p. 23, 1.3. Definition]{sabinin1999}. Moreover, since $(\mathfrak{W}, \oplus_W)$ has the identity $\boldsymbol{W_{d.n.p.}}$, it would be a \textit{loop} \cite[p. 73]{malcev1973}, \cite[p. 15]{bruck1971}, \cite[p. 8, Definition 4]{belousov1967}, \cite[p. 24, 1.6. Definition]{sabinin1999}. For completeness, an associative loop is a \textit{group} and associative commutative loop is an \textit{Abelian group}. \textit{Our "loop" differs.}

While uniqueness of $X_i$ is achieved by selecting from a finite set of solutions the one with the absolute minimum, not every pair $(W_{i,j}, W_{i,l})$ has a solution: pairs for which $|W_{i,l} - W_{i,j}|>W$ require forbidden values $\not \in [-W, W]$.

Malcev \cite{malcev1973}, Belousov \cite{belousov1967}, and Sabinin \cite{sabinin1999} define quasigroup not only as a groupoid with a need to solve a system of two (or one for two-sided case) equations but alternatively as an \textit{algebra}. The latter includes the set, the main binary operation, and two (or one for two-sided case) binary inverse operations. In our case, the algebraic structure includes: the set $\mathfrak{W}$, the \textit{total} (defined for all pairs of elements) not associative commutative binary operation $\oplus_W$, the identity element $\boldsymbol{W}_{d.n.p.}$, the inverse element $-\boldsymbol{W}$ for each element $\boldsymbol{W}$. It can be added a \textit{partial} inverse binary operation $\ominus_W$. The adjective "partial" has the traditional meaning: "defined for some but not all pairs of elements". \textit{The operation $\oplus_W$ is total. The inverse operation $\ominus_W$ is partial.} Malcev illustrates such a possibility using the set of natural numbers including zero, binary arithmetic addition defined for each pair of numbers, and partial binary arithmetic subtraction defined only for pairs $(a, b)$, where $a \ge b$ \cite[p. 30]{malcev1973}.

The Cayley table for the coordinate $i$ subtraction $W_{i,l} \ominus_W W_{i,j}$ is
$$
\begin{array}{rrrrrrrr}
 \ominus_3|&-3&-2&-1&0&1&2&3\\
 --&--&--&--&--&--&--&--\\
-3|&\boldsymbol{0}&\boldsymbol{-1}&\boldsymbol{-2}&\boldsymbol{-3}&\mathrm{n/a}&\mathrm{n/a}&\mathrm{n/a}\\
-2|&\boldsymbol{1}&\boldsymbol{0}&\boldsymbol{-1}&\boldsymbol{-2}&\boldsymbol{-3}&\mathrm{n/a}&\mathrm{n/a}\\
-1|&\boldsymbol{2}&\boldsymbol{1}&\boldsymbol{0}&\boldsymbol{-1}&\boldsymbol{-2}&\boldsymbol{-3}&\mathrm{n/a}\\
0|&\boldsymbol{3}&\boldsymbol{2}&\boldsymbol{1}&\boldsymbol{0}&\boldsymbol{-1}&\boldsymbol{-2}&\boldsymbol{-3}\\
1|&\mathrm{n/a}&\boldsymbol{3}&\boldsymbol{2}&\boldsymbol{1}&\boldsymbol{0}&\boldsymbol{-1}&\boldsymbol{-2}\\
2|&\mathrm{n/a}&\mathrm{n/a}&\boldsymbol{3}&\boldsymbol{2}&\boldsymbol{1}&\boldsymbol{0}&\boldsymbol{-1}\\
3|&\mathrm{n/a}&\mathrm{n/a}&\mathrm{n/a}&\boldsymbol{3}&\boldsymbol{2}&\boldsymbol{1}&\boldsymbol{0}\\
\end{array}
$$
The operation $\ominus_W$ is not commutative (the table is antisymmetric with respect to zero diagonal), not associative (for instance, $(1 \ominus_3 (-1)) \ominus_3 (-1)=3$ but $1 \ominus_3 ((-1)) \ominus_3 (-1)=1$), partial. The number of pairs for which the result is not available is $W(W+1)$. The total number of pairs is $(2W+1)^2$. The number of pairs with the defined subtraction is the difference $3W^2+3W+1$.

The terms \textit{partial magma}, \textit{partial loop} are applied in cases, where the major operation is partial \cite{nesterov2006}. While the main properties of the algebraic system $(\mathfrak{W}, \oplus_W, \ominus_W)$ defined on the finite set of trading positions $\mathfrak{W}$ with total binary not associative commutative addition $\oplus_W$, and partial not associative, not commutative subtraction $\ominus_W$, including domain pairs counting, are described, the author feels uncomfortable to name it a \textit{partial loop} because $\oplus_W$ is total and viewed as the main operation.

Non-associativity of the main algebraic operation creates a link to works on non-associative algebras including the contribution of Etherington \cite{etherington1949}. Their focus is on different and often more complicated algebraic structures than one discussed. Following to the Schafer's remark \cite[p. 1]{schafer1961}, emphasizing that \textit{nonassociative algebra} does not assume associativity, while \textit{not associative algebra} means that associativity is not satisfied, we say \textit{not associative} $\oplus_W$.

\section{Algebraic properties of trading strategies}

Let $\boldsymbol{W}=\boldsymbol{W}_j \oplus_W \boldsymbol{W}_l$. By Theorem \ref{TMStrategiesPositionsBijection}, with $W_0=W_{0,j}=W_{0,l}=0$, $\boldsymbol{W} \leftrightarrow \boldsymbol{U}$, $\boldsymbol{W}_j \leftrightarrow \boldsymbol{U}_j$, $\boldsymbol{W}_l \leftrightarrow \boldsymbol{U}_l$. What is a corresponding binary operation $\boldsymbol{U}=\boldsymbol{U}_j \circ \boldsymbol{U}_l$?

The following recipe exists. Using $W_{i,j}=\sum_{r=1}^{r=i}U_{r,j}$ and $W_{i,l}=\sum_{r=1}^{r=i}U_{r,l}$, convert $\boldsymbol{U}_j$ to $\boldsymbol{W}_j$ and $\boldsymbol{U}_l $ to $\boldsymbol{W}_l$. Then, "add" positions $\boldsymbol{W}=\boldsymbol{W}_j \oplus_W \boldsymbol{W}_l$ and applying adjacent difference convert $\boldsymbol{W}$ to $\boldsymbol{U}$. The latter should be recognized as the result of $\boldsymbol{U}_j \circ \boldsymbol{U}_l$.

We have seen from distributions of positions and actions that, while positions in steps $i-1$ and $i$ are combined independently and uniformly,  for actions it is not so because they have to ensure that positions are within the limits. The author believes that it is impossible in a general case $1 < i < n$ to compute $U_i$ given $U_{i,j}$ and $U_{i,l}$. Information from step $i-1$ is needed.

For $\oplus_W$, the Cayley table is \textit{antisymmetric}, $\times (-1)$, with respect to the second zero diagonal.

\begin{theorem}
\label{TMOperationAntisymmetric}
$\forall \; a, b \in [-W, W]$, $-(a \oplus_W b)=(-a) \oplus_W (-b)$.
\end{theorem}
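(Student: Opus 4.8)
The plan is to recognise that $\oplus_W$ is nothing more than ordinary integer addition followed by a \emph{clamping} of the result to the admissible interval $[-W,W]$. Setting $s = a+b$, the definition gives
\[
a \oplus_W b = \begin{cases} W, & s > W,\\ s, & -W \le s \le W,\\ -W, & s < -W. \end{cases}
\]
Because $(-a)+(-b) = -s$, the quantity $(-a)\oplus_W(-b)$ is the very same clamping map applied to $-s$. Hence the whole identity reduces to the single assertion that this clamping map is an \emph{odd} function of its argument, i.e.\ clamping $-s$ returns the negative of clamping $s$.

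First I would carry out a three-way case analysis on the sign and magnitude of $s$. If $s > W$ then $a\oplus_W b = W$, so $-(a\oplus_W b) = -W$; meanwhile $-s < -W$, so $(-a)\oplus_W(-b) = -W$, and the two sides agree. The case $s < -W$ is its mirror image: the left-hand side is $W$, and $-s > W$ forces the right-hand side to equal $W$. In the unsaturated case $-W \le s \le W$ we have $a \oplus_W b = s$ and, since $-W \le -s \le W$ as well, $(-a)\oplus_W(-b) = -s = -(a\oplus_W b)$. All three cases match, which proves the entry-level identity; applying it coordinate by coordinate then yields the antisymmetry of the full Cayley table about the second diagonal noted earlier.

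The argument is essentially routine, so the only point that I would flag as the crux is that the two saturation thresholds $-W$ and $+W$ sit symmetrically about zero. It is precisely this symmetry that makes the ``overflow'' case pair with the ``underflow'' case under the reflection $s \mapsto -s$; were the clamping interval not centred at the origin, the claimed antisymmetry would fail. Since $a,b$ and hence $-a,-b$ all lie in $[-W,W]$, the integers $s$ and $-s$ are exhaustively covered by the three regimes and no admissibility issue arises, so the case analysis is complete.
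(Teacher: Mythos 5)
Your proof is correct, and it is in fact more substantive than the one in the paper. The paper's argument simply observes that negating both operands corresponds to reflection in the second (anti-)diagonal of the Cayley table and then asserts that the table is antisymmetric under that reflection --- which is essentially a restatement of the claim as an empirical fact about the displayed $W=3$ table rather than a derivation from the definition of $\oplus_W$. You instead go back to the definition of $\oplus_W$ as ordinary addition followed by clamping to $[-W,W]$, set $s=a+b$, and verify by a three-way case analysis that the clamping map is odd; the paper's ``antisymmetry of the table'' is then a corollary rather than a premise. Your approach buys a proof that works uniformly for every $W$ without inspecting any particular table, and your closing remark correctly isolates the essential hypothesis --- that the saturation thresholds $\pm W$ are symmetric about zero --- which the paper's reflection argument uses implicitly but never names. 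The only cosmetic point is that the theorem as stated concerns single coordinates $a,b\in[-W,W]$, so the coordinatewise extension you mention at the end is not strictly needed, though it is harmless and matches how the paper applies the result to position vectors.
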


\begin{proof}
Multiplying both $a$ and $b$ by $-1$ corresponds to the reflection in the second zero diagonal of the Cayley table. But the table is antisymmetric, $\times (-1)$, with respect to this reflection.
\end{proof}

We can write $U_i=W_i-W_{i-1}=(W_{i,j} \oplus_W W_{i,l})-(W_{i-1,j} \oplus_W W_{i-1,l})=([W_{i-1,j} + U_{i,j}] \oplus_W [W_{i-1,l} + U_{i,l}])-(W_{i-1,j} \oplus_W W_{i-1,l})$. For $i=1$, this yields $U_1=U_{1,j} \oplus_W U_{1,l}$ and $\circ \equiv \oplus_W$. For $i=n$, $U_{n,j}=0-W_{n-1,j}=-W_{n-1,j}$, $U_{n,l}=0-W_{n-1,l}=-W_{n-1,l}$, and $U_n=-(W_{n-1,j} \oplus_W W_{n-1,l})=\{$ by Theorem \ref{TMOperationAntisymmetric} $\}=((-W_{n-1,j}) \oplus_W (-W_{n-1,l}))=U_{1,j} \oplus_W U_{1,l}$ and $\circ \equiv \oplus_W$. Thus, the coordinate wise operation of $\boldsymbol{U}=\boldsymbol{U}_j \circ \boldsymbol{U}_l$ for $1 \le i \le n$ is
\begin{equation}
\label{EqActionsOperation}
U_i=([W_{i-1,j} + U_{i,j}] \oplus_W [W_{i-1,l} + U_{i,l}])-(W_{i-1,j} \oplus_W W_{i-1,l}).
\end{equation}

\section{The maximum profit trading strategies}

Today, for given $\boldsymbol{P}$, $\boldsymbol{C}$, $W=1$, computing $3^{134908}$ $PL$ values by Equation \ref{EqPL}, in order to select the maximum profit strategy, MPS, is impossible. A \textit{quantum computer} \cite{feynman1982}, \cite{deutsch1985} would need $\lceil \log_2(3^{134908}) \rceil = \lceil 134908 \frac{\ln(3)}{\ln(2)} \rceil = 213825$ \textit{qbits} to represent the corresponding \textit{coherent superposition quantum states}. Plus, quantum algorithms are required \cite{shor1994}, \cite{shor2000}. Recent successes are 2000 qbits \textit{D Wave 2000Q} computer for \textit{annealing simulation}, an \textit{analog computer}, \cite{brad2017}, and 51 qbits \textit{generic computational device} created at Harvard University \cite{fossbytes2017}, \cite{reynolds2017}.

Since d.n.s. with $PL=0$ is available, MPS cannot lose. In \cite{salov2007}, the author has developed the l- and r-algorithms (left and right) with linear complexity $O(n)$. This is faster than \textit{genetic algorithms} \cite{goldberg1989}, which do not guarantee maximum. Given $\boldsymbol{P}$, $\boldsymbol{C}$, and $W \in \mathbb{N}$, it returns $\boldsymbol{U}$ with maximum $PL$. Without loosing generality, $W=1$. This strategy, denoted MPS0 and not reinvesting profits, is a foundation for MPS1 and MPS2 reinvesting them. Similar to MPS0, MSP1 reverses long and short positions. In contrast with MPS0, MPS1 adds to positions while switching, if initial and maintenance futures margins permit. MPS2 extracts the absolute maximum reinvesting immediately, if it is profitable. Discrete MPS0, MPS1, MPS2 have been studied \cite{salov2007}, \cite{salov2008}, \cite{salov2011},  \cite{salov2011b}, \cite{salov2012}, \cite{salov2013}, \cite{salov2017}. MPS0, MPS1, MPS2 are \textit{objective} market properties. Not all elements of MPS are \textit{Markov times}.

\paragraph{"Markov time".} Neftci is, probably, first who applied Markov times to formalize Technical Analysis \cite[p. 553]{neftci1991}: \textit{"... one contribution this article makes is to recognize the importance of Markov times as a tool to pick well-defined rules for issuing signals at market turning points. Let $\{X_t\}$ be an asset price ... Let $\{I_t\}$ be the sequence of information sets (sigma-algebras) generated by the $X_t$ and possibly by other data observed up to time $t$. ... a random variable $\tau$ is a Markov time if the event $A_t=\{\tau \le t\}$ is $I_t$-measurable - that is, whether or not $\tau$ is less than $t$ can be decided given $I_t$"}. Giles says \cite[p. 175]{giles2000}: \textit{"... Neftci's Markov Times approach..."}. This does not mean that Neftci introduced "Markov times" but suggested an approach using them. The definition of a Markov time is in the first English edition of the Shiryaev's textbook \cite{shiryaev1984}, cited by Neftci \cite[p. 556, Theorem]{neftci1991} with a typo: Springer's year is 1984 but not 1985. The primary source defines "Markov time" \cite[p. 469, Definition 3, Russian 1979]{shiryaev1984}. \textit{Who introduced the term "Markov time"? Which Markov?}

Let us review two phrases. 1) Carl Boyer about "Bernoulli": \textit{"No family in the history of mathematics has produced as many celebrated mathematicians as did the Bernoulli family ..."} \cite[p. 415]{boyer1989}, \cite{salov2014}. 2) August Wilhelm von Hofmann about "Nikolay Nikolaevich Zinin": \textit{"If Zinin has nothing more than to convert nitrobenzene into aniline, even then his name should be inscribed in golden letters in the history of chemistry"} \cite{hofmann1880}. Admirers of American indigo blue, 2,2'-Bis(2,3-dihydro-3-oxoindolyliden) $C_{16}H_{10}N_2O_2$, jeans are indebted to Zinin for synthesis of benzeneamine $C_6H_5NH_2$. It is well known that Zinin was a private teacher of chemistry to young Alfred Nobel. It is less frequently cited that Zinin had brilliant mathematical skills remarked by astronomer Ivan Mikhailovich Simonov and geometer Nikolay Ivanovich Lobachevsky. Zinin was their pupil at the mathematics branch of the philosophical department of the Kazan University, 1830 - 1833. His graduation thesis "Perturbation Theory" written on "On perturbations of elliptic motions of planets", the topic suggested by Simonov, was awarded by the golden medal \cite[p. 25 - 33]{gumilevskii1965}.

Boyer's phrase reminds about celebrities with non-unique names. Andrey Andreevich Markov, father (06/2/(14)/1856 - 07/20/1922), Vladimir Andreevich Markov, younger brother of father (05/07(19)/1871 - 01/18/(30)/1897), Andrey Andreevich Markov, son of father (09/09(22)/1903 - 10/11/1979), and Alexander Alexandrovich Markov, related by profession (03/24/1937 - 10/23/1994) are first class mathematicians. Neglect of history is the road to misunderstanding: \textit{"Markov chains"} honors the father and \textit{"Markov algorithms"} is about the son's contribution \cite{kushner2006}. To emphasize achievements of Markov father, the author rephrases Hofmann's words and believes that chemist Zinin, being also the first class mathematician, would agree: \textit{"If Markov [father] has nothing more than to create chains named after him, even then his name should be inscribed in golden letters in the history of mathematics"}.

Markov: \textit{"In my opinion, the cases of variables linked into a chain so that when the value of one of them becomes known, subsequent variables appear independent on the variables preceding it, deserve attention"} \cite[p. 365, VS's translation]{markov1951}. Interesting facts about Markov are in Oscar Borisovich Sheynin's \cite{sheynin1989}, \cite{sheynin2007}. From \textit{"Theory of Probability. An Elementary Treatise against a Historical Background"} (English and Russian manuscripts kindly provided by O.B.S. to the author in an email), the author has known: \textit{"'Markov chain' first appeared (in French) in 1926 (Bernstein 1926, first line of § 16)"}, \cite{sheyninUnpublished}. 1926 and 1927 are the years of submission and publication of \cite[p. 40]{bernstein1927}, Figure \ref{FigBernstein}.
\begin{figure}[!h]
  \centering
  \includegraphics[width=115mm]{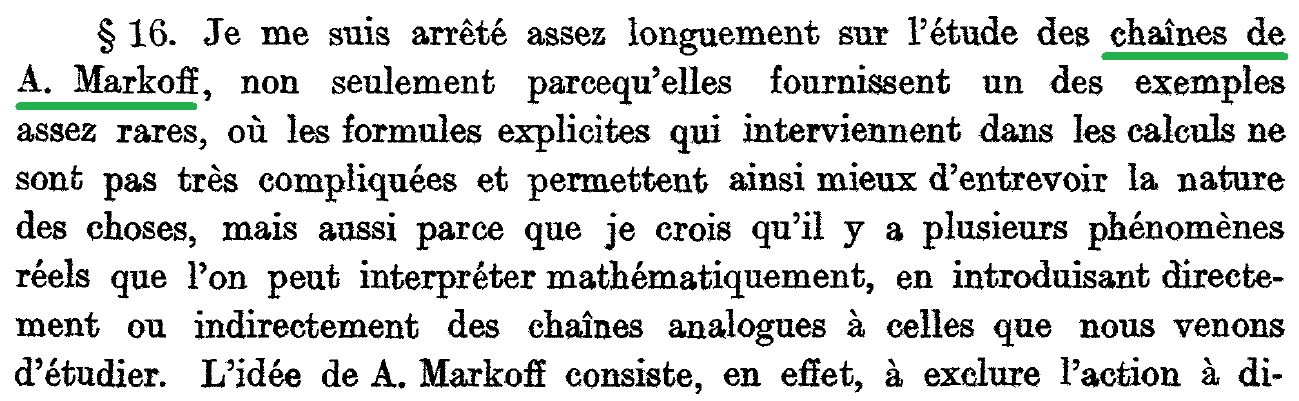}
  \caption[FigBernstein]
   {University Library. University of Illinois at Urbana-Champaign. A photocopy of the fragment of page 40 of Sergei Natanovich Bernstein's paper \cite{bernstein1927} (LEXILOGOS's translation \url{http://www.lexilogos.com/english/french_translation.htm}): \textit{"I have dwelt quite a long time on the study of the chains of A. Markoff, not only because they provide a rather rare example, where the explicit formulas which intervene in the calculations are not very complicated and thus allow better to glimpse the nature of things, but also because I believe that there are several real phenomena that can be interpreted mathematically, by introducing directly or indirectly chains similar to those we have just studied"}. }
  \label{FigBernstein}
\end{figure}
\textit{Reading Sheynin's manuscripts, the author was thinking about losses of the mathematical community following from the fact that they are unpublished.}

These are \textit{Markov chains} - not \textit{times}. Howard Taylor III applies "Markov times" since 1968 \cite[p. 1333 \textit{"Markov time or stopping time"}]{taylor1968} but not in 1965 \cite{taylor1965}. The former article cites Dynkin's \cite{dynkin1963}. The 1965's paper has no references to Russian works. Dynkin says "Markov moment" in \cite[p. 150]{dynkin1968}. This is not a \textit{"moment of distribution"} routinely occurring in statistical literature but synonym of \textit{"time"}, best expressed by English nouns \textit{"time"} or \textit{"instant"}. In the monograph \cite[p. 54]{dynkin1966}, Dynkin and Yushkevich  write "Markov moment" with the meaning of "Markov time".

Three \textit{independent} significant works of 1963 on stopping times are \cite{dynkin1963}, \cite{shiryaev1963}, \cite{chow1963}. The latter two have no the words "Markov times" but Dynkin's paper is translated in English using the "Markov instant". \textit{"Doklady ..."} received it on December 12, 1962. His monograph \cite[p. 142]{dynkin1963b} with Preface dated by March 31, 1962 defines \textit{a random variable independent on the future} and names it "Markov moment" with English synonyms "Market instant", "Market time". While the fundamental properties of the Markov and \textit{strong Markov} processes described in the latter monograph were presented by Dynkin earlier \cite{dynkin1956}, \cite{dynkin1959}, \cite{dynkin1959b}, the author did not find in there the words "Markov moment", "Markov instant", "Markov time". \textit{This investigation suggests that it was Dynkin who first published the name "Markov time" in the paper \cite{dynkin1963} and monograph \cite{dynkin1963b} in 1963, Figure \ref{FigDynkin}.}
\begin{figure}[!h]
  \centering
  \includegraphics[width=115mm]{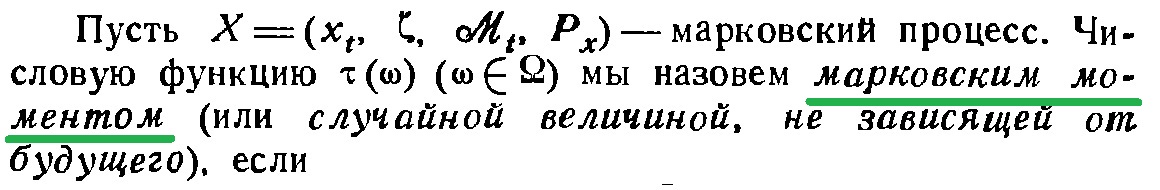}
  \caption[FigDynkin]
   {A photocopy of the fragment of page 142 of Eugene Borisovich Dynkin's monograph \cite[\textit{Strictly Markov Processes}]{dynkin1963b} (VS's translation): \textit{"Let $X=(x_t, \zeta, \mathcal{M}_t, P_x)$ - Markov process. A real valued function $\tau(\omega)$ $(\omega \in \Omega)$ we shall call Markov moment (or random variable, independent on future), if ..."}}
  \label{FigDynkin}
\end{figure}
\paragraph{"Markov time"?} An inventor has the right to name his or her invention. Markov chains and times imply commonality due to the probability spaces setup based on the theories of sets and measure. Accents differ. The Markov property of chains associate with conditional probabilities of future events dependent on the current realized state and independent on the past. The Markov times associate with the past and current events, which can be determined from the information already available. Neftci considers the latter property essential for the signals evaluation. The only reason, why current $I_t$ but not past information is needed, is that $I_t$ includes the past information.

Let us review the \textit{mature} 1968 Dynkin's definition \cite[p. 150]{dynkin1968} (VS's translation): \textit{"Let an expanding sequence of $\sigma$-algebras $F_0 \subseteq F_1 \subseteq \dots \subseteq F_n \subseteq \dots$ is given. Let us name the random variable $\tau$, taking non-negative integer values and the value $+\infty$, Markov moment, if for any finite $n$ $\{\tau=n\} \in F_n$. If, in addition, $\tau < +\infty$ with probability $1$, then let us speak that $\tau$ - stopping moment. Let $X_n$ - a random variable, measurable relative to $F_n$. The stopping time $\tau$ is named optimal, if the value $\boldsymbol{M}X_{\tau}$ is maximal".} Here, the Russian "moment" is a synonym of "time", "instant". This definition implies a set of elementary events and its \textit{nesting} sigma algebras forming expanding, due to $F$, measurable spaces. It adds measures - probabilities. These are traditional expanding probability spaces. \textit{Is there something in this definition binding $\tau$ to the Markov chains except the common setup? The author does not see it.}

Imagine a study that highlights interesting moments in Markov processes. Naming them "Markov times" is reasonable, if other \textit{X processes} are also in scope. Then, "Markov times", "X times" indicate that the instants are from different processes. Dynkin's definition has a wider sense but the word "Markov" associates with Bernstein's "Markov chains" and narrows it. \textit{It is Dynkin time}.

Neftci selects an existing wider concept with a narrower name to extract instants in price time series independent on the future. \textit{Contribution of Salih Neftci \cite{neftci1991} is in that, he has suggested to apply Markov times, defined to distinguish the class of strictly Markov processes in works of Dynkin and Yushkevich \cite{dynkin1956}, \cite{dynkin1959} and used by the modern theory of stochastic  processes \cite{shiryaev1996}, to formalize Technical Analysis widely exploited by traders \cite{murphy1999} and ignored \cite{malkiel2007} or studied \cite{brock1992} by academicians.} In this paper, the words \textit{Dynkin-Neftci time} are chosen to avoid an impression that prices are considered a priori as Markov chains. The time helps to detect trading patterns of Technical Analysis algorithmically by computers and statistically estimate their significance or uselessness. \textit{Will such statistics continue in the future is assumptions: differential equations describe a ballistic trajectory but  an anti-missile system can change it unpredictably}.

While Markov chains are widely applied, Markov himself writes \cite[p. 397]{markov1951}: \textit{"Our conclusions can be expanded also on the complex chains in which each number is directly connected not with one but several preceding it numbers"}. Let us notice that Dynkin times can be used within such a framework and likely for non-Markov processes. The assumption about dependence of a next state only on the last one undoubtedly simplifies simulations. \textit{However, a statement axiomatically and a priori postulating independence of the future on the past can be a speculation influencing on proper understanding markets.}

\paragraph{What may happen, if a model applies non-Dynkin-Neftci times?} Some trading simulators get \textit{delayed quotes}. They are useful for training as long as a student has no access to non-delayed ticks. "Looking in the future" 10 minutes ahead makes liquid S\&P 500 E-mini futures a "boring money machine". The hindsight, for which the simulator is not responsible, does not teach. After switching back to a non-delayed mode, emotions  return. Without mathematics it is clear that information from the future creates arbitrage in time illustrated in \textit{Back to the Future Part II, created by Robert Zemeckis, Bob Gale, 1989}.

There is a concept related to Dynkin-Neftci times. Discussing a filtration $F_i$, a history of the stock until time $i$ on the tree of prices states, \cite[p. 32]{baxter1996} defines \textit{"a previsible process ... on the same tree whose value at any given node at time-tick $i$ is dependent only on the history up to one time-tick earlier, $F_{i-1}$"}.

It is considered imperative, that mathematics of pricing must eliminate theoretical arbitrage. The same theorem, proving a necessary and sufficient condition of the absence of arbitrage, is in \cite[pp. 7 - 9, Theorem 1.7]{hunt2000}, \cite[p.4, Theorem]{duffie2001}. This rationally completes otherwise insufficient stochastic price models yielding a unique value of a derivative. Combining a strategy replicating portfolio with the absence of arbitrage yields option values \cite[p. 3]{hunt2000}.

The \textit{insider trading} of the frozen orange juice futures with tremendous profits for the main personages of the American comedy \textit{Trading Places, directed by John Landis, 1983}, is an \textit{unlawful} arbitrage. \textit{In the fiction}, the "instrument of revenge" is a nearby April contract. Currently, the expiration months are January F, March H, May K, July N, September U, November X. Since the events were developing during the holiday season in December, January or March contracts could be realistic. The wall clock is approaching 9:00:00 am - the opening. Currently, the market opens at 8:00:00 am. After opening at 102 cents per pound, prices move up: $102 \rightarrow 105 \rightarrow 108 \rightarrow 116 \rightarrow  117 \rightarrow  129 \rightarrow 132 \rightarrow  139 \rightarrow  142$, also due to Duke brothers buying on a stolen but falsified crop report. Winthorpe: "Now. Sell thirty April - one forty two!" or in another hearing "Now. Sell two hundred April at one forty two!" This triggers the \textit{opposite trend}: $142 \rightarrow 140 \rightarrow 137 \rightarrow 132 \rightarrow 130 \rightarrow 125 \rightarrow 120 \rightarrow 114 \rightarrow 108 \rightarrow 102$ right before the orange crop TV report. "Ladies and gentlemen, the orange crop estimates for the next year." Silence. The report is bearish: the cold winter is not apparently affecting the orange harvest. The real panic is: $96 \rightarrow 85 \rightarrow 77 \rightarrow 56 \rightarrow 46 \rightarrow 38 \rightarrow 35 \rightarrow 30 \rightarrow 29$ last. One contract is 15,000 pounds: one point is \$150. The move is $(142 - 29) * \$150 = \$16,950$. At the end, Winthorpe is busy closing the short positions and three times saying "hundred". 300 contracts could profit \$5,085,000 before commissions, taxes, and in 1983. \textit{It is worth noticing that during 1979 - 1983 there were no such low frozen orange juice prices. In real life, it would attract attention of the existing since 1974 Commodity Futures Trading Commission, CFTC.}

While the no-arbitrage theoretically links stochastic price processes with unique options values, the author believes that for trading futures with high leverage and large positions, a model accurately simulating discrete prices is more practical than the condition of no-arbitrage needed to rationally price derivatives. After reviewing this section, the author's younger son-student Dmitri has "invented" the joke: \textit{"To be a trader, one does not have to be successful"}.

\paragraph{Optimal trading elements, OTE.} Between entering and exiting the market, MPS0 reverses long to short positions \cite[pp. 25 - 26, Property 4]{salov2007}. Several MPS0 may generate the same $PL$. For constant $C$, times of reversal transactions and some \textit{local} price minimums and maximums coincide. The net action $\sum_{i=1}^{i=n} U_{i,MPS0} = 0$, MPS0 $\in \mathfrak{U}$. The time of the last transaction of MPS0 is non-Dynkin-Neftci. It can change after arriving new information. This artificial transaction marks $PL$ to market. All transactions before the last one associate with Dynkin-Neftci times and can be used as signals for building real trading rules \cite{salov2008}. \textit{The later can lose money.}

MPS0 with $W=1$ creates \textit{optimal trades} adjacent in time. \cite[p. 39]{salov2011b} defines the \textit{optimal trading element}: \textit{"a collective name for properties associated with an optimal trade returned by an MPS"}. This does not limit the number of properties. The key is their association with the MPS0 optimal trades. The latter depend on the \textit{filtering cost $FC$}. Perspective properties are \cite[p. 39]{salov2011b}: a) trade direction  - a buy or sell to initiate the trade, b) profit - optimal trades always profit, c) duration - time length of the trade, d) number of ticks including the first and last transaction of the trade, e) volume - total market volume during the trade, f) empirical distribution of a-increments - waiting times between neighboring ticks, g) empirical distribution of b-increments - price increments between neighboring ticks, h) empirical distribution of price and/or volume. Once a MPS0 with a filtering cost as a tool is applied to a chain of ticks and OTE are evaluated, the next analytical step is computing statistics of OTE. Due to reversal properties, a buying OTE, BOTE, is followed by a selling OTE, SOTE, and vice versa. The mean b-increment (price increment) of a BOTE is always positive. The mean b-increment of a SOTE is always negative.

\paragraph{OTE by example.} Figure \ref{FigOTE} represents eight OTEs for filtering cost \$100 found in trading sessions on April 10, 2017 for ESM17, ESU17, and ESZ17. The red up and blue down parallel lines shadow areas above time intervals of the optimal trades.

\begin{figure}[!h]
  \centering
  \includegraphics[width=110mm]{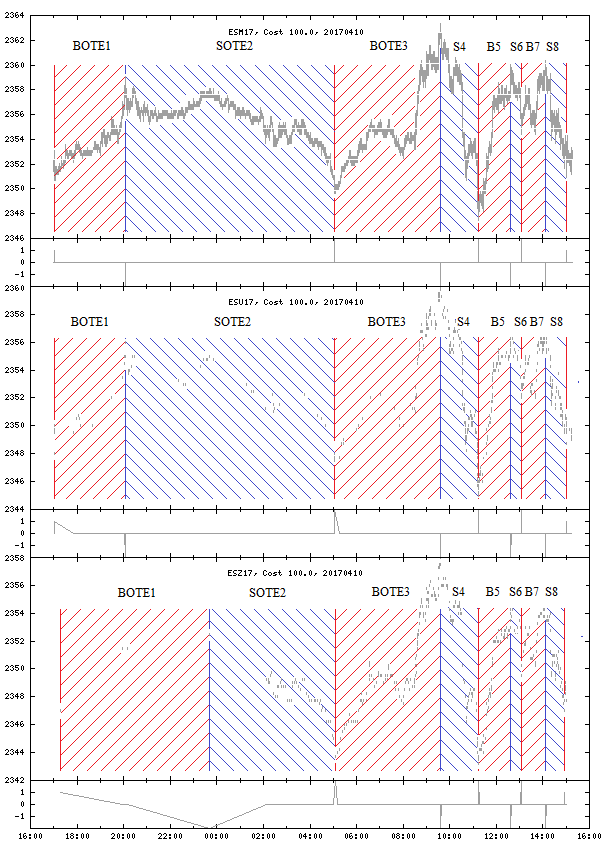}
  \caption[FigOTE]
   {MPS0, Filtering cost \$100, E-mini S\&P 500 Futures Time \& Sales Globex, \url{http://www.cmegroup.com/}, transaction prices of ESM17, ESU17, and ESZ17 for the time range [Sunday April 9, 2017, 17:00:00 - Monday April 10, 2017, 15:15:00], CST. Plotted using custom C++  and Python programs and gnuplot \url{http://www.gnuplot.info/}.}
  \label{FigOTE}
\end{figure}
\begin{center}
\begin{longtable}{|r|r|r|r|r|r|r|r|}
\caption[Guessed Counts]{ESM17, Session OTEs, $FC$ = \$100, $C$ = \$4.68, $W=1$.} \label{TblESM17OTE} \\
 \hline
 \multicolumn{1}{|c}{\#} &
 \multicolumn{1}{|c}{$t_{start}$} &
 \multicolumn{1}{|c}{$P_{start}$} &
 \multicolumn{1}{|c}{$t_{end}$} &
 \multicolumn{1}{|c}{$P_{end}$} &
 \multicolumn{1}{|c}{$\Delta t, s$} &
 \multicolumn{1}{|c}{$PL$, \$} &
 \multicolumn{1}{|c|}{Type} \\
 \hline 
 \endfirsthead
 \multicolumn{8}{c}%
 {\tablename\ \thetable{} -- continued from previous page} \\
 \hline
 \multicolumn{1}{|c}{\#} &
 \multicolumn{1}{|c}{$t_{start}$} &
 \multicolumn{1}{|c}{$P_{start}$} &
 \multicolumn{1}{|c}{$t_{end}$} &
 \multicolumn{1}{|c}{$P_{end}$} &
 \multicolumn{1}{|c}{$\Delta t, s$} &
 \multicolumn{1}{|c}{$PL$, \$} &
 \multicolumn{1}{|c|}{Type} \\
 \hline 
 \endhead
 \hline \multicolumn{8}{|r|}{{Continued on next page}} \\ \hline
 \endfoot
 \hline
 \endlastfoot
1 & 2017-04-09 17:02:54 & 2350.75 & 2017-04-09 20:04:00 & 2359.00 & 10866 & 403.14 & BOTE\\
2 & 2017-04-09 20:04:00 & 2359.00 & 2017-04-10 05:03:55 & 2349.75 & 32395 & 453.14 & SOTE\\
3 & 2017-04-10 05:03:55 & 2349.75 & 2017-04-10 09:35:48 & 2363.25 & 16313 & 665.64 & BOTE\\
4 & 2017-04-10 09:35:48 & 2363.25 & 2017-04-10 11:14:41 & 2347.50 &  5933 & 778.14 & SOTE\\
5 & 2017-04-10 11:14:41 & 2347.50 & 2017-04-10 12:37:14 & 2360.00 &  4953 & 615.64 & BOTE\\
6 & 2017-04-10 12:37:14 & 2360.00 & 2017-04-10 13:04:45 & 2354.50 &  1651 & 265.64 & SOTE\\
7 & 2017-04-10 13:04:45 & 2354.50 & 2017-04-10 14:06:28 & 2360.25 &  3703 & 278.14 & BOTE\\
8 & 2017-04-10 14:06:28 & 2360.25 & 2017-04-10 15:00:07 & 2351.00 &  3219 & 453.14 & SOTE\\
\end{longtable}
\end{center}
The eight profits and durations from Table \ref{TblESM17OTE} form two sample distributions
\begin{verbatim}
PL distribution
Mean                = 489.0775
Samples size        = 8
Maximum value       = 778.14
Maximum value count = 1
Minimum value       = 265.64
Minimum value count = 1
Variance            = 33590.9598
Std. deviation      = 183.278367
Skewness            = 0.322282476
Excess kurtosis     = -1.39538524
0 (233.442, 311.256] 2
1 (311.256, 389.07] 0
2 (389.07, 466.884] 3
3 (466.884, 544.698] 0
4 (544.698, 622.512] 1
5 (622.512, 700.326] 1
6 (700.326, 778.14] 1

Trade time distribution
Mean                = 9879.125
Samples size        = 8
Maximum value       = 32395
Maximum value count = 1
Minimum value       = 1651
Minimum value count = 1
Variance            = 105625105
Std. deviation      = 10277.4075
Skewness            = 1.82711153
Excess kurtosis     = 1.68683095
0 (0, 3239.5] 2
1 (3239.5, 6479] 3
2 (6479, 9718.5] 0
3 (9718.5, 12958] 1
4 (12958, 16197.5] 0
5 (16197.5, 19437] 1
6 (19437, 22676.5] 0
7 (22676.5, 25916] 0
8 (25916, 29155.5] 0
9 (29155.5, 32395] 1
\end{verbatim}

\paragraph{Start and birth times of OTE.} When the first tick arrives, nothing is known with respect to the MPS0 and OTE, unless the previous trading sessions are considered. The \textit{OTE start time} $t_s^{OTE}$ remains unknown until the \textit{first} price arrived will mark a move exceeding $2FC$ at least by one $\delta_{ESM17}=0.25$. This event, \textit{the OTE birth time} $t_b^{OTE}$, is in the future making $t_s^{OTE}$ non-Dynkin-Neftci. The $2FC$ are counted from a local minimum or maximum price. $t_b^{OTE}$ is Dynkin-Neftci. Just only such a price drop or rise occurs, $t_s^{OTE}$ is fixed. After this, the start time is Dynkin-Neftci but the current OTE end time $t_e^{OTE}$ coinciding with the next OTE start time are unknown - non-Dynkin-Neftci. Notice, all MPS0 start, birth, and end times prior the current just fixed start time cannot change. They are Dynkin-Neftci times. Figure \ref{FigOTEStartBirthEnd} is a zoom in to Figure \ref{FigOTE} for ESM17 SOTE $\#4$, Table \ref{TblESM17OTE}. Again, the $\#4$ end time can be determined only after arriving the $\#5$ birth time.

$|\Delta P| =\frac{2\times \$100}{\$50}+0.25=4.25$. The last maximum (our case) price is 2363.25 observed at 09:35:48. Would the price go higher, it would become the new trailing high. Subtracting $|\Delta P|$ yields the \textit{birth price} 2359.00. It arrives at 09:59:13. This closes the previous BOTE and creates the next SOTE$(P_s^{S4}=2363.25, t_s^{S4}=09:35:48; P_b^{S4}=2359.00, t_e^{S4}=09:59:13; P_e^{S4}=?, t_e^{S4}=?)$.

\begin{figure}[!h]
  \centering
  \includegraphics[width=125mm]{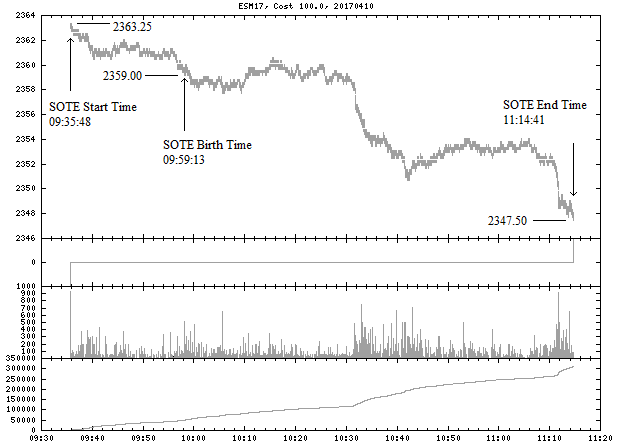}
  \caption[FigOTEStartBirthEnd]
   {Time \& Sales Globex, \url{http://www.cmegroup.com/}, ESM17, Sunday April 9, 2017, 17:00:00 - Monday April 10, 2017, 15:15:00. MPS0, $FC = \$100$, transaction prices, SOTE \#4, Table \ref{TblESM17OTE} start, birth, and end times. Plotted using custom C++ and Python programs and gnuplot \url{http://www.gnuplot.info/}.}
  \label{FigOTEStartBirthEnd}
\end{figure}

By definition, OTEs include arbitrary properties associated with optimal MPS trades. Figure \ref{FigOTEDistr} illustrates four properties of SOTE $\#4$. Let us notice, that if price b-increments would be i.i.d log-normal (or normal), then distribution of prices would be the same but with a different mean and variance. The sample distribution density of prices has three maximums and does not correspond to a unimodal log-normal (or normal) distribution.

\begin{figure}[!h]
  \centering
  \includegraphics[width=125mm]{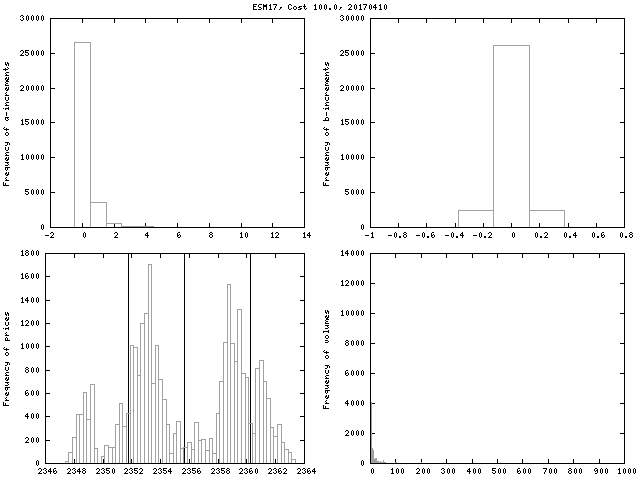}
  \caption[FigOTEDistr]
   {Time \& Sales Globex, \url{http://www.cmegroup.com/}, ESM17, Sunday April 9, 2017, 17:00:00 - Monday April 10, 2017, 15:15:00. MPS0, $FC = \$100$,  sample distributions of a-increments, b-increments, prices, and volumes, SOTE \#4, Table \ref{TblESM17OTE}. Plotted using custom C++ and Python programs and gnuplot \url{http://www.gnuplot.info/}.}
  \label{FigOTEDistr}
\end{figure}

\paragraph{Three scenarios for OTE.} Once a new OTE is born, three exclusive scenarios exist: 1) the profit of the current OTE will grow at least by one $\delta$, \textit{absolute minimal price fluctuation}; 2) a next opposite type OTE will replace the current; 3) the trading session will terminate. For a non-intraday trader, the chains of prices and OTEs are continued. \textit{Technically, one can apply MPS, as a tool, using any chain of prices such as daily last prices.}

Figure \ref{FigOTEStartBirthEnd} illustrates the first scenario for SOTE4, $S4$, from Table \ref{TblESM17OTE}. After the birth time $t_b=$09:59:13, the price never moved more than $4.25=17\delta_{ESM17}$ points up on the right of the last lowest price until the $B5$ birth price.

 Table \ref{TblESM17OTE2} collects basic OTE properties after switching to $FC=\$74.99$. ESM17 prices are the same. $S4$ on Figures \ref{FigOTEScenario2}, \ref{FigOTEStartBirthEnd2} presents the second scenario. The influence of $FC$ under other equal conditions on the number of OTEs, their durations and $PL$ distributions, spectra is studied in \cite{salov2013}, \cite{salov2017}.
\begin{center}
\begin{longtable}{|r|r|r|r|r|r|r|r|}
\caption[Guessed Counts]{ESM17, Session OTEs, $FC$ = \$74.99, $C$ = \$4.68, $W=1$.} \label{TblESM17OTE2} \\
 \hline
 \multicolumn{1}{|c}{\#} &
 \multicolumn{1}{|c}{$t_{start}$} &
 \multicolumn{1}{|c}{$P_{start}$} &
 \multicolumn{1}{|c}{$t_{end}$} &
 \multicolumn{1}{|c}{$P_{end}$} &
 \multicolumn{1}{|c}{$\Delta t, s$} &
 \multicolumn{1}{|c}{$PL$, \$} &
 \multicolumn{1}{|c|}{Type} \\
 \hline 
 \endfirsthead
 \multicolumn{8}{c}%
 {\tablename\ \thetable{} -- continued from previous page} \\
 \hline
 \multicolumn{1}{|c}{\#} &
 \multicolumn{1}{|c}{$t_{start}$} &
 \multicolumn{1}{|c}{$P_{start}$} &
 \multicolumn{1}{|c}{$t_{end}$} &
 \multicolumn{1}{|c}{$P_{end}$} &
 \multicolumn{1}{|c}{$\Delta t, s$} &
 \multicolumn{1}{|c}{$PL$, \$} &
 \multicolumn{1}{|c|}{Type} \\
 \hline 
 \endhead
 \hline \multicolumn{8}{|r|}{{Continued on next page}} \\ \hline
 \endfoot
 \hline
 \endlastfoot
 1 & 2017-04-09 17:02:54 & 2350.75 & 2017-04-09 20:04:00 & 2359.00 & 10866 & 403.14 & BOTE\\
 2 & 2017-04-09 20:04:00 & 2359.00 & 2017-04-10 05:03:55 & 2349.75 & 32395 & 453.14 & SOTE\\
 3 & 2017-04-10 05:03:55 & 2349.75 & 2017-04-10 06:37:43 & 2355.50 & 5628 & 278.14 & BOTE\\
 4 & 2017-04-10 06:37:43 & 2355.50 & 2017-04-10 07:59:57 & 2352.50 & 4934 & 140.64 & SOTE\\
 5 & 2017-04-10 07:59:57 & 2352.50 & 2017-04-10 09:35:48 & 2363.25 & 5751 & 528.14 & BOTE\\
 6 & 2017-04-10 09:35:48 & 2363.25 & 2017-04-10 10:42:06 & 2350.75 & 3978 & 615.64 & SOTE\\
 7 & 2017-04-10 10:42:06 & 2350.75 & 2017-04-10 10:53:17 & 2354.00 & 671 & 153.14 & BOTE\\
 8 & 2017-04-10 10:53:17 & 2354.00 & 2017-04-10 11:14:41 & 2347.50 & 1284 & 315.64 & SOTE\\
 9 & 2017-04-10 11:14:41 & 2347.50 & 2017-04-10 12:37:14 & 2360.00 & 4953 & 615.64 & BOTE\\
10 & 2017-04-10 12:37:14 & 2360.00 & 2017-04-10 13:04:45 & 2354.50 & 1651 & 265.64 & SOTE\\
11 & 2017-04-10 13:04:45 & 2354.50 & 2017-04-10 13:15:52 & 2358.25 & 667 & 178.14 & BOTE\\
12 & 2017-04-10 13:15:52 & 2358.25 & 2017-04-10 13:40:14 & 2354.50 & 1462 & 178.14 & SOTE\\
13 & 2017-04-10 13:40:14 & 2354.50 & 2017-04-10 14:06:28 & 2360.25 & 1574 & 278.14 & BOTE\\
14 & 2017-04-10 14:06:28 & 2360.25 & 2017-04-10 15:00:07 & 2351.00 & 3219 & 453.14 & SOTE\\
15 & 2017-04-10 15:00:07 & 2351.00 & 2017-04-10 15:02:13 & 2354.25 & 126 & 153.14 & BOTE\\
16 & 2017-04-10 15:02:13 & 2354.25 & 2017-04-10 15:14:30 & 2351.25 & 737 & 140.64 & SOTE\\
\end{longtable}
\end{center}

\begin{figure}[!h]
  \centering
  \includegraphics[width=125mm]{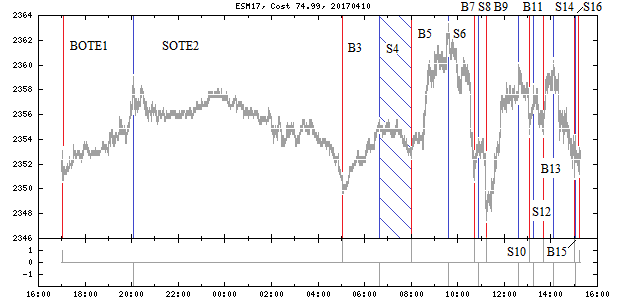}
  \caption[FigOTEScenario2]
   {MPS0, Filtering cost \$74.99, E-mini S\&P 500 Futures Time \& Sales Globex, \url{http://www.cmegroup.com/}, transaction prices of ESM17 for the time range [Sunday April 9, 2017, 17:00:00 - Monday April 10, 2017, 15:15:00], CST. Plotted using custom C++  and Python programs and gnuplot \url{http://www.gnuplot.info/}.}
  \label{FigOTEScenario2}
\end{figure}
\begin{figure}[!h]
  \centering
  \includegraphics[width=125mm]{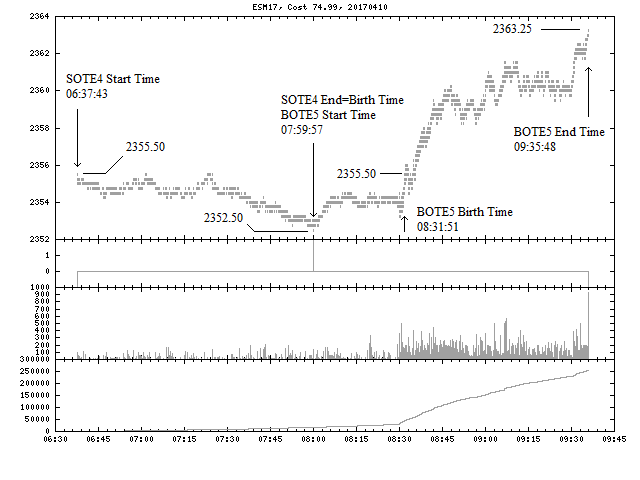}
  \caption[FigOTEStartBirthEnd2]
   {Time \& Sales Globex, \url{http://www.cmegroup.com/}, ESM17, Sunday April 9, 2017, 17:00:00 - Monday April 10, 2017, 15:15:00. MPS0, $FC = \$74.99$, transaction prices, SOTE \#4, BOTE \#5 Table \ref{TblESM17OTE2} start, birth, and end times. Plotted using custom C++ and Python programs and gnuplot \url{http://www.gnuplot.info/}.}
  \label{FigOTEStartBirthEnd2}
\end{figure}
In the second scenario, the price does not go a $\delta$ in the profit direction of the OTE type. Selling short $S4$ at the OTE birth price 2352.50, 07:59:57 and buying at the next $B5$ birth price 2355.50, 08:31:51 loses $(-2352.50 + 2355.50) \times \$50 -\$9.36 = -\$159.36$. A simple trading rule - enter/exit and revert position by buying BOTE and selling SOTE at the OTE birth price - resembles the \textit{Alexander's filter}, see details in \cite[p. 71, pp. 92 - 93]{salov2013}. Applying it to $(S4, B5)$, with the $B5$'s profit $([2363.25 - 3.0] - 2355.50) \times \$50 - 2 \times \$4.68 = \$228.14$, would compensate the $S4$'s loss $-\$159.36$ and be profitable $\$228.14 -\$159.36 =\$68.78$, Figure \ref{FigOTEStartBirthEnd2}. This rule can start losing, if the second scenario continues in a chain of consecutive OTEs. Empirical distributions of the OTE $PL$ \cite[p. 27, Figure Profit Frequencies]{salov2012}, \cite[p. 95, Figure 35]{salov2013} help estimating the mean $PL$. It was found negative in a range of $FC$ and $C=\$4.66$. A mathematical expectation of $PL$ can be not the only criterion influencing on trading decisions \cite{salov2015}.

The third scenario implies that the current OTE did not get any development relative to $FC$ with regard to the profit given the time prior the trading sessions is closed. The SOTE \#16 from Table \ref{TblESM17OTE2} is an example, Figure \ref{FigOTEStartBirthEnd3}.

\begin{figure}[!h]
  \centering
  \includegraphics[width=125mm]{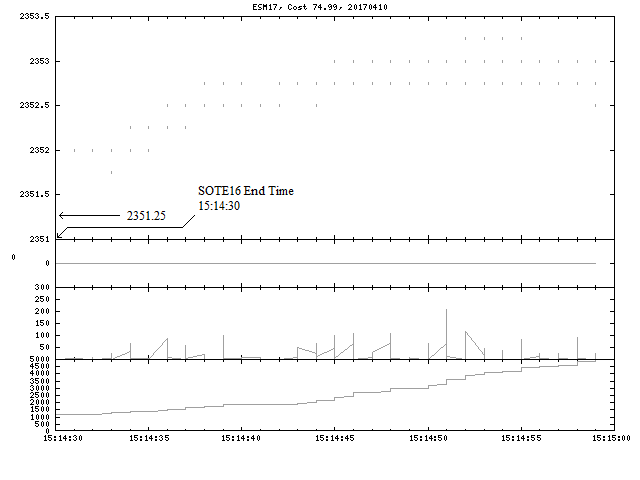}
  \caption[FigOTEStartBirthEnd3]
   {Time \& Sales Globex, \url{http://www.cmegroup.com/}, ESM17, Sunday April 9, 2017, 17:00:00 - Monday April 10, 2017, 15:15:00. MPS0, $FC = \$74.99$, transaction prices, after SOTE \#16 Table \ref{TblESM17OTE2}. Plotted using custom C++ and Python programs and gnuplot \url{http://www.gnuplot.info/}.}
  \label{FigOTEStartBirthEnd3}
\end{figure}

\paragraph{The maximum loss strategy, MLS.} Such a strategy could be a risk estimator for given $\boldsymbol{P}$ and $\boldsymbol{C}$. Here, we only mention that $\boldsymbol{U}_{MLS} \in \mathfrak{U}$ is not necessarily $-\boldsymbol{U}_{MPS0} \in \mathfrak{U}$. For instance, for $\boldsymbol{P}=(P, \dots, P)^T$, $P > 0$, $\boldsymbol{C}=(C, \dots, C)^T$, $C>0$, $\boldsymbol{U}_{MPS0} = \boldsymbol{U}_{d.n.s.}$, losing nothing, while many other strategies lose more due to transactions costs. From section "Strategies generating extreme industry gains", it follows that under such conditions, for even $3 < n$, $\boldsymbol{U}_{MLS}=(W, -2W, \dots, 2W, -W)^T$ with the loss $-2CW(n-1)$ (valid for any $1<n$). Another and only candidate would be $-\boldsymbol{U}_{MLS}=(-W, 2W, \dots, -2W, W)^T$.

\paragraph{MPS studies, 2007 - 2017.} Potential profit, as a number computed without accounting transaction costs, and its simple, under this condition, algorithm were suggested by Robert Pardo \cite[pp. 125 - 126]{pardo1992}. His words \textit{"the measurement of the potential profit that a market offers is not a widely understood idea"} attracted the author, who thought that transaction costs would complicate the algorithm, yield a rich concept of the maximum profit strategy, vector, expand research and application horizons: MPS is \textit{another} face of the \textit{same} market. The unpublished formulation of MPS and related algorithms were developed by the author in 1994. They were not applied or widely reported in that time and eventually have been followed by the studies and publications listed below.

\begin{enumerate}
\item[1)]
The l- and r-algorithms for MPS0, and algorithms for the first MPS1 and second MPS2 P\&L reserve strategies \cite{salov2007}, \cite[p. 203, Appendix F, A Minor Correction]{salov2013}.
\item[2)]
Fist published evaluations of MPS0, MPS1, MPS2 \cite[Chapter 7]{salov2007}. Using MPS0 as a performance benchmark \cite[pp. 151 - 152]{salov2007}, for comparing different intervals of a single market \cite[p. 152]{salov2007}, for comparing different markets \cite[p. 153, Chapter 10]{salov2007}, as moving indicators \cite[p. 153]{salov2007}.
\item[3)]
First published statistics of MPS0 optimal trades \cite[Chapter 9]{salov2007}.
\item[4)]
Proposal to apply MPS0 for options on potential profit \cite[p. 155]{salov2007}.
\item[5)]
Proposal to consider MSP0 as a quantitative alternative to not well defined trend and volatility \cite[pp. 153 - 154]{salov2007}, \cite[Alternative analysis]{salov2011b}.
\item[6)]
Using MPS0 for filtering events \cite[pp. 154 - 155]{salov2007}, and defining trading signals of real trading rules and strategies \cite{salov2008}.
\item[7)]
Introducing the a-b-c-increments classification \cite{salov2011}. Statistical studies of the a-b-c-increments including MPS0 optimal trades \cite{salov2011}, \cite{salov2011b}, \cite{salov2012}, \cite{salov2013}, \cite{salov2017}. Chain reactions \cite[pp. 101 - 105]{salov2013}, \cite[pp. 52 - 56, Extreme b-increments]{salov2017}.
\item[8)]
Proposal of new notation for iteration of functions and iteral of functions to describe a-b-c-processes \cite{salov2012b}, \cite{salov2012c}.
\item[9)]
Introducing OTE, BOTE, SOTE \cite{salov2011b}. Studying OTE statistics and expectations for trading based on the OTE birth time and price \cite{salov2011b}, \cite{salov2012}, \cite[p. 95, Figure 35]{salov2013}.
\item[10)]
New (not in \cite{salov2007}, \cite{salov2008}, \cite{salov2011}, \cite{salov2011b}) discrete, spectral MPS0 properties and software framework \cite{salov2013}, \cite{salov2017}. Relationships between MPS0 and trading volume \cite[pp. 26 - 30, Figures 18 - 23]{salov2017}. Mathematical expectations as not the only reason for trading decisions \cite{salov2014}, \cite[p. 12, Livermore about the hope and fear, pp. 28 - 31, The role of time. Trading and speculation]{salov2015}.
\end{enumerate}

\section{Patterns}

\paragraph{Strategies vs. rules.} "Trading strategies" have two meanings: 1) records of trading actions like chains - vectors of $\mathfrak{U}$, and 2) reasons causing the actions. Here, the latter are named "trading rules". Automation of trading depends on the formalization of rules. Formalization of a rule is valuable, if it yields an algorithm or program, which can be evaluated by a human being or computer. For this, it can rely on Dynkin-Neftci times distinguishing events, which can be determined without looking to the future.

Given available information required by a rule, the latter is evaluated and a conclusion is made whether an event takes place. The number of events within a time interval, associates with the frequency of the market offers \textit{connected} to the rule. The next is to \textit{estimate} what may happen after the event. This estimation can be subjective or imply a preliminary objective statistical research, a search of dependencies between an event and market follow up, \textit{if any}. Such a research assumes evaluation of the past up to day information and may have meaning, if the future repeats the past in something. Finding this something is one of the goals. Due to non-stationary market conditions, once determined useful regularities may stop working and trigger a new research. The question discussed in the last section, especially logical if prices are totally unpredictable, is \textit{why do speculative markets exist}?

MPS0 indicates local price minimums and maximums on a given time interval. The last extreme depends on the future, non-Dynkin-Neftci time. Other are Dynkin-Neftci times. A MPS0 builds a chain of OTEs such as on Figure \ref{FigOTE} $(B1,S2,B3,S4,B5,S6,B7,S8)$. An OTE is characterized by the start and end OTE times and prices $t_s$, $P_s$, $t_e$, $P_e$. Its birth time and price $t_b$, $P_b$ are practically important being deal with a Dynkin-Neftci time.

\paragraph{How can MPS0 and OTE define patterns?} Let us consider a hypothetical chain $(B1,S2,B3,S4,B5,S6)$. By the MPS0 properties, $P_e^{B1}=P_s^{S2}$, $P_e^{S2}=P_s^{B3}$, $P_e^{B3}=P_s^{S4}$, $P_e^{S4}=P_s^{B5}$, $P_e^{B5}=P_s^{S6}$. Since $S6$ is current, its birth price $P_b^{S6} < P_s^{S6}$ is realized and $P_s^{S6} - P_b^{S6} \ge 2FC + \delta$, where $2FC$ is converted to full points. The known head and shoulders pattern can be defined by the condition: $(P_s^{B1}<P_s^{B3}) \&\& (P_s^{B3}==P_s^{B5})\&\&(P_e^{B1}<P_e^{B3})\&\&(P_e^{B5}<P_e^{B3})\&\&(P==P_b^{B5})$. The logical equality $==$ and less $<$ comparisons assume tolerances. These tolerances together with $FC$ are the pattern optimization parameters.

\paragraph{Algorithmic optimization.} Since $B1, \dots, B5$, $P_s^{S6}$, and $P_b^{S6}$ are fixed, the logical expression in the previous paragraph requires only a one time evaluation of all individual comparisons but $(P==P_b^{B5})$, which must be monitored. A recognition of this pattern can be efficient and does not require reevaluation all six OTEs for each arriving price $P$.

\paragraph{Interval $[t_s,t_b]$.} In contrast with theories of continuous prices, real prices are discrete \cite[pp. 32 - 33]{salov2013}, \cite[pp. 3 - 10]{salov2017}. In particular, futures prices are products of natural numbers and $\delta$. For constant $FC$, the price change to monitor is expressed in $\delta$ as $|P_s^{OTE} - P_b^{OTE}|=|\delta N_s^{OTE} - \delta N_b^{OTE}| = \delta |\Delta N^{OTE}| > \frac{2FC}{k}$ and $N_{s,b}^{OTE} = \lfloor \frac{2FC}{\delta k}\rfloor + 1$. Example, $\lfloor \frac{2 \times \$100}{0.25 \times \$50}\rfloor + 1 = 17$ deltas or 4.25 full points; $\lfloor \frac{2 \times \$74.99}{0.25 \times \$50}\rfloor + 1 = 12$ deltas or 3 full points. Notice, that for $FC = 0$ the formula returns 1: MPS extracts only profitable but not break even optimal trades.

It is possible that due to a \textit{price gap} the price will jump over $P_b^{OTE}$. This still indicates that the new OTE is born and actual price $P_s^{OTE}$ is fixed in the price chain. The time of the gapped price is the born time $t_b^{OTE}$.

Now, the current OTE segment $[t_s, t_b]$ is fixed. The $[t_b, t_{current}]$ is developing following one of the three scenarios. The l-, r-algorithms routinely extract all previous times $t_s$ and $t_b$ and allow to study what happens at these times and in the past adjacent intervals $[t_s, t_b]$ and $[t_b, t_e]$ or entire OTE intervals $[t_s, t_e]$. The next paragraph is an example of a study.

\paragraph{Empirical Cumulative Distribution Functions, ECDF, of OTE profits.} Once $t_b$ is detected, the profit of $[t_s,t_b]$ cannot be realized since it is in the past. It is interesting how far the price can move in the profit direction after $t_b$. In other words, with fixed $C = \$4.68$ and $FC$ how many OTEs exceed this interval by $\delta, 2\delta, 3\delta, \dots$ etc.
\begin{figure}[!h]
  \centering
  \includegraphics[width=125mm]{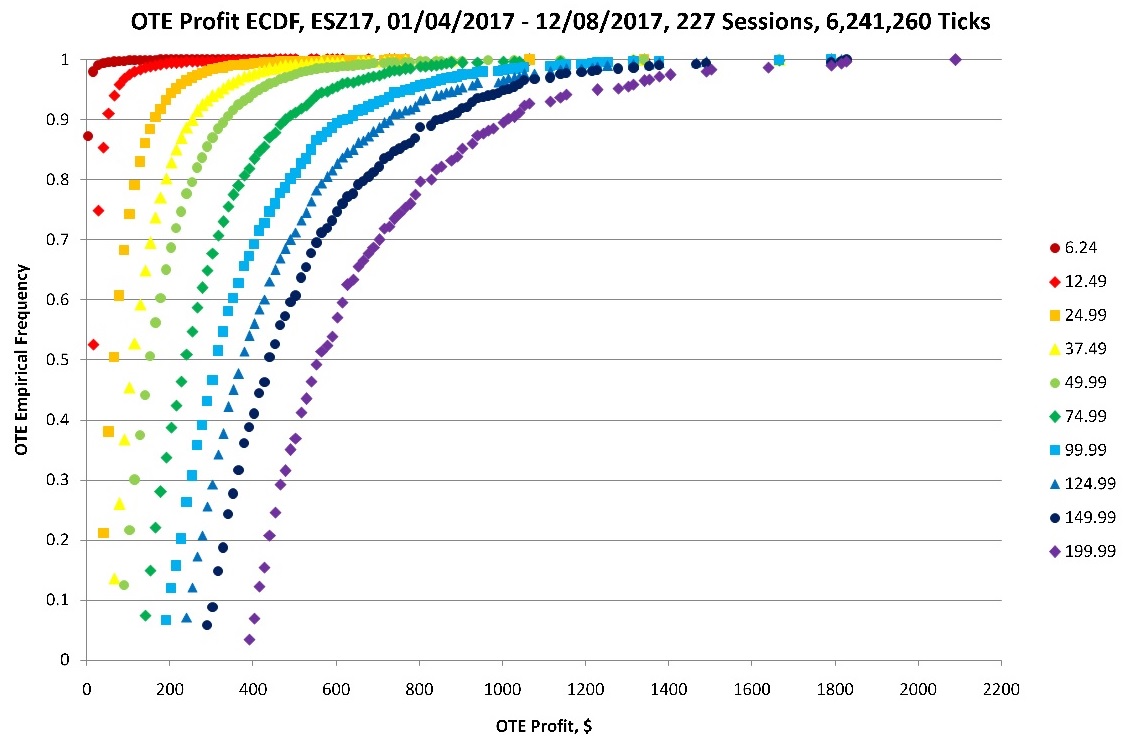}
  \caption[FigOTEProfitECDF]
   {Time \& Sales Globex, \url{http://www.cmegroup.com/}, ESZ17, Wednesday January 4, 2017 - Friday December 8, 2017, 227 trading sessions with the time range 17:00:00 (previous day) - 15:15:00 (closing day), 6,241,260 transaction ticks, $C = \$4.68$. MPS0 OTE ECDF for $FC = 6.24, 12.49, 24.99, 37.49, 49.99, 74.99, 99.99, 124.99, 149.99, 199.99$ dollars. Plotted using custom C++, Python, AWK programs, and Microsoft Excel.}
  \label{FigOTEProfitECDF}
\end{figure}
Figure \ref{FigOTEProfitECDF} depicts ECDF of OTE profits depending on $FC$. The results are for individual trading sessions assuming that a trader does not leave open positions for next sessions. The ranges 15:30:00 - 16:00:00 are ignored. Table \ref{TblOTEProfitStatistics} summarizes sample statistics.
\begin{center}
\begin{longtable}{|r|r|r|r|r|r|r|r|r|r|}
\caption[Guessed Counts]{ESZ17 OTE Profit Sample Statistics, $C=\$4.68$.} \label{TblOTEProfitStatistics} \\
 \hline
 \multicolumn{1}{|c}{$FC$} &
 \multicolumn{1}{|c}{$N$} &
 \multicolumn{1}{|c}{Mean} &
 \multicolumn{1}{|c}{Min} &
 \multicolumn{1}{|c}{$N_{Min}$} &
 \multicolumn{1}{|c}{Max} &
 \multicolumn{1}{|c}{$N_{Max}$} &
 \multicolumn{1}{|c}{StDev} &
 \multicolumn{1}{|c}{Skew.} &
 \multicolumn{1}{|c|}{E-Kurt.} \\
 \hline 
 \endfirsthead
 \multicolumn{10}{c}%
 {\tablename\ \thetable{} -- continued from previous page} \\
 \hline
 \multicolumn{1}{|c}{$FC$} &
 \multicolumn{1}{|c}{$N$} &
 \multicolumn{1}{|c}{Mean} &
 \multicolumn{1}{|c}{Min} &
 \multicolumn{1}{|c}{$N_{Min}$} &
 \multicolumn{1}{|c}{Max} &
 \multicolumn{1}{|c}{$N_{Max}$} &
 \multicolumn{1}{|c}{StDev} &
 \multicolumn{1}{|c}{Skew.} &
 \multicolumn{1}{|c|}{E-Kurt.} \\
 \hline 
 \endhead
 \hline \multicolumn{10}{|r|}{{Continued on next page}} \\ \hline
 \endfoot
 \hline
 \endlastfoot
6.24 & 860375 & 5.41 & 3.14 & 751179 & 1340.64 & 1 & 10.0 & 20.8 & -0.82\\
12.49 & 69715 & 31.15 & 15.64 & 36654 & 1340.64 & 1 & 32.3 & 7.14 & 10.5\\
24.99 & 12342 & 92.88 & 40.64 & 2601 & 1340.64 & 1 & 68.5 & 3.51 & 25.2\\
37.49 & 5816 & 146.40 & 65.64 & 793 & 1665.64 & 1 & 96.5 & 3.38 & 25.4\\
49.99 & 3542 & 195.03 & 90.64 & 439 & 1665.64 & 1 & 119.9 & 2.88 & 16.3\\
74.99 & 1749 & 288.59 & 140.64 & 130 & 1790.64 & 1 & 156.6 & 2.43 & 11.4\\
99.99 & 1034 & 379.66 & 190.64 & 68 & 1790.64 & 1 & 198.4 & 2.18 & 7.03\\
124.99 & 734 & 450.82 & 240.64 & 52 & 1815.64 & 1 & 223.2 & 2.07 & 6.00\\
149.99 & 534 & 524.11 & 290.64 & 31 & 1828.14 & 1 & 243.5 & 1.94 & 5.05\\
199.99 & 317 & 658.23 & 390.64 & 11 & 2090.64 & 1 & 280.7 & 1.95 & 4.61\\
\end{longtable}
\end{center}
Let us review Table \ref{TblOTEProfitStatistics} and $FC=\$49.99$ as an example. Buying BOTE or selling SOTE at $t_b$ makes the part of the OTE profit $2 \times \$49.99 = \$99.98$ unavailable. At the first glance, this is attractive because the mean OTE profit is $\$195.03$ and the difference $\$195.03 - \$99.98 = \$95.05$ is positive. However, "on the way back to a next $t_b$" additional $\$99.98$ has to be subtracted creating the mean loss $\$95.05 - \$99.98 = -\$4.93$. At the same time, taking the OTE mean profit always at a corresponding price ignores less frequent but more profitable offers. In addition, not all OTE for a given $FC$ reach the mean OTE profit. These factors will negatively influence on the mean $PL$ results. For one interested in details, Table \ref{TblOTEProfits} contains empirical profits and corresponding mass and cumulative frequencies for the curve $FC=\$49.99$ on Figure \ref{FigOTEProfitECDF}.

\begin{center}
\begin{longtable}{|r|r|r|r|r|r|}
\caption[ESZ17 OTE Sample Profits and Empirical Frequencies]{ESZ17 OTE EPMF, ECDF, $C=\$4.68$, $FC=\$49.99$.} \label{TblOTEProfits} \\
 \hline
 \multicolumn{1}{|c}{$i$} &
 \multicolumn{1}{|c}{Profit, \$} &
 \multicolumn{1}{|c}{$N_i$} &
 \multicolumn{1}{|c}{$\sum_{j=1}^{j=i} N_j$} &
 \multicolumn{1}{|c}{$\frac{N_i}{\sum_{j=1}^{j=66} N_j}$} &
 \multicolumn{1}{|c|}{$\frac{\sum_{j=1}^{j=i} N_j}{\sum_{j=1}^{j=66} N_j}$} \\
 \hline 
 \endfirsthead
 \multicolumn{6}{c}%
 {\tablename\ \thetable{} -- continued from previous page} \\
 \hline
 \multicolumn{1}{|c}{$i$} &
 \multicolumn{1}{|c}{Profit, \$} &
 \multicolumn{1}{|c}{$N_i$} &
 \multicolumn{1}{|c}{$\sum_{j=1}^{j=i} N_j$} &
 \multicolumn{1}{|c}{$\frac{N_i}{\sum_{j=1}^{j=66} N_j}$} &
 \multicolumn{1}{|c|}{$\frac{\sum_{j=1}^{j=i} N_j}{\sum_{j=1}^{j=66} N_j}$} \\
 \hline 
 \endhead
 \hline \multicolumn{6}{|r|}{{Continued on next page}} \\ \hline
 \endfoot
 \hline
 \endlastfoot
1 & 90.64 & 439 & 439 & 0.124 & 0.124\\
2 & 103.14 & 325 & 764 & 0.092 & 0.216\\
3 & 115.64 & 297 & 1061 & 0.084 & 0.300\\
4 & 128.14 & 263 & 1324 & 0.074 & 0.374\\
5 & 140.64 & 239 & 1563 & 0.067 & 0.441\\
6 & 153.14 & 229 & 1792 & 0.065 & 0.506\\
7 & 165.64 & 197 & 1989 & 0.056 & 0.562\\
8 & 178.14 & 147 & 2136 & 0.042 & 0.603\\
9 & 190.64 & 165 & 2301 & 0.047 & 0.650\\
10 & 203.14 & 134 & 2435 & 0.038 & 0.687\\
11 & 215.64 & 112 & 2547 & 0.032 & 0.719\\
12 & 228.14 & 98 & 2645 & 0.028 & 0.747\\
13 & 240.64 & 107 & 2752 & 0.030 & 0.777\\
14 & 253.14 & 67 & 2819 & 0.019 & 0.796\\
15 & 265.64 & 86 & 2905 & 0.024 & 0.820\\
16 & 278.14 & 61 & 2966 & 0.017 & 0.837\\
17 & 290.64 & 63 & 3029 & 0.018 & 0.855\\
18 & 303.14 & 53 & 3082 & 0.015 & 0.870\\
19 & 315.64 & 52 & 3134 & 0.015 & 0.885\\
20 & 328.14 & 36 & 3170 & 0.010 & 0.895\\
21 & 340.64 & 40 & 3210 & 0.011 & 0.906\\
22 & 353.14 & 35 & 3245 & 0.010 & 0.916\\
23 & 365.64 & 30 & 3275 & 0.008 & 0.925\\
24 & 378.14 & 26 & 3301 & 0.007 & 0.932\\
25 & 390.64 & 18 & 3319 & 0.005 & 0.937\\
26 & 403.14 & 27 & 3346 & 0.008 & 0.945\\
27 & 415.64 & 19 & 3365 & 0.005 & 0.950\\
28 & 428.14 & 16 & 3381 & 0.005 & 0.955\\
29 & 440.64 & 19 & 3400 & 0.005 & 0.960\\
30 & 453.14 & 10 & 3410 & 0.003 & 0.963\\
31 & 465.64 & 18 & 3428 & 0.005 & 0.968\\
32 & 478.14 & 9 & 3437 & 0.003 & 0.970\\
33 & 490.64 & 11 & 3448 & 0.003 & 0.973\\
34 & 503.14 & 9 & 3457 & 0.003 & 0.976\\
35 & 515.64 & 10 & 3467 & 0.003 & 0.979\\
36 & 528.14 & 6 & 3473 & 0.002 & 0.981\\
37 & 540.64 & 9 & 3482 & 0.003 & 0.983\\
38 & 553.14 & 7 & 3489 & 0.002 & 0.985\\
39 & 565.64 & 3 & 3492 & 0.001 & 0.986\\
40 & 578.14 & 4 & 3496 & 0.001 & 0.987\\
41 & 590.64 & 3 & 3499 & 0.001 & 0.988\\
42 & 603.14 & 1 & 3500 & 0.000 & 0.988\\
43 & 615.64 & 3 & 3503 & 0.001 & 0.989\\
44 & 628.14 & 1 & 3504 & 0.000 & 0.989\\
45 & 640.64 & 3 & 3507 & 0.001 & 0.990\\
46 & 653.14 & 5 & 3512 & 0.001 & 0.992\\
47 & 665.64 & 3 & 3515 & 0.001 & 0.992\\
48 & 678.14 & 3 & 3518 & 0.001 & 0.993\\
49 & 690.64 & 1 & 3519 & 0.000 & 0.994\\
50 & 715.64 & 2 & 3521 & 0.001 & 0.994\\
51 & 728.14 & 2 & 3523 & 0.001 & 0.995\\
52 & 740.64 & 3 & 3526 & 0.001 & 0.995\\
53 & 765.64 & 2 & 3528 & 0.001 & 0.996\\
54 & 778.14 & 1 & 3529 & 0.000 & 0.996\\
55 & 803.14 & 1 & 3530 & 0.000 & 0.997\\
56 & 815.64 & 1 & 3531 & 0.000 & 0.997\\
57 & 840.64 & 1 & 3532 & 0.000 & 0.997\\
58 & 853.14 & 2 & 3534 & 0.001 & 0.998\\
59 & 878.14 & 1 & 3535 & 0.000 & 0.998\\
60 & 903.14 & 1 & 3536 & 0.000 & 0.998\\
61 & 965.64 & 1 & 3537 & 0.000 & 0.999\\
62 & 1028.14 & 1 & 3538 & 0.000 & 0.999\\
63 & 1140.64 & 1 & 3539 & 0.000 & 0.999\\
64 & 1315.64 & 1 & 3540 & 0.000 & 0.999\\
65 & 1340.64 & 1 & 3541 & 0.000 & 1.000\\
66 & 1665.64 & 1 & 3542 & 0.000 & 1.000\\
\end{longtable}
\end{center}

We say an "empirical probability mass function", EPMF, of the OTE profits: with fixed $C < FC$ the profits are discrete due to discrete prices $P_i=N_i\delta$. "Permitted" profits are \cite[p. 93 Formulas 55]{salov2013}: $PL_{min}^{OTE}=k\delta (\lfloor \frac{2FC}{k\delta}\rfloor +1)-2C$; $PL_{i}^{OTE}=PL_{min}^{OTE} + k\delta i$, $i=0, 1,\dots$. Indeed, Equation \ref{EqPL} is for $\boldsymbol{U}$ with $W_0=0$ but not only $\boldsymbol{U} \in \mathfrak{U}$ with $W_n=0$. If $W_n \ne 0$, then the position is marked to the market using $P_n$. This allows to rewrite Equation \ref{EqPL} for growing $j$
\begin{equation}
\label{EqPLDiscrete}
PL_j=k\delta\sum_{i=1}^{i=j}U_i(N_j - N_i)-\sum_{i=1}^{i=j}C_i|U_i|-C_n|\sum_{i=1}^{i=j}U_i|, \; j=1,\dots,n.
\end{equation}
For $n=j=i=1$, it returns $-2C_1|U_1|$ and for $U_1\ne 0$ the fees are paid at taking the position and due to marking-to-market.
For an MPS0 OTE $[t_s,t_e]$, $U_i=0$ for $s < i < e$, $|U_{i=s}|=|U_{i=e}|=W$, $U_s=-U_e$ and
\begin{equation}
\label{EqPLOTE}
PL_{s,e}^{OTE}=k\delta W |N_e - N_s|-W(C_s + C_e).
\end{equation}
With $C_s=C_e=C<FC$, $W=1$, and $|N_e-N_s|\ge\lfloor \frac{2FC}{k\delta} \rfloor + 1$, we always can express $|N_e-N_s|=\lfloor \frac{2FC}{k\delta} \rfloor + 1 + i$ and come to the discrete $PL_i^{OTE}$.

\begin{center}
\begin{longtable}{|r|r|r|r|r|r|r|r|r|r|}
\caption[Guessed Counts]{ESZ17 BOTE and SOTE Profit Sample Statistics, $C=\$4.68$, $FC=\$49.99$.} \label{TblBSOTEProfitStatistics} \\
 \hline
 \multicolumn{1}{|c}{Type} &
 \multicolumn{1}{|c}{$N$} &
 \multicolumn{1}{|c}{Mean} &
 \multicolumn{1}{|c}{Min} &
 \multicolumn{1}{|c}{$N_{Min}$} &
 \multicolumn{1}{|c}{Max} &
 \multicolumn{1}{|c}{$N_{Max}$} &
 \multicolumn{1}{|c}{StDev} &
 \multicolumn{1}{|c}{Skew.} &
 \multicolumn{1}{|c|}{E-Kurt.} \\
 \hline 
 \endfirsthead
 \multicolumn{10}{c}%
 {\tablename\ \thetable{} -- continued from previous page} \\
 \hline
 \multicolumn{1}{|c}{Type} &
 \multicolumn{1}{|c}{$N$} &
 \multicolumn{1}{|c}{Mean} &
 \multicolumn{1}{|c}{Min} &
 \multicolumn{1}{|c}{$N_{Min}$} &
 \multicolumn{1}{|c}{Max} &
 \multicolumn{1}{|c}{$N_{Max}$} &
 \multicolumn{1}{|c}{StDev} &
 \multicolumn{1}{|c}{Skew.} &
 \multicolumn{1}{|c|}{E-Kurt.} \\
 \hline 
 \endhead
 \hline \multicolumn{10}{|r|}{{Continued on next page}} \\ \hline
 \endfoot
 \hline
 \endlastfoot
BOTE & 1786 & 199.3 & 90.64 & 221 & 1340.64 & 1 & 120.5 & 2.52 & 11.1\\
SOTE & 1756 & 190.68 & 90.64 & 218 & 1665.64 & 1 & 119.2 & 3.26 & 22.1\\
BOTH & 3542 & 195.03 & 90.64 & 439 & 1665.64 & 1 & 119.9 & 2.88 & 16.3\\
\end{longtable}
\end{center}

Figure \ref{FigBSOTEProfitECDF} illustrates BOTE and SOTE ECDFs. Deviations are small. Both are close to ECDF on Figure \ref{FigOTEProfitECDF} for $FC=\$49.99$. See statistics in Table \ref{TblBSOTEProfitStatistics}.
\begin{figure}[!h]
  \centering
  \includegraphics[width=125mm]{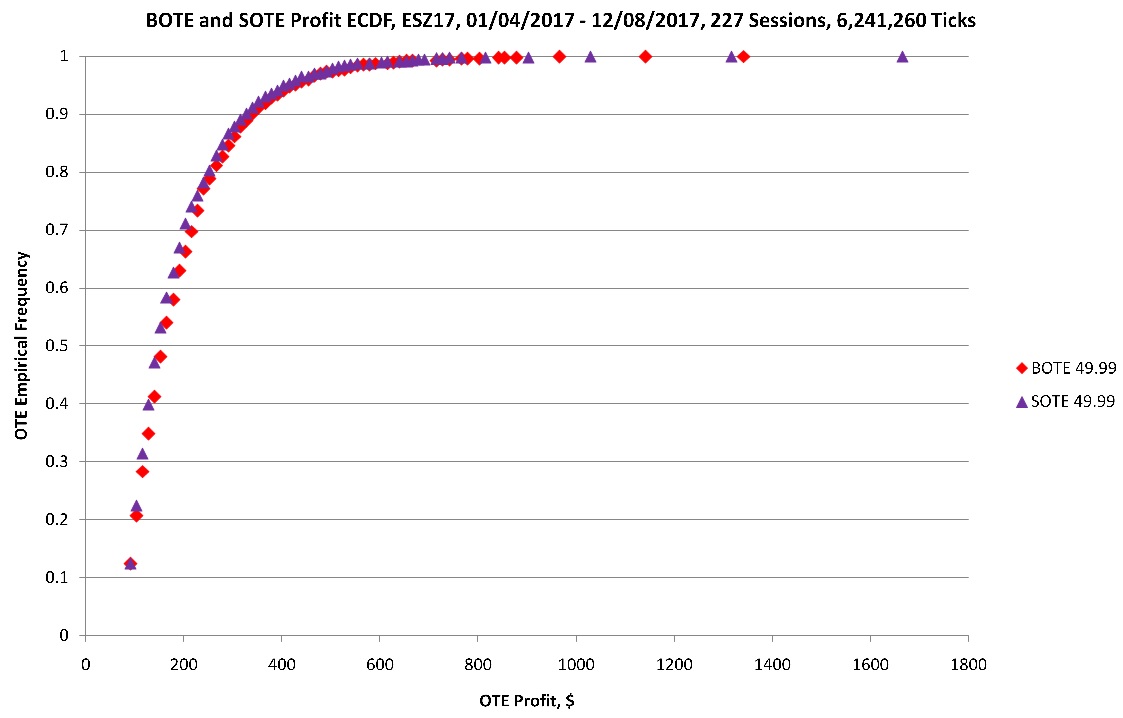}
  \caption[FigBSOTEProfitECDF]
   {Time \& Sales Globex, \url{http://www.cmegroup.com/}, ESZ17, Wednesday January 4, 2017 - Friday December 8, 2017, 227 trading sessions with the time range 17:00:00 (previous day) - 15:15:00 (closing day), 6,241,260 transaction ticks, $C = \$4.68$. MPS0 BOTE and SOTE ECDF for $FC = \$49.99$. Plotted using custom C++, Python, AWK programs, and Microsoft Excel.}
  \label{FigBSOTEProfitECDF}
\end{figure}
The author notices that after identical scaling of Figure \ref{FigOTEProfitECDF} and Figure 35 from \cite[p. 95]{salov2013}, the corresponding ECDF curves almost coincide. \textit{This indicates that empirical statistics of the OTE profits observed for ESZ17 in 227 sessions in 2017 and ESZ13 in 184 sessions in 2013 are close.}

\section{Why do speculative markets exist?}

A mortgage-backed security value may depend on dynamics of 360 monthly interest rates \cite[pp. 199 - 209]{davidson1994}, \cite[pp. 705 - 709, Exhibit 1]{fabozzi1995}, while the majority of them is well correlated \cite[pp. 93 - 107, 3.2 Principal Component Analysis]{golub2000}, PCA. Theoretical throwing between an increasing number of random factors in a model and further attempts to reduce the dimension of its space by means of PCA resemble incessant fluctuations of prices laughing at the idea of market equilibrium. Significant and repeated price fluctuations define market opportunities. MPS is their objective measure \cite[pp. 6 - 7]{salov2013}.

\textit{One may like, hate, or ignore speculation but whether the price moves are random, chaotic, trendy, or not, substantial and recurrent market offers, which can be objectively accounted for, applying MPS, are an essential condition of the existence of speculative markets and the thriving interest in them. Emphasizing this quantitative role of MPS, the author understands that we may talk about an essential but insufficient condition of existence and that free markets and relation to them assume a certain economic, political, and cultural society basis.}

MPS and OTE express market states in terms of positions, actions, profits understood by both traders and academicians. MPS explains trader's aspiration. The "universe" of strategies is measured by the "galactic" number $(2W+1)^{n-1}$ routinely reaching $3^{134908}$. \textit{Importance of MPS for the economy is in combining transactions costs, prices, and actions for measuring market offers up to this day.} For instance, it would be interesting to see more evidences, if the ECDFs of OTE profits, expressed in one currency, Figures \ref{FigOTEProfitECDF}, \ref{FigBSOTEProfitECDF}, persist in years.

Steven Strahler cites \cite{strahler2003} Leo Melamed, the legendary founder of the financial futures and Globex, about exchanges and electronic technology behind: \textit{"We have not yet gone into the galaxy, but we're thinking about it"}. After 11 years, Melamed, congratulating the Futures Magazine with the 500th issue, confirms \cite{melamed2014}: \textit{"... the future of futures markets is limited only by our own imagination"}. The \textit{scale} of the MPS is suitable for this journey.

\paragraph{Acknowledgments.}  I am grateful to Oscar Sheynin for sharing \cite{sheyninUnpublished} and valuable opinions on the history of probability theory and statistics.

\bigskip

\noindent\textbf{Valerii Salov} received his M.S. from the Moscow State University, Department of Chemistry in 1982 and his Ph.D. from the Academy of Sciences of the USSR, Vernadski Institute of Geochemistry and Analytical Chemistry in 1987.  He is the author of the articles on analytical, computational, and physical chemistry, the book Modeling Maximum Trading Profits with C++, \textit{John Wiley and Sons, Inc., Hoboken, New Jersey}, 2007, and papers in \textit{Futures Magazine} and \textit{ArXiv}.

\noindent\textit{v7f5a7@comcast.net}

\end{document}